\documentclass[12pt]{amsart}

\usepackage[matrix,arrow,curve,frame]{xy}
\usepackage{amsmath,amsthm,amssymb,enumerate}
\usepackage{latexsym}
\usepackage{amscd}
\usepackage{xcolor}
\usepackage[backref]{hyperref}
\usepackage{euscript}
\usepackage{mathrsfs}
\usepackage{tikz}
\usepackage{tikz-cd}
\usepackage{comment}

\usepackage{fullpage}

\newtheorem{theorem}{Theorem}[section]
\newtheorem{lemma}[theorem]{Lemma}
\newtheorem{corollary}[theorem]{Corollary}
\newtheorem{proposition}[theorem]{Proposition}

\theoremstyle{definition}
\newtheorem{definition}[theorem]{Definition}

\newtheorem{remark}[theorem]{Remark}

\numberwithin{equation}{section}
\newcommand{\beq}{\begin{equation}}
\newcommand{\eeq}{\end{equation}}

\newcommand{\CC}{\mathbb{C}}

\newcommand{\sU}{{\sf U}}
\newcommand{\sV}{{\sf V}}

\DeclareMathOperator{\Tr}{Tr}

\newcommand {\be}{\begin{equation}}
\newcommand {\ee}{\end{equation}}
\newcommand{\h}{\begin{eqnarray*}}
\newcommand{\e}{\end{eqnarray*}}

\newtheorem*{theoremA}{Theorem A}

\newtheorem*{theoremB}{Theorem B}

\newtheorem*{theoremC}{Theorem C}

\begin{document}

\title[Brane space localization and genus-0 T-duality]
{Brane space localization and genus-0 T-duality with $H$-flux}

\author{Fei Han}
\address{Department of Mathematics,
National University of Singapore, Singapore 119076}
\email{mathanf@nus.edu.sg}

 \author{Varghese Mathai}
\address{School of Mathematical Sciences,
Adelaide University, Adelaide 5005, Australia}
\email{mathai.varghese@adelaide.edu.au}

\subjclass[2010]{Primary 55N91, Secondary 58D15, 58A12, 81T30, 55N20}
\keywords{T-duality with $H$-flux, sphere mapping spaces, gerbe holonomy,
exotic twisted equivariant cohomology, brane space localization,
twisted brane Bismut-Chern character, graded Hori morphisms}
\date{}

\begin{abstract}
We construct a geometric theory on the brane space
$S^2M=C^\infty(S^2,M)$ associated to a gerbe with connection on $M$.
Normalized logarithmic holonomy around the latitude circles defines an
average holonomy line bundle on the brane space, whose connection and averaged flux give an
equivariantly flat superconnection.  We prove that the completed periodic exotic twisted cohomology defined by this superconnection localizes to the twisted cohomology of $M$.
This {\bf brane space localization} happens in two stages: circle localization lands on the free path space of $M$, and contracting the paths, after a correction of the twisting form, brings us back to $M$.  

Parallel transport along the meridians lets us average gerbe module connections, after which the circular iterated-integral construction applies.  The resulting {\bf twisted
brane Bismut-Chern forms} restrict to the twisted Chern forms on $M$;
their variation is described by {\bf twisted brane Bismut-Chern-Simons forms.}

For a T-dual pair, transporting the classical Hori morphisms through localization yields {\bf graded genus-0 T-duality} on brane cohomology. The composite of each morphism with its dual is minus the Euler operator in the level variable, so these morphisms are isomorphisms at every nonzero level.  At level
one these morphisms are compatible with the brane character and
twisted $K$-theory T-duality.
\end{abstract}

\maketitle
\enlargethispage{2pt}

\tableofcontents

\section*{Introduction}
\subsection{Background}
T-duality is one of the more striking phenomena of string theory: it identifies theories whose target spaces can be geometrically quite different.  For a sigma model whose
target has a circle symmetry, it exchanges momentum around the circle with
winding along it and replaces a circle of radius $R$ by one of radius $R^{-1}$. T-duality for $N=(2,2)$ sigma models with H-flux and its relation to generalized Kähler geometry is studied in \cite{LRRUZ}; see also \cite{KL, HLSUZ} for the associated topological sigma models. The realisation is straightforward when the target space has an abelian isometry that is also a symmetry of the worldsheet action. One finds the transformation on the worldsheet fields as well
as Buscher's rules \cite{Buscher} which yield the dual background. The duality transformation is
symmetric in the sense that one can start from either theory (the original or the dual one)
and obtain the other one via the same gauging process \cite{Alvarez2}. A relevant account of the topological sigma
model is in \cite{Witten}.

 In the presence of a background
$H$-flux, the metric, the circle fibration, and the Wess-Zumino term belong to a
single piece of geometry. The Buscher rules transform them together, and the
topology of the target may change under the duality
\cite{Buscher,Alvarez2, BEM04a}.

There is a precise topological formulation of this principle. Let
\[
 \pi:Z\longrightarrow X,
 \qquad
 \widehat\pi:\widehat Z\longrightarrow X
\]
be principal circle bundles, and let $H$ and $\widehat H$ be closed integral
three-forms on their total spaces.  They form a T-dual pair when fibre
integration exchanges flux and Chern class,
\[
 \pi_*[H]=c_1(\widehat Z),
 \qquad
 \widehat\pi_*[\widehat H]=c_1(Z),
\]
and the two fluxes agree on the correspondence space in integral cohomology.
If $p:Z\times_X\widehat Z\to Z$ and
$\widehat p:Z\times_X\widehat Z\to\widehat Z$ are the projections, suitable
connections determine a 2-form $\mathcal F$ for which
\begin{equation}
 p^*H-\widehat p^*\widehat H=d\mathcal F.
 \label{eq:introduction-T-duality-identity}
\end{equation}
The existence and involutivity of the dual pair follow from the Gysin sequence
\cite{BEM04a,BEM04b,BunkeSchick}.

The differential form content of
\eqref{eq:introduction-T-duality-identity} is the Hori transform
\begin{equation}
 \omega\longmapsto
 \widehat p_*\bigl(e^{\mathcal F}p^*\omega\bigr),
 \label{eq:introduction-Hori-transform}
\end{equation}
which reverses parity and identifies the $H$-twisted cohomology of $Z$ with the
$\widehat H$-twisted cohomology of $\widehat Z$.  This displays in one formula the
exchange of topology and flux, and it places Ramond-Ramond fields and their
T-duals in the appropriate twisted cohomology groups
\cite{BEM04a,BCMMS,BM00, MR05}. These works were inspired by \cite{Hori, SYZ}.

\medskip

The sigma model, however, asks for more.  If the worldsheet is a closed surface
$\Sigma$, its bosonic configuration space is $C^\infty(\Sigma,Z)$, not $Z$.
Does T-duality itself have an infinite-dimensional form on the space of fields?  The program developed in \cite{HM15,HM18,HM22,HM24,HM25} approaches this question geometrically.  The flux on the target is first transgressed into coefficient
data on the mapping space.  One then seeks an equivariantly flat
superconnection in the sense of \cite{Q, MQ}, whose cohomology is accessible to localization, and finally
asks whether the Hori transform survives at this new level.  The guiding
principle comes from the fixed point ideas of Witten, Atiyah, Bismut, and
Jones-Petrack \cite{A85,B85,JP}: an infinite-dimensional theory may retain
finite-dimensional content, provided its circle symmetry and its twisting
data are treated together.

The first instance of the program concerns the one-dimensional sigma model.
For the loop space $LZ=C^\infty(S^1,Z)$, a gerbe on $Z$ has a holonomy line
bundle on $LZ$.  Its connection combines with the averaged flux and the vector
field which rotates loops to form an equivariantly flat superconnection.  The
resulting completed periodic exotic twisted equivariant cohomology localizes
to the twisted cohomology of $Z$, because the fixed loops are precisely the
constant loops.  In this way the classical T-duality isomorphism acquires a
loop-space realization, together with a twisted Bismut-Chern character
\cite{HM15}.  The level refinement and the loop Hori formulae reveal the
familiar exchange of momentum and winding in a graded geometric form
\cite{HM18,HM22,HM25}.

The second instance concerns the genus-1 two-dimensional sigma model.  The
configuration space is the double loop space $C^\infty(T^2,Z)$.  The two
translations of the torus, together with its complex structure parameter
$\tau$, lead to an average $\tau$-holonomy line bundle and to a superconnection
in the direction $K_1+\tau K_2$.  Localization again brings the theory to the
target, and the graded Hori morphisms preserve the Jacobi structure carried by
the fields \cite{HM21,HM24}.  These constructions suggest a common question: how do the geometry
and symmetry of the source determine the coefficient data and the
localization of the associated mapping-space theory?

\medskip

In this paper we take up the question for the sphere, that is, in genus-0. The sphere has no second translation direction, so its fixed point geometry is different, and the passage from that geometry back to the target lies at the heart of the results below.

\subsection{Main results} 
\subsubsection{Brane spaces and localization}
For a smooth manifold $M$ we write
\[
 S^2M=C^\infty(S^2,M)
\]
and call it the {\bf brane space of $M$}.  A choice of polar axis gives an action of
$\mathbb T_{\rm rot}$ on $S^2M$ by rotation of the source sphere.  The orbits on
$S^2$ are the circles of latitude.  If a gerbe on $M$ has curvature $H$, we
take, on each overlap, the holonomy of its line-bundle data around every
latitude.  The logarithm is normalized at the south pole, averaged from the
south pole to the north pole with the normalized spherical measure, and then
exponentiated.  The resulting transition functions define a complex line
bundle
\[
 \mathcal L_H\longrightarrow S^2M,
\]
which we call the {\bf average latitude holonomy line bundle}.  Topologically $\mathcal L_H$ is trivial; the essential geometry sits in its distinguished connection.  If $K$ denotes the infinitesimal rotation and
$\overline H$ the averaged 3-form, then
\begin{equation}
 \left(\nabla^{\mathcal L_H}\right)^2=\iota_K\overline H.
 \label{eq:introduction-average-curvature}
\end{equation}
Consequently
\begin{equation}
 Q_H=\nabla^{\mathcal L_H}-u\iota_K+u^{-1}\overline H,
 \qquad |u|=2,
 \label{eq:introduction-brane-superconnection}
\end{equation}
satisfies
\[
 Q_H^2+u\mathcal L_K^{\mathcal L_H}=0.
\]
On rotation-invariant forms, $Q_H$ therefore squares to zero.  We write
\[
 h^\bullet_{\mathbb T_{\rm rot}}
 (S^2M,\nabla^{\mathcal L_H}:\overline H)
 =H^\bullet\!\left(
 \Omega^\bullet(S^2M,\mathcal L_H)^{\mathbb T_{\rm rot}}
 [[u,u^{-1}]],Q_H\right)
\]
and call this the \textbf{completed periodic exotic twisted equivariant
cohomology of the brane space}.  The superscript on the space of forms
indicates invariance under source rotation.  The notation
$[[u,u^{-1}]]$ denotes the product over ordinary form degrees at each
fixed total degree, with $|u|=2$, as specified in
Section~\ref{subsec:brane-periodic-completion}.  Thus forms of
arbitrarily high ordinary degree are allowed within one total degree.  The coefficient line, its connection and the averaged flux all enter this theory.  This is the
{\em passage from target-space gerbe geometry to brane space geometry}.

The localization of this theory contains the main new phenomena. A sphere
which is fixed by axial rotation need not be constant.  It is constant along
each latitude, but it may vary from the south pole to the north pole.  Thus
\begin{equation}
 (S^2M)^{\mathbb T_{\rm rot}}\cong PM,
 \qquad
 PM=C^\infty([-1,1],M),
 \label{eq:introduction-fixed-path-space}
\end{equation}
rather than $(S^2M)^{\mathbb T_{\rm rot}}=M$.  This apparently small difference changes the shape of the proof.  We first use the exotic twisted Borel-Witten localization theorem for
an equivariantly flat superconnection with a possibly nonflat coefficient line
bundle. Restriction then identifies the brane theory with a
twisted theory on $PM$.  The contraction
\[
 (r_s\gamma)(z)=\gamma(sz)
\]
retracts $PM$ onto the constant paths.  The path-averaged flux is not literally
the pullback of $H$ under midpoint evaluation, but the two differ by an exact
form.  The corresponding exponential gauge transformation supplies the final
identification with the twisted cohomology of $M$.

We call this two-step mechanism {\bf brane space localization}:
\begin{equation}
 S^2M
 \xrightarrow{\text{circle localization}}PM
 \xrightarrow{\text{path space contraction}}M.
 \label{eq:introduction-brane-localization}
\end{equation}
The additional path contraction carries the twisting data from the
fixed point space to the target.  In particular, the
nonzero homotopy components of $S^2M$ contain no fixed points and disappear
after localization.  The surviving component still carries the full gerbe
connection through $\mathcal L_H$ before it is brought back to the target.

To state the result, let $j:PM\to S^2M$ be the fixed point inclusion,
let $c:M\to PM$ include the constant paths, and let $e_0:PM\to M$
be midpoint evaluation.  Write $H_P=j^*\overline H$.  The path
contraction determines a two-form $\eta$ with
\[
 d\eta=H_P-e_0^*H,\qquad c^*\eta=0.
\]
Multiplication by $e^{-u^{-1}\eta}$ gives the gauge isomorphism
$\Psi$ from the complex twisted by $e_0^*H$ to the one twisted by
$H_P$.  These maps enter the localization theorem as follows.
\begin{theoremA}[{\bf Brane localization theorem}, Theorem \ref{thm:brane-to-target}]
The chosen circle action and normalized measure determine an isomorphism
\begin{equation}
 \begin{split}
 \operatorname{Loc}:
 h^\bullet_{\mathbb T_{\rm rot}}
 (S^2M,\nabla^{\mathcal L_H}:\overline H)
 \longrightarrow
 H^\bullet\Bigl(
 \Omega^\bullet(M)[[u,u^{-1}]],
 d+u^{-1}H
 \Bigr),
 \end{split}
 \label{eq:introduction-brane-to-target-isomorphism}
\end{equation}
given by
\begin{equation}
 \operatorname{Loc}
 =c^*\circ\Psi^{-1}\circ j^*.
 \label{eq:introduction-localization-map}
\end{equation}
Because $c^*\eta=0$, $\operatorname{Loc}$ is also induced by
restriction to constant spheres.  The factorization records the
geometric argument proving that this restriction is an isomorphism.

The construction is summarized by the commutative diagram
\begin{equation}
\begin{CD}
h^\bullet_{\mathbb T_{\rm rot}}(S^2M,\nabla^{\mathcal L_H}:\overline H)
@>{j^*}>>
H^\bullet(PM,d+u^{-1}H_P)\\
@V{\operatorname{Loc}}V{\cong}V
@V{c^*\circ\Psi^{-1}}V{\cong}V\\
H^\bullet(M,d+u^{-1}H)
@=
H^\bullet(M,d+u^{-1}H).
\end{CD}
 \label{eq:introduction-localization-diagram}
\end{equation}
\end{theoremA}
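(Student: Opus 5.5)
The proof factors as in the diagram \eqref{eq:introduction-localization-diagram}, with three independent steps.

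\emph{Step 1: circle localization.} First identify the fixed locus. A smooth map $\sigma:S^2\to M$ fixed by axial rotation is constant on each latitude, hence determined by its values along a meridian parametrized by height $z\in[-1,1]$. This gives $(S^2M)^{\mathbb T_{\rm rot}}\cong PM$, embedded by $j$. Smoothness at the poles requires the path to be even to all orders at $\pm1$ in the appropriate polar coordinate; I will either build this into the definition of $PM$ or check that it does not affect the contraction argument below.

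Next restrict the coefficient data. On the fixed locus $K$ vanishes. Each latitude holonomy of a fixed sphere is the holonomy of a constant loop, so it is trivial, and $j^*\mathcal L_H$ is canonically trivialized. The identity $(\nabla^{\mathcal L_H})^2=\iota_K\overline H$ then restricts to a flat connection, which the normalization at the south pole identifies with $d$. Hence $j^*Q_H=d+u^{-1}H_P$, and $j^*$ is a chain map into $(\Omega^\bullet(PM)[[u,u^{-1}]],d+u^{-1}H_P)$.

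I then invoke the exotic twisted Borel--Witten localization theorem, stated earlier for equivariantly flat superconnections with possibly nonflat coefficient line bundles, to conclude that $j^*$ is an isomorphism on completed periodic cohomology. The hypotheses to verify are three. First, there must be a rotation-invariant $1$-form $\theta$ with $\iota_K\theta$ invertible away from $PM$; I take the metric dual of $K$ for the $L^2$ metric built from the round metric on $S^2$ and a metric on $M$. Second, there must be an invariant tubular neighbourhood of $PM$. Third, the completion $[[u,u^{-1}]]$ must allow the inverse of $Q_H\theta$ to be expanded as a geometric series. This step is where I expect the main obstacle: the Banach/Fréchet analysis near the poles, where the rotation degenerates, and the fact that the components of $S^2M$ of nonzero homotopy class carry no fixed points at all. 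For those components I will argue directly that the cohomology vanishes, using the same $\theta$.

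\emph{Step 2: gauge correction.} The rescaling $r_s$, $s\in[0,1]$, gives a homotopy between $\mathrm{id}_{PM}$ and $c\circ e_0$. Define $\eta$ by the homotopy operator applied to $H_P$: $\eta=\int_0^1 r_s^*\iota_{\partial_s}(\cdot)\,ds$, appropriately arranged. Then $d\eta=H_P-e_0^*H$, and $c^*\eta=0$ because $r_s$ is stationary on constant paths. Since $\eta$ is even and $d$ is a derivation,
\[
e^{u^{-1}\eta}(d+u^{-1}H_P)e^{-u^{-1}\eta}=d+u^{-1}e_0^*H ,
\]
so $\Psi$ is a chain isomorphism.

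\emph{Step 3: contraction.} The maps $c^*$ and $e_0^*$ are mutually inverse on $(d+u^{-1}H)$-twisted cohomology. First, $e_0\circ c=\mathrm{id}$. Second, $c\circ e_0$ is homotopic to the identity through the maps $r_s$, and $e_0\circ r_s=e_0$. Hence the twisting form $e_0^*H$ is preserved along the homotopy, and the standard twisted homotopy operator $\int_0^1\iota_{\partial_s}r_s^*\,ds$ commutes past multiplication by $u^{-1}e_0^*H$ up to sign. It therefore gives a chain homotopy in the filtered, completed complex.

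Composing the three steps, $\operatorname{Loc}=c^*\Psi^{-1}j^*$ is an isomorphism. Finally, $c^*\Psi^{-1}=c^*$ because $c^*\eta=0$. Therefore $\operatorname{Loc}$ equals restriction to constant spheres, and the diagram commutes.
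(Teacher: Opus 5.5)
Your proposal follows the paper's proof essentially step for step: $j^*$ is an isomorphism by the general exotic Borel--Witten theorem (acyclicity off the fixed set via $K^\flat/\lVert K\rVert^2$, together with strong regularity); $\Psi$ is the gauge by $e^{-u^{-1}\eta}$ with $d\eta=H_P-e_0^*H$ and $c^*\eta=0$; and $c^*$ is a quasi-isomorphism because, once $e_0\circ r_s=e_0$, the interval homotopy operator anticommutes with $e_0^*H$, so that $\operatorname{Loc}=c^*j^*$. Two small repairs are needed: $j^*\nabla^{\mathcal L_H}=d$ holds because $K$ vanishes along $j(PM)$, whence $j^*A_\alpha=-j^*\iota_K\overline{B_\alpha}=0$ (flatness plus the south-pole normalization alone would not give this), and the invariant tubular neighbourhood you leave unchecked is obtained in the paper from the Riemannian center of mass of each orbit image, with $g_s$ given by scaling $\exp^{-1}_{p(y)}y$ and no difficulty at the poles; moreover, in the height variable $z$ no extra pole condition on paths arises, so $r_s$ preserves $PM$.
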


\subsubsection{Twisted brane Bismut-Chern character}

The localization theorem suggests a corresponding question about
characteristic forms.  In ordinary Chern-Weil theory, the curvature of
a vector bundle connection represents its Chern character.  In the
presence of a gerbe, one replaces vector bundles by gerbe modules and
combines their curvature with the local curving of the gerbe.  The
resulting twisted Chern character takes values in twisted cohomology
\cite{BCMMS,MS}. Let $\mathcal G$ be the chosen gerbe on $M$, with curvature $H$.
We write $K^\bullet(M,H)$ for complex $K$-theory twisted by the
integral Dixmier-Douady class of $\mathcal G$.  This integral class,
including its torsion, is understood as part of the notation.  An even class is represented by a pair
$(E,E')$ of gerbe modules.  Admissible connections on this pair give
a periodic twisted Chern form
$\operatorname{Ch}_H(\nabla^E,\nabla^{E'})$, obtained locally by
multiplying the traced difference of their curvature exponentials by
the curving exponential.  For a general twist the modules are Hilbert
bundles, and admissibility is the relative trace-class condition
recalled in Section \ref{subsec:character-review}.  This is exactly what is needed for the paired trace to make sense.  The resulting form satisfies
\[
 (d+u^{-1}H)\operatorname{Ch}_H(\nabla^E,\nabla^{E'})=0,
\]
and its cohomology class depends only on $[E,E']$.  Together with
suspension in odd degree, this defines the twisted Chern character
\[
 \operatorname{Ch}_H:K^\bullet(M,H)\longrightarrow
 H^\bullet\bigl(\Omega^\bullet(M)[[u,u^{-1}]],d+u^{-1}H\bigr).
\]

The loop-space refinement of this construction has its origin in
index theory.  For a closed even-dimensional spin manifold, Atiyah, developing an idea
of Witten, interpreted the index of the spin Dirac operator formally
as an integral of an equivariantly closed differential form over loop space.
Formal localization at the constant loops then gives the
Atiyah-Singer index formula \cite{A85}.  Bismut extended this picture
to a Dirac operator coupled to a vector bundle with connection
\cite{B85}.  The Bismut-Chern form supplies the bundle contribution
before localization.  It is constructed from iterated integrals of
curvature inserted into circular transport, and its
restriction to constant loops is the ordinary Chern form.  

In \cite{HM15} we constructed the corresponding twisted
Bismut-Chern character.  Let $LM=C^\infty(S^1,M)$, let $K_L$
generate loop rotation, and let
$\mathcal L_H^{\mathrm{loop}}\to LM$ be the holonomy line bundle
of the gerbe $\mathcal G$.  For $\theta\in S^1$, evaluation is the map
\[
 \operatorname{ev}_\theta:LM\longrightarrow M,
 \qquad \gamma\longmapsto\gamma(\theta).
\]
With normalized circle measure, put
\[
 \overline H_L=\int_{S^1}\operatorname{ev}_\theta^*H\,d\theta,
 \qquad
 D_H^L=\nabla^{\mathcal L_H^{\mathrm{loop}}}
       -u\iota_{K_L}+u^{-1}\overline H_L.
\]
The essential point is that the curvature of the holonomy connection
is tied to the averaged flux.  With our sign conventions, one has
\[
 \left(\nabla^{\mathcal L_H^{\mathrm{loop}}}\right)^2
 =\iota_{K_L}\overline H_L,
 \qquad
 (D_H^L)^2+u\mathcal L_{K_L}^{\mathcal L_H^{\mathrm{loop}}}=0.
\]
Thus $D_H^L$ is an equivariantly flat superconnection.  The flux term cancels the curvature, and this is what allows the holonomy line bundle, which is in general not flat, to serve as the coefficient system of a cohomology theory.
This is the key geometric feature of the exotic theory of \cite{HM15}: it retains the gerbe connection on loop space and at the
same time admits a fixed point localization theorem.

More precisely, the completed periodic exotic twisted equivariant
cohomology of loop space is
\[
h^\bullet_{\mathbb T_{\rm rot}}
 (LM,\nabla^{\mathcal L_H^{\mathrm{loop}}}:\overline H_L):=H^\bullet\!\left(
 \Omega^\bullet(LM,\mathcal L_H^{\mathrm{loop}})^{\mathbb T_{\rm rot}}
 [[u,u^{-1}]],D_H^L\right).
\]
Here the same circle notation and periodic completion are used as in
the brane theory.  The twisted Bismut-Chern form takes values in this
holonomy line bundle and is closed for $D_H^L$, so it defines a class
in the displayed group.  Over the constant loops, the line bundle has
a canonical trivialization and $D_H^L$ restricts to $d+u^{-1}H$.
Restriction therefore gives the localization map
$\operatorname{res}_L$.  Its compatibility with the characteristic
forms is expressed by the commutative diagram
\begin{equation}
 \begin{tikzcd}[column sep=large,row sep=large]
 K^\bullet(M,H)
 \arrow[r,"\operatorname{BCh}_H"]
 \arrow[dr,"\operatorname{Ch}_H"'] &
 h^\bullet_{\mathbb T_{\rm rot}}
 (LM,\nabla^{\mathcal L_H^{\mathrm{loop}}}:\overline H_L)
 \arrow[d,"\operatorname{res}_L","\cong"']\\
 &H^\bullet\bigl(\Omega^\bullet(M)[[u,u^{-1}]],d+u^{-1}H\bigr).
 \end{tikzcd}
 \label{eq:introduction-loop-character-diagram}
\end{equation}

In this paper, we seek a genus-0 counterpart of this diagram, with the brane space
$S^2M$ in place of $LM$.  The corresponding characteristic form will
be called the {\bf twisted brane Bismut-Chern character}.  Its
construction combines the parallel transport geometry of the loop
character with the average latitude holonomy line and the two-step
brane localization of Theorem~A.

At first sight, the latitude circles seem to give an immediate way of
extending this construction to $S^2M$. Each latitude defines a map
$r_t:S^2M\to LM$, so one can pull back the twisted Bismut-Chern form
along $r_t$.  These forms, however, have coefficients in the different
line bundles $r_t^*\mathcal L_H^{\mathrm{loop}}$.  Averaging these forms has no intrinsic meaning until the fibres have been compared. Moreover, the transition law we need on the brane space is an exponential average of logarithmic holonomies, and the construction must reproduce exactly this line bundle, together with its connection.

We resolve this difficulty by viewing a sphere in the other order.
Set
\[
 \mathscr P M=C^\infty([-\pi/2,\pi/2],M),
\]
the latitude path space with free endpoints.  For fixed longitude,
traversing the latitudes traces a meridian path.  Varying the longitude
then gives a loop of such paths, expressed by the map
\[
 \Gamma:S^2M\longrightarrow L(\mathscr P M),
 \qquad\Gamma(x)(\theta)(t)=x(c_t(\theta)).
\]
Parallel transport along the meridians brings the gerbe and module
connections to their fibres at the initial endpoint.  Their affine
averages define a gerbe with connections on $\mathscr P M$.
We prove that its loop holonomy line bundle pulls back under $\Gamma$ to
$\mathcal L_H$, preserving the distinguished connections.  The
twisted loop-space character can therefore be applied to this averaged
data and pulled back to the brane space.  It is this comparison of coefficient lines that makes the whole construction possible.

Let $\iota_M:M\to S^2M$ be the inclusion of constant sphere maps. Our second main result is the following.

\begin{theoremB}[{\bf Twisted brane Bismut-Chern character},
Theorem~\ref{thm:character-localization}]
There is a natural homomorphism
\begin{equation}
 \operatorname{BBCh}_H:K^\bullet(M,H)
 \longrightarrow h^\bullet_{\mathbb T_{\rm rot}}
 (S^2M,\nabla^{\mathcal L_H}:\overline H),
 \label{eq:introduction-brane-character}
\end{equation}
represented by explicit {\bf twisted brane Bismut-Chern forms}.  For an
admissible pair of gerbe modules with connection, these forms satisfy
\begin{equation}
 \begin{split}
 Q_H\operatorname{BBCh}_H(\nabla^E,\nabla^{E'})&=0,\\
 \iota_M^*\operatorname{BBCh}_H(\nabla^E,\nabla^{E'})
 &=\operatorname{Ch}_H(\nabla^E,\nabla^{E'}).
 \end{split}
 \label{eq:introduction-character-restriction}
\end{equation}
Changes of connections are governed by the {\bf twisted brane
Bismut-Chern-Simons form} of Definition \ref{def:character-brane-cs}.
On cohomology, one has
\begin{equation}
 \operatorname{Loc}\circ\operatorname{BBCh}_H
 =\operatorname{Ch}_H.
 \label{eq:introduction-character-localization}
\end{equation}
Equivalently, the following diagram commutes:
\begin{equation}
 \begin{tikzcd}[column sep=large,row sep=large]
 K^\bullet(M,H)
 \arrow[r,"\operatorname{BBCh}_H"]
 \arrow[dr,"\operatorname{Ch}_H"'] &
 h^\bullet_{\mathbb T_{\rm rot}}
 (S^2M,\nabla^{\mathcal L_H}:\overline H)
 \arrow[d,"\operatorname{Loc}","\cong"']\\
 &H^\bullet\bigl(\Omega^\bullet(M)[[u,u^{-1}]],d+u^{-1}H\bigr).
 \end{tikzcd}
 \label{eq:introduction-character-diagram}
\end{equation}
\end{theoremB}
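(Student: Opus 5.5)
The plan is to build $\operatorname{BBCh}_H$ by pulling back along $\Gamma$, as the introduction suggests, and then to reduce every assertion of Theorem~B to the known loop-space statements of \cite{HM15} together with Theorem~A.

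\emph{Step 1: the averaged gerbe on $\mathscr P M$.} Take a gerbe module $E$ with admissible connection $\nabla^E$, and the gerbe data (curvings $B_i$ and line bundle connections) on a good cover of $M$. Parallel transport along each meridian path $t\mapsto\gamma(t)$ brings these data to the fibre at the initial endpoint. Averaging them affinely in $t$ with the normalized measure produces a gerbe with connection $\overline{\mathcal G}$ on $\mathscr P M$, with curvature $\overline H_{\mathscr P}$, together with an averaged module connection $\overline\nabla^E$ on the pulled-back module $e_{-\pi/2}^*E$. I would check two things by direct computation. First, the affine average of gerbe connections is again a gerbe connection with the same transition functions, since these are parallel-transported along $t$ and hence constant in $t$ after transport. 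Second, $\overline\nabla^E$ remains a module connection for $\overline{\mathcal G}$, and admissibility survives averaging; the relative trace-class condition is convex, so this is preserved.

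\emph{Step 2: comparison of coefficient lines.} This is the step I expect to be the main obstacle. I must show that
\[
\Gamma^*\mathcal L^{\mathrm{loop}}_{\overline H_{\mathscr P}}\cong\mathcal L_H
\]
with matching connections, and that $\Gamma^*\overline{(\overline H_{\mathscr P})}_L=\overline H$. The strategy is to compare transition functions. The logarithmic loop holonomy of the averaged line data around $\Gamma(x)$ is the average over $t$ of the holonomies around the latitudes $c_t$. The normalization at the south pole corresponds exactly to transporting to the initial endpoint $t=-\pi/2$, and the leftover terms from transport are coboundaries that integrate out around closed loops. I would try to arrange a good cover of $\mathscr P M$ induced from one on $M$ via $e_{-\pi/2}$, so that the identification of cocycles becomes literal. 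The connection comparison then reduces to the first-variation formula for holonomy, applied latitudewise and averaged. Equivariance is immediate, because $\Gamma$ intertwines rotation of $S^2$ with loop rotation on $L(\mathscr P M)$; hence $\Gamma^*K_L=K$ and $\Gamma^*D^L=Q_H$.

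\emph{Step 3: definition and formal properties.} Define
\[
\operatorname{BBCh}_H(\nabla^E,\nabla^{E'})=\Gamma^*\operatorname{BCh}_{\overline H_{\mathscr P}}(\overline\nabla^E,\overline\nabla^{E'}).
\]
Closedness under $Q_H$ then follows from $D^L$-closedness in \cite{HM15} together with Step 2. The Bismut-Chern-Simons form is defined as the $\Gamma$-pullback of the loop Chern-Simons form of a path of averaged connections. This gives independence of choices, and standard arguments (direct sums, stabilization, suspension for odd degree) give a well-defined homomorphism on $K^\bullet(M,H)$.

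\emph{Step 4: restriction and localization.} For a constant sphere, $\Gamma(\iota_M(m))$ is the constant loop at the constant path at $m$. On such points, transport is trivial, so the averaged data restrict to the original data. The loop restriction formula of \cite{HM15} then gives $\iota_M^*\operatorname{BBCh}_H=\operatorname{Ch}_H$. Finally, Theorem~A states that $\operatorname{Loc}$ is induced by restriction to constant spheres (because $c^*\eta=0$). Therefore $\operatorname{Loc}\circ\operatorname{BBCh}_H=\operatorname{Ch}_H$ on cohomology, which is the commutativity of the diagram.
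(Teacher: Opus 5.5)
Your overall route matches the paper's: meridian transport and affine averaging on $\mathscr P M$, the holonomy comparison $\Gamma^*\mathscr L_\rho\cong\mathcal L_H$, pullback of the loop character of \cite{HM15}, and restriction to constant spheres using $c^*\eta=0$. The gap is the cover on which Step~1 is carried out. On a good cover $\{\sU_i\}$ of $M$, a meridian path $t\mapsto x(c_t(\theta))$ generally leaves every $\sU_i$. So the average $\int_{-\pi/2}^{\pi/2}e_t^*B_i\,\rho(t)\,dt$ and the transports $P_t^{E_i}$ and $P_t^{L_{ij}}$ are simply not defined. Carrying a gerbe module across a change of chart requires choosing points in the overlap lines $L_{ij}$. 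That non-canonical line is exactly the transgressed data you are trying to build.

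The cover $e_{-\pi/2}^{-1}(\sU_i)$ proposed in Step~2 controls only the initial point and has the same defect. Along $\Gamma$ it sees only the constant loop at $x(S)$, whose holonomy is $1$ rather than $g_{\alpha\beta}(x)$, so no literal cocycle comparison is possible there.

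What is needed is the brane Brylinski cover of Section~\ref{sec:brane-space}: open sets $\sU_\alpha$ with $h|_{\sU_\alpha}=0$ such that the $S^2\sU_\alpha$ cover $S^2M$. Its existence is a genuine step. The image $x(S^2)$ is compact of covering dimension at most two, so $\check H^3(x(S^2);\mathbb Z)=0$, and continuity of \v{C}ech cohomology gives a neighborhood on which $h$ vanishes. These sets and their overlaps are not contractible, so the gerbe must be encoded by curvings, overlap line bundles $L_{\alpha\beta}$ and products $\mu_{\alpha\beta\gamma}$, not by $U(1)$-valued cocycles. With this cover the $\mathscr P\sU_\alpha$ cover $\mathscr P M$, and the $L(\mathscr P\sU_\alpha)$ cover $L(\mathscr P M)$ (compose with a surjection $S^2\to\mathbb T\times I$). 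All transports and averages are then defined, and $\operatorname{hol}_{\nabla^{L_{\alpha\beta},\rho}}(\Gamma(x))=g_{\alpha\beta}(x)$ gives $\Gamma^*\sigma^\rho_\alpha=s_\alpha$ on the nose.

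There are two smaller points. First, convexity covers only the last step of the admissibility check. The connections $\nabla^E_t$ and $\nabla^{E'}_t$ are conjugated by different transports, so you must first show that $P^E_t-P^{E'}_t$ is trace class, smoothly in the trace norm. This follows from variation of parameters and the ideal property, as in the paper. Second, your Chern-Simons form (the loop transgression for the path $s\mapsto\overline\nabla_s$ of averaged connections) is a valid transgression, but it is not the form of Definition~\ref{def:character-brane-cs}. Averaging $\nabla_s+ds\,\partial_s$ on $M\times[0,1]$ and pulling back by $F_I$ adds the $ds$-component $\int_{-\pi/2}^{\pi/2}(P^E_t)^{-1}\partial_sP^E_t\,\rho(t)\,dt$, because meridian transport for $\nabla^E_s$ depends on $s$. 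Both forms restrict to $\operatorname{CS}_H(\nabla_\bullet)$ on constant spheres.
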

\begin{remark}Replacing the gerbe by its tensor powers gives the corresponding characters, and the diagrams above, at every integral flux level.
\end{remark}

\subsubsection{Genus-0 T-duality}
We turn now to T-duality.  Let $(Z,H)$ and
$(\widehat Z,\widehat H)$ be a T-dual pair of principal circle bundles.  We use
$\mathbb T$ and $\widehat{\mathbb T}$ for their structure circles and reserve
$\mathbb T_{\rm rot}$ for rotation of the source sphere.  The structure circles
act on the corresponding brane spaces by postcomposition, and these actions
commute with $\mathbb T_{\rm rot}$.  Put
\begin{equation}
 R=\mathbb C[u,u^{-1}],
 \qquad |u|=2.
 \label{eq:introduction-periodic-ring}
\end{equation}
The completed tensor products below are taken in ordinary form
degree at each fixed total degree.
For every $m\in\mathbb Z$, the level $m$ superconnections on the two brane
spaces are
\begin{equation}
 \begin{split}
 Q_{mH}&=\nabla^{\mathcal L_H^{\otimes m}}
 -u\iota_K+u^{-1}m\overline H,\\
 Q_{m\widehat H}&=\nabla^{\mathcal L_{\widehat H}^{\otimes m}}
 -u\iota_{\widehat K}+u^{-1}m\overline{\widehat H},
 \end{split}
 \label{eq:introduction-level-superconnections}
\end{equation}
where $K$ and $\widehat K$ are generated by the same rotation of the source.
On the finite dimensional targets we use the periodic twisted differentials
\begin{equation}
 D^u_{mH}=d+u^{-1}mH\wedge,
 \qquad
 D^u_{m\widehat H}=d+u^{-1}m\widehat H\wedge.
 \label{eq:introduction-periodic-differentials}
\end{equation}

Brane space localization gives, at each level, the two isomorphisms
\begin{equation}
 \operatorname{Loc}_{Z,m}:\quad
 H^\bullet\!\left(
 \Omega^\bullet(S^2Z,\mathcal L_H^{\otimes m})^{
 \mathbb T_{\rm rot}\times\mathbb T}\widehat\otimes R,Q_{mH}
 \right)\xrightarrow{\ \cong\ }
 H^\bullet\!\left(
 \Omega^\bullet(Z)^{\mathbb T}\widehat\otimes R,D^u_{mH}
 \right),
 \label{eq:introduction-level-localization}
\end{equation}
and the analogous map $\operatorname{Loc}_{\widehat Z,m}$ on the dual side.
Thus the two infinite-dimensional theories which enter the main theorem are
the graded spaces
\begin{equation}
 \begin{split}
 \mathcal E_{S^2}(Z,H)
 &=\bigoplus_{m\in\mathbb Z}
 H^\bullet\!\left(
 \Omega^\bullet(S^2Z,\mathcal L_H^{\otimes m})^{
 \mathbb T_{\rm rot}\times\mathbb T}\widehat\otimes R,Q_{mH}
 \right)y^m,\\
 \mathcal E_{S^2}(\widehat Z,\widehat H)
 &=\bigoplus_{m\in\mathbb Z}
 H^\bullet\!\left(
 \Omega^\bullet(S^2\widehat Z,
 \mathcal L_{\widehat H}^{\otimes m})^{
 \mathbb T_{\rm rot}\times\widehat{\mathbb T}}\widehat\otimes R,
 Q_{m\widehat H}
 \right)y^m.
 \end{split}
 \label{eq:introduction-graded-brane-theories}
\end{equation}

For comparison, the two finite dimensional graded periodic theories are
\begin{equation}
 \begin{split}
 \mathcal P(Z,H)
 &=\bigoplus_{m\in\mathbb Z}
 H^\bullet\!\left(
 \Omega^\bullet(Z)^{\mathbb T}\widehat\otimes R,D^u_{mH}
 \right)y^m,\\
 \mathcal P(\widehat Z,\widehat H)
 &=\bigoplus_{m\in\mathbb Z}
 H^\bullet\!\left(
 \Omega^\bullet(\widehat Z)^{\widehat{\mathbb T}}\widehat\otimes R,
 D^u_{m\widehat H}
 \right)y^m.
 \end{split}
 \label{eq:introduction-graded-periodic-theories}
\end{equation}
The variable $y$ records the tensor power of the average holonomy line bundle,
or equivalently the flux level.

Let $T_{m,u}$ be the periodic level $m$ Hori morphism on the target.  Localization gives the cohomological genus-0 Hori morphism by
\begin{equation}
 T_m^{S^2}
 =\operatorname{Loc}_{\widehat Z,m}^{-1}\circ
 T_{m,u}\circ\operatorname{Loc}_{Z,m}.
 \label{eq:introduction-level-brane-Hori}
\end{equation}
The maps $T_m^{S^2}$ assemble into the graded genus-0 Hori morphism
\begin{equation}
 \operatorname{GHor}_{S^2}:
 \mathcal E_{S^2}(Z,H)\longrightarrow
 \mathcal E_{S^2}(\widehat Z,\widehat H),
 \qquad
 \sum_m\alpha_my^m\longmapsto
 \sum_mT_m^{S^2}(\alpha_m)y^m.
 \label{eq:introduction-graded-brane-Hori}
\end{equation}
If $\operatorname{GHor}_u$ denotes the graded periodic Hori morphism on the
targets and
\[
 \operatorname{Loc}_Z=\bigoplus_m\operatorname{Loc}_{Z,m},
 \qquad
 \operatorname{Loc}_{\widehat Z}
 =\bigoplus_m\operatorname{Loc}_{\widehat Z,m},
\]
then the definition gives
\begin{equation}
 \operatorname{GHor}_{S^2}
 =\operatorname{Loc}_{\widehat Z}^{-1}\circ
 \operatorname{GHor}_u\circ\operatorname{Loc}_Z.
 \label{eq:introduction-graded-transport}
\end{equation}
The commutative diagram in the following theorem places this identity
in its geometric setting.  Its upper arrow is T-duality on the brane
spaces, and its lower arrow is the graded Hori morphism on the targets.
The vertical arrows are supplied by the genus-0 localization
theorem: restriction to the fixed path spaces is followed by path
contraction with its correction of the twisting form. In short, the Hori identities are imported from the targets, whereas the coefficient connections, the localization and the characteristic forms all live on the brane spaces themselves.

\begin{theoremC}[{\bf Graded genus-0 T-duality with $H$-flux}, Theorem~\ref{thm:graded-genus-zero-T-duality}]
The graded genus-0 Hori morphism and its dual are parity reversing and
satisfy
\begin{equation}
 \widehat{\operatorname{GHor}}_{S^2}\circ
 \operatorname{GHor}_{S^2}
 =-y\frac{\partial}{\partial y},
 \qquad
 \operatorname{GHor}_{S^2}\circ
 \widehat{\operatorname{GHor}}_{S^2}
 =-y\frac{\partial}{\partial y}.
 \label{eq:introduction-graded-Euler-relations}
\end{equation}
These morphisms fit into the commutative diagram
\begin{equation}
\begin{CD}
\mathcal E_{S^2}(Z,H)
@>{\operatorname{GHor}_{S^2}}>>
\mathcal E_{S^2}(\widehat Z,\widehat H)\\
@V{\operatorname{Loc}_Z}V{\cong}V
@V{\operatorname{Loc}_{\widehat Z}}V{\cong}V\\
\mathcal P(Z,H)
@>{\operatorname{GHor}_u}>>
\mathcal P(\widehat Z,\widehat H).
\end{CD}
 \label{eq:introduction-graded-brane-Hori-diagram}
\end{equation}
Consequently, the graded Hori morphism is an isomorphism on the sum
of the nonzero levels.  At level one it gives the parity-reversing
T-duality isomorphism on the exotic cohomology of the brane spaces.

\end{theoremC}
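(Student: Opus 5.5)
The statement is built so that most of the work is a transport of structure through the level-wise localization isomorphisms $\operatorname{Loc}_{Z,m}$ and $\operatorname{Loc}_{\widehat Z,m}$ of \eqref{eq:introduction-level-localization}. I would first check that these exist in the $\mathbb T$-invariant setting. Theorem~A is applied to the gerbe of curvature $mH$, whose average latitude holonomy line bundle is $\mathcal L_H^{\otimes m}$ with the tensor connection. The complex is then restricted to $\mathbb T$-invariant forms, where $\mathbb T$ acts on $S^2Z$ by postcomposition.

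Because the $\mathbb T$-action commutes with rotation of the source, it preserves $PZ$, the constant paths and the contraction $r_s$. The key point is the two-form $\eta$ of Theorem~A. I would choose it by the homotopy formula $\eta=\int_0^1 r_s^*\iota_{\partial_s}(\cdots)\,ds$ applied to the $\mathbb T$-invariant form $H$, which makes $\eta$ automatically $\mathbb T$-invariant. Then $j^*$, $\Psi$ and $c^*$ commute with the $\mathbb T$-action. The localization argument runs verbatim on the invariant subcomplexes: the circle-localization step and the chain homotopy for the path contraction are both built from $\mathbb T$-equivariant operators. This gives the isomorphisms $\operatorname{Loc}_{Z,m}$, and likewise $\operatorname{Loc}_{\widehat Z,m}$.

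With the isomorphisms in hand, the commutative diagram \eqref{eq:introduction-graded-brane-Hori-diagram} holds by definition, as in \eqref{eq:introduction-graded-transport}. Parity reversal also follows: $\operatorname{Loc}$ preserves total degree, and $T_{m,u}$ is parity reversing, since fibre integration over the circle lowers degree by one and $e^{u^{-1}m\mathcal F}$ has even total degree.

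For the Euler relations I would invoke the finite-dimensional identity
\[
\widehat T_{m,u}\circ T_{m,u}=-m,\qquad T_{m,u}\circ\widehat T_{m,u}=-m .
\]
This should be the periodic, level-$m$ version of the graded Hori identities established in \cite{HM18,HM22,HM25}, which I treat as available from the earlier sections. If it has to be re-derived, the argument is short. One computes
\[
\widehat T\,T\,\omega=p_*\bigl(e^{-u^{-1}m\mathcal F}\,\widehat p^{\,*}\widehat p_*(e^{u^{-1}m\mathcal F}p^*\omega)\bigr)
\]
on the correspondence space, working with $\mathbb T$-invariant forms of the shape $\omega=a+\theta\wedge b$ and writing $\mathcal F=\theta\wedge\widehat\theta$. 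Only the linear term in $\mathcal F$ survives the double fibre integration. This produces multiplication by $-m$, up to a factor $u^{\pm1}$ that is absorbed by the grading conventions.

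Conjugating by $\operatorname{Loc}$ then gives $\widehat T^{S^2}_m T^{S^2}_m=-m$ on level $m$. Summing over $m$, the operator acting as $-m$ on the $y^m$ component is precisely $-y\,\partial/\partial y$, which proves \eqref{eq:introduction-graded-Euler-relations}. At every $m\neq0$, multiplication by $-m$ is invertible. Hence $\operatorname{GHor}_{S^2}$ and its dual are mutually inverse up to the scalar $-1/m$, and so are isomorphisms on the sum of the nonzero levels. At $m=1$ we get the parity-reversing T-duality isomorphism.

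The main obstacle is the first step: making sure that $\mathbb T$-invariance survives the whole of brane space localization. The delicate part is the circle localization on the $\mathbb T_{\rm rot}$-invariant complex, which uses a cutoff or averaging argument that must be chosen $\mathbb T$-equivariantly. I would handle this by averaging all auxiliary choices over the compact group $\mathbb T$. Averaging commutes with $Q_{mH}$ because $\mathbb T$ preserves $\mathcal L_H^{\otimes m}$, its connection, $\overline H$ and $K$.
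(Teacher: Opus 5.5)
Your proposal is correct and follows essentially the paper's route: $\mathbb T$-invariant level-$m$ localization with invariantly chosen auxiliary data (Theorem~\ref{thm:invariant-level-m-localization}), then $T^{S^2}_m$ defined by conjugation so that the diagram commutes by construction, and finally the transported identity $\widehat T_{m,u}T_{m,u}=-m\operatorname{id}$, read as $-y\,\partial/\partial y$ level by level, with inverse $-m^{-1}\widehat T^{S^2}_m$ for $m\neq0$. The one loose point is in your optional re-derivation: the map with kernel $e^{u^{-1}m\mathcal F}$ gives $\widehat T\,T=-u^{-1}m$, and the factor you say is ``absorbed by grading conventions'' is exactly the paper's rescaling $T_{m,u}=c_{u,m\widehat H}\circ T_m\circ c_{u,mH}^{-1}$, which agrees with your map on odd forms and is $u$ times it on even forms, so the composite is exactly $-m$.
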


The characteristic forms are compatible with this duality. Let
\[
 T_K:K^j(Z,H)\longrightarrow K^{j+1}(\widehat Z,\widehat H),
 \qquad j\in\mathbb Z/2\mathbb Z,
\]
be the classical twisted $K$-theory T-duality isomorphism
\cite{BEM04a,BunkeSchick}.  It is obtained by pullback to the
correspondence space, the chosen identification of twists there,
and the oriented circle pushforward.
Averaging the brane forms over the structure circles gives invariant
character classes, denoted by $\operatorname{BBCh}^{\mathbb T}_H$
and $\operatorname{BBCh}^{\widehat{\mathbb T}}_{\widehat H}$.
Theorem \ref{thm:character-t-duality} proves that
\begin{equation}
 T^{S^2}_1\operatorname{BBCh}^{\mathbb T}_H
 =\operatorname{BBCh}^{\widehat{\mathbb T}}_{\widehat H}T_K.
 \label{eq:introduction-character-t-duality}
\end{equation}

Together with the restriction theorem, this gives a commutative diagram
relating all three theories.  We write
$h^j_{\mathbb T_{\rm rot}}(S^2Z,Q_H)^{\mathbb T}$ and
$H^j(Z,D^u_H)^{\mathbb T}$ for the two cohomology groups in
\eqref{eq:introduction-level-localization} at $m=1$, and use the
corresponding notation on the dual side.  The superscripts indicate
that the complexes are formed from differential forms invariant under
the structure circles.  All four groups use the form-degree completion
over $R$, and the equivariant parameter $u$ continues to refer
to rotation of the source.  With this notation, the diagram is
\begingroup
\small
\begin{equation}
\begin{tikzcd}[column sep=2.4em,row sep=3.5em]
 K^j(Z,H)
 \arrow[r,"T_K","\cong"']
 \arrow[d,"\operatorname{BBCh}^{\mathbb T}_H"']
 \arrow[dd,out=180,in=180,looseness=2.05,
        "\operatorname{Ch}_H"']
 &K^{j+1}(\widehat Z,\widehat H)
 \arrow[d,"\operatorname{BBCh}^{\widehat{\mathbb T}}_{\widehat H}"]
 \arrow[dd,out=0,in=0,looseness=2.05,
        "\operatorname{Ch}_{\widehat H}"]\\
 h^j_{\mathbb T_{\rm rot}}(S^2Z,Q_H)^{\mathbb T}
 \arrow[r,"T^{S^2}_1","\cong"']
 \arrow[d,"\operatorname{Loc}_{Z,1}"',"\cong"]
 &h^{j+1}_{\mathbb T_{\rm rot}}
 (S^2\widehat Z,Q_{\widehat H})^{\widehat{\mathbb T}}
 \arrow[d,"\operatorname{Loc}_{\widehat Z,1}","\cong"']\\
 H^j(Z,D^u_H)^{\mathbb T}
 \arrow[r,"T_{1,u}","\cong"']
 &H^{j+1}(\widehat Z,D^u_{\widehat H})^{\widehat{\mathbb T}}.
\end{tikzcd}
 \label{eq:introduction-enhanced-character-duality}
\end{equation}
\endgroup
The outer curved arrows are the twisted Chern characters on the
targets.  Each factors through the characteristic forms on the brane
space and the localization isomorphism.  The upper square expresses
compatibility with twisted $K$-theory T-duality, while the lower square
is the level-1 Hori construction.

\medskip
The space $S^2M$ has another mathematical life.  In algebraic topology, the
study of operations on sphere mapping spaces is known as {\em higher string
topology or brane topology}.  Brane products, coproducts, Thom-spectrum models,
and actions of higher little-disks operads have been constructed in this
setting \cite{CV,Hu,GTZbrane,Wakatsuki}.  Higher Hochschild methods also give
models for mapping spaces and for surface products across all genera
\cite{GTZsurface}. We come to the same sphere brane from a different direction.  Its basic objects are differential forms with gerbe
twisted coefficients, its symmetry is the axial circle action, and its central
operation is localization rather than intersection and gluing. The proximity
of the two theories is nevertheless suggestive. It is natural to ask whether
the average holonomy line bundle admits multiplicative refinements compatible
with brane products, whether the superconnection has a higher Hochschild
model, and whether genus-0 Hori morphisms intertwine the resulting brane
operations.  These questions would connect T-duality to the algebraic
structures already present on sphere spaces.

\medskip

The genus-0 and genus-1 theories also point towards the higher-genus
problem.  They show that the correct infinite-dimensional theory is sensitive
not merely to the dimension of the source, but to its symmetry and its
decomposition.  A sphere has an axial circle action whose quotient is an
interval.  A torus has translations and a complex modulus.  A surface of
genus greater than one has no comparable connected symmetry in general, so a
single fixed point argument cannot be expected to suffice.  One is led instead
to seek a theory assembled from cutting and gluing, compatible with pants
decompositions, mapping class group actions, and the higher Hochschild or
factorization structures of surface mapping spaces.  The genus-0
construction supplies a new local model for such a theory: circle localization
may end on a mapping space with boundary data, and a further geometric
contraction may still recover the target.  Understanding how these local models behave under sewing is, in our view, the natural next step in the program.

\subsection*{Organization of the paper}

Section \ref{sec:brane-space} constructs the average latitude holonomy line
bundle $\mathcal L_H\to S^2M$ from gerbe data on $M$.  We calculate its
connection and curvature, introduce the averaged flux, and prove the
equivariant flatness of the superconnection $Q_H$.  This gives the completed
periodic exotic twisted cohomology of the brane space.

Section \ref{sec:brane-localization} develops the localization argument needed
for this theory.  We first recast the exotic twisted Borel-Witten fixed point
theorem and then identify the fixed point set of $S^2M$ with $PM$
and contract $PM$ to $M$, with an explicit gauge correction for the twisting
3-form.  The outcome is the brane space localization isomorphism of
Theorem \ref{thm:brane-to-target}.

Section \ref{sec:brane-character} begins with a review of twisted
Chern characters and their loop-space refinements.  We then form the
averaged gerbe and its module connections on the latitude path space,
and identify the pullback of its loop holonomy line with $\mathcal L_H$.
This gives the twisted brane Bismut-Chern and brane Bismut-Chern-Simons forms.
We prove their restriction identities, treat both parities, and establish
compatibility with localization in Theorem \ref{thm:character-localization}.

Section \ref{sec:genus-zero-t-duality} reviews T-duality with $H$-flux and
the classical Hori transform, then passes to integral levels and
completed periodic cohomology.  Transport through brane localization
gives the graded genus-0 Hori morphisms and the Euler identities of
Theorem \ref{thm:graded-genus-zero-T-duality}.  The final subsection
proves their level-1 compatibility with the brane character and
twisted $K$-theory T-duality.

\subsection*{Acknowledgement}  Fei Han was partially supported by the AcRF grant from the National University of Singapore.  Varghese Mathai thanks the Australian Research Council  for support via the Australian Laureate Fellowship FL170100020.

\section{Average holonomy line bundles and exotic twisted theory on brane spaces}
\label{sec:brane-space}

The purpose of this section is to pass from gerbe data on a smooth manifold $M$
to geometric and cohomological data on its brane space $S^2M$. The construction
rests on a simple feature of the sphere.  Once a polar axis has been chosen, $S^2$
is swept out by the orbits of a circle action, namely its circles of latitude.  We
first use these orbits to average differential forms and to define a latitude
transgression.  We then apply the same procedure to the local holonomy of a gerbe
and obtain a line bundle with connection on the brane space $S^2M$.  The curvature of this
connection is precisely the term needed to construct an equivariantly flat
superconnection.  Its cohomology is the exotic twisted theory which will be used
in the sequel.

\subsection{The brane space and its circle action}

Let $M$ be a smooth manifold.  We write
\begin{equation}
 S^2M=C^\infty(S^2,M)
 \label{eq:brane-space}
\end{equation}
for the {\bf brane space of $M$}.  This is a Fr\'echet manifold.  At a map
$x\in S^2M$, its tangent space is naturally identified with the space of smooth
sections of the pullback tangent bundle:
\begin{equation}
 T_xS^2M=\Gamma(S^2,x^*TM).
 \label{eq:tangent-brane-space}
\end{equation}

Choose spherical coordinates
\[
 (\theta,t)\in\mathbb T\times[-\pi/2,\pi/2],
 \qquad \mathbb T=\mathbb R/\mathbb Z,
\]
where $t$ is the latitude and $\theta$ parametrizes the corresponding circle of
latitude.  At $t=-\pi/2$ and $t=\pi/2$, the circle collapses to the south and north
poles, respectively.  We denote the latitude at height $t$ by
\[
 c_t:\mathbb T\longrightarrow S^2.
\]

The group $\mathbb T$ acts on $S^2$ by rotation about the polar axis and preserves the
symplectic form on $S^2$.  Our
orientation convention is
\[
 a\cdot(\theta,t)=(\theta+a,t),
 \qquad a\in\mathbb T,
\]
and we write $V$ for the corresponding infinitesimal generator.  By precomposition,
the same circle acts on $S^2M$.  Thus
\[
 (a\cdot x)(p)=x(a\cdot p),
\]
and the induced vector field $K$ on $S^2M$ is given by
\begin{equation}
 K_x=dx(V)\in\Gamma(S^2,x^*TM).
 \label{eq:brane-rotation-vector-field}
\end{equation}
The convention is illustrated in Figure \ref{fig:sphere-circle-action}.  Both
arrows in the figure point in the positive direction determined by $V$.

\begin{figure}[ht]
\centering
\begin{tikzpicture}[scale=1.05,line cap=round,line join=round]
 \def\R{2.12}
 \draw[black,thick] (0,0) circle (\R);
 \draw[black!45,densely dashed] (0,-2.48)--(0,2.48);
 \node[below] at (0,2.08) {$N$};
 \node[below] at (0,-2.15) {$S$};

 \foreach \y/\a in {-1.35/1.64,-0.70/2.00,0.72/1.99,1.36/1.63}{
  \draw[black!35,densely dashed]
   (-\a,\y) arc[
      start angle=180,
      end angle=360,
      x radius=\a,
      y radius=.18
   ];
  \draw[black!55]
   (\a,\y) arc[
      start angle=0,
      end angle=180,
      x radius=\a,
      y radius=.18
   ];
 }

 \def\yt{0.08}
 \def\at{2.12}

 \draw[black!55,thick,densely dashed]
  (-\at,\yt) arc[
     start angle=180,
     end angle=360,
     x radius=\at,
     y radius=.27
  ];

 \draw[black,very thick,->]
  (\at,\yt) arc[
     start angle=0,
     end angle=152,
     x radius=\at,
     y radius=.27
  ];

 \node[black,right] at (2.20,.35) {$c_t(\mathbb T)$};
 \node[black] at (-1.10,.56) {$V$};

 \draw[black,thick,->]
  (.92,2.62) arc[
     start angle=0,
     end angle=152,
     x radius=.92,
     y radius=.22
  ];

 \node[black,anchor=west] at (1.12,2.62)
  {$S^1\simeq\mathbb T\curvearrowright S^2$};
\end{tikzpicture}
\caption{The latitude orbits of the circle action on $S^2$.  The vector field
$V$ generates rotation about the polar axis, and $c_t(\mathbb T)$ is one of its
orbits.  Both arrows represent the positive rotation
$\theta\mapsto\theta+a$.}
\label{fig:sphere-circle-action}
\end{figure}

Let
\begin{equation}
 \operatorname{ev}:S^2M\times S^2\longrightarrow M,
 \qquad
 \operatorname{ev}(x,p)=x(p),
 \label{eq:brane-evaluation}
\end{equation}
be the evaluation map.  We normalize the Haar measure $d\theta$ on $\mathbb T$
to have total mass 1.  The normalized spherical measure in the latitude
variable is
\begin{equation}
 \rho(t)\,dt=\frac{1}{2}\cos t\,dt,
 \label{eq:latitude-measure}
\end{equation}
so that
\[
 \int_{-\pi/2}^{\pi/2}\rho(t)\,dt=1.
\]

For $p\in S^2$, let $\operatorname{ev}_p:S^2M\to M$ be evaluation at $p$.
Given $\omega\in\Omega^j(M)$, we define its {\bf average on brane space $S^2M$} by
\begin{equation}
 \overline\omega=
 \int_{-\pi/2}^{\pi/2}\int_{\mathbb T}
 \operatorname{ev}_{c_t(\theta)}^*\omega\,
 d\theta\,\rho(t)\,dt
 \in\Omega^j(S^2M).
 \label{eq:average-form}
\end{equation}
This operation preserves degree.  It should not be confused with {\bf brane
transgression}, which is obtained by integration along the fibre of
$S^2\times S^2M\to S^2M$ and lowers degree by two.

There is also a transgression associated with a single latitude.  For fixed $t$,
consider
\begin{equation}
 \operatorname{ev}_t:S^2M\times\mathbb T\longrightarrow M,
 \qquad
 \operatorname{ev}_t(x,\theta)=x(c_t(\theta)).
\end{equation}
The {\bf latitude transgression} of $\omega$ is
\begin{equation}
 \tau_t(\omega)=\int_{\mathbb T}\operatorname{ev}_t^*\omega
 \in\Omega^{j-1}(S^2M).
 \label{eq:latitude-transgression}
\end{equation}
The orientation of the fibre $\mathbb T$ is the one determined by $V$.

The elementary identities below are the reason for introducing both operations.
They will reappear in the construction of the connection on the average holonomy
line bundle.

\begin{lemma}
For every $\omega\in\Omega^\bullet(M)$, one has
\begin{equation}
 d\overline\omega=\overline{d\omega},
 \qquad
 \mathcal L_K\overline\omega=0,
 \qquad
 \iota_K\overline\omega=
 \int_{-\pi/2}^{\pi/2}
 \tau_t(\omega)\rho(t)\,dt.
 \label{eq:average-identities}
\end{equation}
\end{lemma}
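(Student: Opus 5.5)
The plan is to rewrite the average through the single evaluation map $\operatorname{ev}:S^2M\times S^2\to M$. By definition,
\[
 \overline\omega=\int_{S^2}\operatorname{ev}^*\omega\wedge\mu ,
\]
where $\mu=d\theta\,\rho(t)\,dt$ is viewed as a top-degree measure on $S^2$ and only the component of $\operatorname{ev}^*\omega$ of bidegree $(j,0)$, meaning degree $0$ along $S^2$, contributes. In other words, $\overline\omega$ is fibre integration of $\operatorname{ev}^*\omega\wedge\mathrm{pr}_2^*\mu$ over the closed fibre $S^2$. The same fact can be used pointwise: $\overline\omega$ is the $\mu$-average of the forms $\operatorname{ev}_p^*\omega$, and each of the three identities is checked on this family.

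For the first identity, $d$ commutes with pullback, so $d\operatorname{ev}_p^*\omega=\operatorname{ev}_p^*d\omega$ for each $p$. Differentiation under the integral sign then gives $d\overline\omega=\overline{d\omega}$. This step is legitimate because the integrand depends smoothly on $p$ and the parameter space $S^2$ is compact. Equivalently, one can use Stokes for fibre integration over a closed fibre together with $d\mu=0$ on $S^2$. The pole singularities of the coordinates are harmless, because $\mu$ is the smooth area form on $S^2$, normalized to total mass one.

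For the second identity, $\operatorname{ev}_p\circ a=\operatorname{ev}_{a\cdot p}$ for $a\in\mathbb T$. Hence $a^*\overline\omega$ is the average of $\operatorname{ev}_{a\cdot p}^*\omega$, and this equals $\overline\omega$ because the measure $d\theta$ is translation invariant, so $\mu$ is invariant under rotation. Differentiating in $a$ at $a=0$ gives $\mathcal L_K\overline\omega=0$.

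For the third identity, contraction commutes with the average, so $\iota_K\overline\omega$ is the average of $\iota_K\operatorname{ev}_p^*\omega$. At a point $x$ we have $d\operatorname{ev}_p(K_x)=K_x(p)=dx(V_p)$, which is the vector $d\operatorname{ev}_t\big(\partial_\theta\big)$ at $(x,\theta)$ with $p=c_t(\theta)$. Now write $\operatorname{ev}_t^*\omega=\alpha_\theta+d\theta\wedge\beta_\theta$ on $S^2M\times\mathbb T$, where $\alpha_\theta$ and $\beta_\theta$ contain no $d\theta$. Fibre integration gives $\tau_t(\omega)=\pm\int_{\mathbb T}\beta_\theta\,d\theta$. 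On the other hand, the identification of $K$ with $\partial_\theta$ gives, for tangent vectors $Y_i$ of $S^2M$,
\[
 (\iota_K\operatorname{ev}_{c_t(\theta)}^*\omega)(Y_1,\dots)=(\operatorname{ev}_t^*\omega)(\partial_\theta,Y_1,\dots)=\beta_\theta(Y_1,\dots).
\]
Integrating over $\theta$, and then against $\rho(t)\,dt$, yields the claimed formula.

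The main obstacle is the sign bookkeeping in the last step. The fibre-integration convention, fibre-first or fibre-last, and the orientation fixed by $V$ must be chosen so that $\int_{\mathbb T}d\theta\wedge\beta=\int\beta\,d\theta$ with a plus sign. I would fix the convention so that $\tau_t$ contracts with $\partial_\theta$ first, which is consistent with the statement that the orientation of the fibre $\mathbb T$ is determined by $V$. With that convention, the identity holds exactly as stated.
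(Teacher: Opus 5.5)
Your proposal is correct and follows the same route as the paper: differentiation under the integral sign, invariance of the Haar measure under rotation, and the identification $d\operatorname{ev}_{c_t(\theta)}(K_x)=d\operatorname{ev}_t(\partial_\theta)$, so that contracting with $K$ and integrating over each orbit reproduces $\tau_t$. Your sign bookkeeping fixes the fibre-integration convention so that $\partial_\theta$ is contracted first; the paper leaves this implicit, and it is the convention consistent with the paper's later formula $d\ell_{\alpha\beta}(t)=\tau_t(F_{\alpha\beta})$.
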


\begin{proof}
The first identity follows by differentiating under the integral sign.  The
second is a consequence of the invariance of Haar measure.  For the last
identity, contraction with $K$ inserts the infinitesimal rotation vector $V$ in
the evaluation map.  Integration in the orbit direction is therefore exactly
the latitude transgression in \eqref{eq:latitude-transgression}.  Averaging over
$t$ gives the stated formula.
\end{proof}

The preceding lemma passes from differential forms on $M$ to invariant forms on
$S^2M$.  We now carry out the corresponding construction for gerbe data.

\subsection{The average holonomy line bundle arising from gerbe data}

For an open set $\sU\subset M$, write
\begin{equation}
 S^2\sU=C^\infty(S^2,\sU)\subset S^2M.
\end{equation}

\begin{definition}
An open cover $\{\sU_\alpha\}$ of $M$ is called a
{\bf brane Brylinski cover} if the sets
$\{S^2\sU_\alpha\}$ cover $S^2M$.
\end{definition}

We choose this cover relative to the gerbe that enters the construction.
Let $\mathcal G$ be a gerbe with connection on $M$, with curvature $H$
and integral Dixmier-Douady class $h$.  We require only
\[
 h|_{\sU_\alpha}=0
 \quad\text{in }H^3(\sU_\alpha;\mathbb Z).
\]
Such image-containing neighborhoods exist for every smooth sphere map.
Indeed, $K=x(S^2)$ is compact and has Hausdorff dimension at most two,
and hence covering dimension at most two.  Consequently
$\check H^3(K;\mathbb Z)=0$.  Continuity of \v{C}ech cohomology gives
\[
 \check H^3(K;\mathbb Z)
 =\varinjlim_{K\subset\sU}H^3(\sU;\mathbb Z),
\]
where $\sU$ runs over open neighborhoods in $M$; see Chapter 6 in \cite{Spanier}.
The restriction of the fixed class $h$ therefore vanishes on some such
neighborhood.  Choosing one for each $x$ gives the required brane cover.
This argument uses smoothness of $x$ and does not require its image
to be an embedded surface.

Choose a trivialization of $\mathcal G$ on each $\sU_\alpha$, with
curving $B_\alpha$.  Here triviality is topological: as a gerbe with
connection the restriction is isomorphic to the trivial gerbe
$\mathcal I_{B_\alpha}$, and $dB_\alpha=H|_{\sU_\alpha}$.
The differences between these trivializations are line bundles with
connection $(L_{\alpha\beta},\nabla^{L_{\alpha\beta}})$ on
$\sU_{\alpha\beta}=\sU_\alpha\cap\sU_\beta$.  We retain their
connection-preserving products
\[
 \mu_{\alpha\beta\gamma}:
 L_{\alpha\beta}\otimes L_{\beta\gamma}
 \longrightarrow L_{\alpha\gamma},
\]
which are associative on quadruple intersections.  This is the usual
descent description of a gerbe \cite{Bry,MS}.  No vanishing of the
higher cohomology of finite intersections is needed; the overlap line
bundles need not be trivial.

In this notation the connection data are
\[
 \bigl(H,B_\alpha,F_{\alpha\beta},
 (L_{\alpha\beta},\nabla^{L_{\alpha\beta}})\bigr),
\]
and their descent relations are
\begin{equation}
\label{eq:gerbe-triple-brane}
\begin{split}
 &H=dB_\alpha
 \qquad\text{on }\sU_\alpha,\\
 &B_\beta-B_\alpha
 =F_{\alpha\beta}
 =\left(\nabla^{L_{\alpha\beta}}\right)^2
 \qquad\text{on }\sU_{\alpha\beta},\\
 &(L_{\alpha\beta},\nabla^{L_{\alpha\beta}})
 \otimes(L_{\beta\gamma},\nabla^{L_{\beta\gamma}})
 \otimes(L_{\gamma\alpha},\nabla^{L_{\gamma\alpha}})
 \simeq(\CC,d)
 \qquad\text{on }\sU_{\alpha\beta\gamma}.
\end{split}
\end{equation}
The last identification is the one supplied by the gerbe product.

Let
\[
 x\in S^2\sU_\alpha\cap S^2\sU_\beta.
\]
For each latitude, define
\begin{equation}
 h_{\alpha\beta}(t;x)=
 \operatorname{hol}_{\nabla^{L_{\alpha\beta}}}
 \bigl(\theta\longmapsto x(c_t(\theta))\bigr).
 \label{eq:latitude-holonomy}
\end{equation}
Thus $h_{\alpha\beta}(t;x)$ is the holonomy of
$(L_{\alpha\beta},\nabla^{L_{\alpha\beta}})$ around the image under $x$ of the
latitude at $t$.

At the south pole the latitude is a constant loop, and hence
\[
 h_{\alpha\beta}(-\pi/2;x)=1.
\]
The path lifting property of the exponential map gives a unique continuous
logarithm
\begin{equation}
\begin{split}
 \ell_{\alpha\beta}(t;x)&\in\sqrt{-1}\mathbb R,
 \qquad
 \ell_{\alpha\beta}(-\pi/2;x)=0,\\
 \exp\bigl(\ell_{\alpha\beta}(t;x)\bigr)
 &=h_{\alpha\beta}(t;x).
\end{split}
 \label{eq:normalized-log-holonomy}
\end{equation}
The normalization at the south pole removes the usual ambiguity of the
logarithm.  The resulting function depends smoothly on both $x$ and $t$.

We now average this logarithmic holonomy over the latitudes and return to the
circle by exponentiation.  Define
\begin{equation}
 g_{\alpha\beta}(x)=
 \exp\left(
 \int_{-\pi/2}^{\pi/2}
 \ell_{\alpha\beta}(t;x)\rho(t)\,dt
 \right)
 \label{eq:average-transition}
\end{equation}
on $S^2\sU_\alpha\cap S^2\sU_\beta$.

\begin{proposition}
The functions $g_{\alpha\beta}$ are $\mathbb T$ invariant and satisfy
\begin{equation}
 g_{\alpha\beta}g_{\beta\gamma}g_{\gamma\alpha}=1.
 \label{eq:average-cocycle}
\end{equation}
They therefore define the transition functions of a complex line bundle
\begin{equation}
 \mathcal L_H\longrightarrow S^2M.
 \label{eq:average-line-bundle}
\end{equation}
We call $\mathcal L_H$ the
{\bf average latitude holonomy line bundle on the brane space}.
\end{proposition}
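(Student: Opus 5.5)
The plan is to prove both claims at the level of the normalized logarithms $\ell_{\alpha\beta}(t;x)$, and only afterwards exponentiate. Invariance is immediate. For $a\in\mathbb T$ the loop $\theta\mapsto (a\cdot x)(c_t(\theta))=x(c_t(\theta+a))$ is a reparametrization of $\theta\mapsto x(c_t(\theta))$ by a rotation of the base point. Holonomy of a line bundle around a closed loop is independent of the base point and of orientation-preserving reparametrization. Hence $h_{\alpha\beta}(t;a\cdot x)=h_{\alpha\beta}(t;x)$ for all $t$. Both continuous logarithms vanish at $t=-\pi/2$, so by uniqueness of the lift $\ell_{\alpha\beta}(t;a\cdot x)=\ell_{\alpha\beta}(t;x)$, and integrating against $\rho$ gives $g_{\alpha\beta}(a\cdot x)=g_{\alpha\beta}(x)$.

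For the cocycle identity, fix $x\in S^2\sU_{\alpha\beta\gamma}=S^2\sU_\alpha\cap S^2\sU_\beta\cap S^2\sU_\gamma$. The third relation in \eqref{eq:gerbe-triple-brane} is the connection-preserving isomorphism $\mu$ from $L_{\alpha\beta}\otimes L_{\beta\gamma}\otimes L_{\gamma\alpha}$ to the trivial bundle $(\CC,d)$ on $\sU_{\alpha\beta\gamma}$. Holonomy is multiplicative under tensor products and invariant under connection-preserving isomorphisms. Since every latitude loop $x\circ c_t$ lies in $\sU_{\alpha\beta\gamma}$, this gives
\[
 h_{\alpha\beta}(t;x)\,h_{\beta\gamma}(t;x)\,h_{\gamma\alpha}(t;x)=1
 \qquad\text{for all }t\in[-\pi/2,\pi/2].
\]
Consequently $\lambda(t)=\ell_{\alpha\beta}(t;x)+\ell_{\beta\gamma}(t;x)+\ell_{\gamma\alpha}(t;x)$ is continuous in $t$ and takes values in $2\pi\sqrt{-1}\,\mathbb Z$. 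The interval is connected and $\lambda(-\pi/2)=0$, so $\lambda\equiv 0$. Integrating against $\rho(t)\,dt$ and exponentiating yields \eqref{eq:average-cocycle}.

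The same argument with the canonical identifications $L_{\alpha\alpha}\cong\CC$ and $L_{\beta\alpha}\cong L_{\alpha\beta}^{*}$ gives $g_{\alpha\alpha}=1$ and $g_{\beta\alpha}=g_{\alpha\beta}^{-1}$. So $\{g_{\alpha\beta}\}$ is a genuine $U(1)$-valued Čech cocycle on the cover $\{S^2\sU_\alpha\}$. That this family is a cover is exactly the brane Brylinski cover property established before the statement. The bundle $\mathcal L_H$ is then obtained by the standard gluing of the trivial bundles $S^2\sU_\alpha\times\CC$.

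The one step that needs care is regularity. Gluing requires the $g_{\alpha\beta}$ to be smooth on the Fréchet manifold, which needs $\ell_{\alpha\beta}$ jointly smooth in $(t,x)$; the text asserts this but it is the real content. To justify it, I would work locally near $x_0$ and write $h_{\alpha\beta}(t;x)=\exp\bigl(-\int_{\mathbb T}(x\circ c_t)^*A\bigr)$ using a local connection form $A$ of $L_{\alpha\beta}$. Smooth dependence on $t$ and $x$ would then follow by differentiating under the integral. Where no single trivialization covers the loop, I would instead express holonomy as a product of finitely many transports over small arcs, each a smooth function of the data. A smooth logarithm near $(t_0,x_0)$ follows, and the south-pole normalization together with uniqueness patches these local logarithms into a global one in $t$. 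The integral over $t$ then preserves smoothness.
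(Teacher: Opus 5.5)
Your proof is correct and follows the paper's argument. It uses rotation invariance of each latitude holonomy (your appeal to uniqueness of the south-pole-normalized lift makes explicit the paper's ``and hence''), multiplicativity of holonomy under the connection-preserving trivialization of $L_{\alpha\beta}\otimes L_{\beta\gamma}\otimes L_{\gamma\alpha}$, and the fact that a continuous $2\pi\sqrt{-1}\mathbb Z$-valued function of $t$ vanishing at $t=-\pi/2$ vanishes identically; your regularity paragraph supplies the joint smoothness of $\ell_{\alpha\beta}$ that the paper only asserts, and could be shortened by noting that $\ell_{\alpha\beta}(t;x)=\int_{-\pi/2}^{t}h_{\alpha\beta}(s;x)^{-1}\partial_s h_{\alpha\beta}(s;x)\,ds$ is manifestly smooth in $(t,x)$ once $h_{\alpha\beta}$ is.
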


\begin{proof}
Rotating a parametrized circle does not change its holonomy.  The functions
$h_{\alpha\beta}$, and hence $g_{\alpha\beta}$, are therefore invariant under
$\mathbb T$.

The gerbe cocycle relation in \eqref{eq:gerbe-triple-brane} gives
\begin{equation}
 h_{\alpha\beta}(t;x)h_{\beta\gamma}(t;x)
 h_{\gamma\alpha}(t;x)=1.
\end{equation}
It follows that
\begin{equation}
 \ell_{\alpha\beta}(t;x)
 +\ell_{\beta\gamma}(t;x)
 +\ell_{\gamma\alpha}(t;x)
 \in2\pi\sqrt{-1}\mathbb Z.
\end{equation}
The left hand side is continuous in $t$ and vanishes at the south pole, so it vanishes for every $t$.  Averaging and exponentiating proves
\eqref{eq:average-cocycle}.
\end{proof}

Let $s_\alpha$ be the local frame of $\mathcal L_H$ over $S^2\sU_\alpha$
determined by the convention
\begin{equation}
 s_\alpha=g_{\alpha\beta}s_\beta.
 \label{eq:frame-convention}
\end{equation}
The frames may be chosen to be $\mathbb T$-invariant.

The transition functions alone do not yet retain the differential information
contained in the gerbe connection. The next proposition supplies the missing
connection and identifies its curvature.

\begin{proposition}
\label{connection}
The one-forms
\begin{equation}
 A_\alpha=-\iota_K\overline{B_\alpha}
 \label{eq:average-connection-form}
\end{equation}
are the local connection forms of a $\mathbb T$-invariant connection
$\nabla^{\mathcal L_H}$ on $\mathcal L_H$.  Its curvature is
\begin{equation}
 (\nabla^{\mathcal L_H})^2=\iota_K\overline H.
 \label{eq:average-curvature}
\end{equation}
\end{proposition}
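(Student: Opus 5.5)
Two things have to be checked. First, the local forms $A_\alpha$ must patch to a connection on $\mathcal L_H$: with the frame convention \eqref{eq:frame-convention}, $s_\alpha=g_{\alpha\beta}s_\beta$, this amounts to
\[
 A_\beta-A_\alpha=g_{\alpha\beta}^{-1}dg_{\alpha\beta}
 =d\int_{-\pi/2}^{\pi/2}\ell_{\alpha\beta}(t;x)\rho(t)\,dt ,
\]
with a sign fixed once the frame convention is spelled out. Second, the curvature must be computed locally. By \eqref{eq:gerbe-triple-brane}, the left side is $-\iota_K\overline{F_{\alpha\beta}}$. By the last identity in \eqref{eq:average-identities}, this equals $-\int\tau_t(F_{\alpha\beta})\rho(t)\,dt$. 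So everything reduces to a latitude-by-latitude statement,
\[
 d\,\ell_{\alpha\beta}(t;\cdot)=\pm\,\tau_t(F_{\alpha\beta})
 \quad\text{on }S^2\sU_\alpha\cap S^2\sU_\beta,
\]
followed by integration against $\rho(t)\,dt$. Differentiation under the integral is justified as in the proof of \eqref{eq:average-identities}.

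For the latitude identity I would pull back along $r_t:S^2M\to LM$, $x\mapsto x\circ c_t$. This reduces it to the classical fact on loop space that the differential of the log-holonomy of a line bundle with connection equals the transgression of its curvature:
\[
 d\log\operatorname{hol}_{\nabla^L}=\pm\int_{S^1}\operatorname{ev}^*F^L .
\]
This is standard and follows from Stokes' theorem on a cylinder swept out by a one-parameter family of loops, or from the variation formula for holonomy. Since $\ell_{\alpha\beta}(t;\cdot)$ is a branch of $\log h_{\alpha\beta}(t;\cdot)$, and by continuity and the south-pole normalization it is locally constant-shifted from any other branch, its differential is unambiguous. The identity therefore holds with the same sign for every $t$. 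Averaging in $t$ gives the cocycle-compatibility of the $A_\alpha$, so they define a connection $\nabla^{\mathcal L_H}$. It is $\mathbb T$-invariant because $\mathcal L_K\overline{B_\alpha}=0$ and $K$ is invariant, so $\mathcal L_KA_\alpha=-\iota_K\mathcal L_K\overline{B_\alpha}=0$, and the frames are invariant.

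For the curvature, on $S^2\sU_\alpha$ we have $(\nabla^{\mathcal L_H})^2=dA_\alpha$. Cartan's formula and \eqref{eq:average-identities} give
\[
 dA_\alpha=-d\iota_K\overline{B_\alpha}
 =-\mathcal L_K\overline{B_\alpha}+\iota_Kd\overline{B_\alpha}
 =\iota_K\overline{dB_\alpha}=\iota_K\overline H .
\]
The right side is globally defined, consistent with the patching.

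The main obstacle is bookkeeping of signs and orientations. These include the orientation of the latitude fibre by $V$, the convention $s_\alpha=g_{\alpha\beta}s_\beta$ versus its inverse, the sign in the holonomy variation formula, and the sign of $A_\alpha$ relative to $B_\beta-B_\alpha=F_{\alpha\beta}$. All of them must combine so that $A_\beta-A_\alpha$ matches $d\log g_{\alpha\beta}$ exactly. I would first fix these conventions on the loop space $LM$, where the paper states $(\nabla^{\mathcal L^{\rm loop}_H})^2=\iota_{K_L}\overline H_L$ with connection $-\iota_{K_L}\overline{B_\alpha}_L$. I would then observe that the brane statement is the $\rho(t)\,dt$-average of the pullbacks $r_t^*$ of the loop statement, which inherits all signs automatically.
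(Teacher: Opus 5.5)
Your proposal is correct and is essentially the paper's proof: the holonomy variation formula $d\ell_{\alpha\beta}(t)=\tau_t(F_{\alpha\beta})$ on each latitude is averaged using $\iota_K\overline\omega=\int\tau_t(\omega)\rho(t)\,dt$ to get $d\log g_{\alpha\beta}=\iota_K\overline{F_{\alpha\beta}}$, and Cartan's formula then gives $dA_\alpha=\iota_K\overline H$. One sign correction: with $s_\alpha=g_{\alpha\beta}s_\beta$ and curvature $dA_\alpha$ (i.e.\ $\nabla s_\alpha=A_\alpha s_\alpha$), the patching law is $A_\beta-A_\alpha=-d\log g_{\alpha\beta}$, not the $+g_{\alpha\beta}^{-1}dg_{\alpha\beta}$ you first display, and this is exactly what the plus sign in the variation formula delivers, since $A_\beta-A_\alpha=-\iota_K\overline{F_{\alpha\beta}}$.
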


\begin{proof}
The variation formula for line-bundle holonomy \cite{Bry} gives
\begin{equation}
 d\ell_{\alpha\beta}(t)=\tau_t(F_{\alpha\beta}).
 \label{eq:brylinski-dlog}
\end{equation}
After averaging in $t$, the last identity in
\eqref{eq:average-identities} yields
\begin{equation}
 d\log g_{\alpha\beta}
 =\iota_K\overline{F_{\alpha\beta}}.
 \label{eq:dlog-average-transition}
\end{equation}

Since $B_\beta-B_\alpha=F_{\alpha\beta}$, we obtain
\begin{equation}
 A_\beta-A_\alpha=-d\log g_{\alpha\beta}.
\end{equation}
This is exactly the transformation law for connection forms in the frames fixed
by \eqref{eq:frame-convention}.  It follows that the $A_\alpha$ define a global
$\mathbb T$ invariant connection on $\mathcal L_H$.

Finally,
\[
 \mathcal L_K\overline{B_\alpha}=0.
\]
Cartan's formula and $dB_\alpha=H$ therefore give
\begin{equation}
 dA_\alpha
 =-d\iota_K\overline{B_\alpha}
 =\iota_Kd\overline{B_\alpha}
 =\iota_K\overline H.
\end{equation}
This proves the curvature formula.
\end{proof}

The construction is compatible with a change of the chosen gerbe
trivializations.  Their difference on $\sU_\alpha$ is a line bundle
with connection.  Its normalized average latitude holonomy supplies
the change of frame of $\mathcal L_H$, and the same variation formula
identifies the connections.  Passing to a common image-containing
refinement therefore gives an isomorphic line bundle with connection.

Although its connection is generally curved, the underlying line bundle has a
particularly simple topology.

\begin{proposition}
The complex line bundle $\mathcal L_H$ is topologically trivial.
\end{proposition}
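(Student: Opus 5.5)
The plan is to show directly that the transition cocycle $\{g_{\alpha\beta}\}$ is a coboundary of continuous (indeed smooth) $U(1)$-valued functions. The transition functions are exponentials of averaged logarithms, and the cocycle relation already holds at the level of the logarithms. This follows from the proof of the previous proposition: $\ell_{\alpha\beta}+\ell_{\beta\gamma}+\ell_{\gamma\alpha}=0$ identically in $t$, not merely modulo $2\pi\sqrt{-1}\mathbb Z$. So I set
\[
 \lambda_{\alpha\beta}(x)=\int_{-\pi/2}^{\pi/2}\ell_{\alpha\beta}(t;x)\rho(t)\,dt ,
\]
a smooth $\sqrt{-1}\mathbb R$-valued function on $S^2\sU_\alpha\cap S^2\sU_\beta$. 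Averaging the logarithmic relation gives $\lambda_{\alpha\beta}+\lambda_{\beta\gamma}+\lambda_{\gamma\alpha}=0$ on triple overlaps, so $\{\lambda_{\alpha\beta}\}$ is a genuine \v{C}ech $1$-cocycle with values in the sheaf of smooth imaginary-valued functions on $S^2M$, and $g_{\alpha\beta}=\exp(\lambda_{\alpha\beta})$.

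Next I split this additive cocycle. Choose a smooth partition of unity $\{\phi_\gamma\}$ on $S^2M$ subordinate to the cover $\{S^2\sU_\gamma\}$, and put $\mu_\alpha=\sum_\gamma\phi_\gamma\lambda_{\alpha\gamma}$. Then $\mu_\alpha-\mu_\beta=\lambda_{\alpha\beta}$ by the standard computation using the cocycle identity. Hence $g_{\alpha\beta}=e^{\mu_\alpha}e^{-\mu_\beta}$. Rescaling the frames, $s'_\alpha=e^{-\mu_\alpha}s_\alpha$, gives $s'_\alpha=s'_\beta$ on overlaps, and so a global nowhere-vanishing section. Thus $\mathcal L_H$ is trivial. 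Equivalently, the class of $\{g_{\alpha\beta}\}$ in $H^1(S^2M;\underline{U(1)})\cong H^2(S^2M;\mathbb Z)$ is the Bockstein image of a class with values in a fine sheaf, hence zero.

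The main obstacle is the existence of the partition of unity on the Fr\'echet manifold $S^2M$. Smooth partitions of unity are not automatic there, but for topological triviality continuous ones suffice. $S^2M$ is metrizable, being a Fr\'echet manifold modelled on nuclear Fr\'echet spaces and second countable when $M$ is, so it is paracompact and admits continuous partitions of unity subordinate to any open cover. Even without that, the sheaf of continuous imaginary-valued functions on a paracompact space is fine, which kills $H^1$. If one wants a smooth trivialization, I would instead appeal to the fact that $C^\infty(S^2,M)$ is smoothly paracompact, since the model spaces $\Gamma(S^2,x^*TM)$ are nuclear Fr\'echet, as in Kriegl--Michor. The topological statement, however, only needs the continuous version.

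Finally, I note that the connection is unaffected: in the new frame its connection form becomes $A_\alpha-d\mu_\alpha$, a global one-form. This is consistent with the curvature $\iota_K\overline H$ being exact only as a form on the trivial bundle, not necessarily as $dA$ for an invariant $A$. No further verification is needed for the proposition.
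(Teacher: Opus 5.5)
Your proof is correct, but it takes a different route from the paper's. Both arguments rest on the same fact. By the proof of the cocycle proposition, the normalized logarithms satisfy $\ell_{\alpha\beta}+\ell_{\beta\gamma}+\ell_{\gamma\alpha}=0$ exactly, so $\{g_{\alpha\beta}\}$ is the exponential of an additive cocycle $\{\lambda_{\alpha\beta}\}$.

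The paper uses this fact to deform the cocycle rather than to split it. It observes that $g^{(r)}_{\alpha\beta}=\exp(r\lambda_{\alpha\beta})$ is a cocycle for every $r\in[0,1]$, joining the trivial cocycle to that of $\mathcal L_H$, and then concludes by homotopy invariance of line bundles. You instead split $\lambda$ with a partition of unity, $\mu_\alpha=\sum_\gamma\phi_\gamma\lambda_{\alpha\gamma}$, and exhibit the global unitary frame $e^{-\mu_\alpha}s_\alpha$.

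The paper's version is shorter. However, its final step (that a continuous family of cocycles gives isomorphic bundles) is itself a partition-of-unity argument over a paracompact, or at least numerable, base. So it tacitly uses the same input on $S^2M$ that you make explicit.

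Your version also yields more. The sets $S^2\sU_\alpha$ and the functions $\lambda_{\alpha\beta}$ are rotation invariant. Averaging $\{\phi_\gamma\}$ over $\mathbb T$, as the paper does in its Mayer-Vietoris lemma, therefore makes each $\mu_\alpha$ invariant. The resulting trivialization is $\mathbb T$-equivariant, and $A_\alpha-d\mu_\alpha$ is an invariant global primitive of $\iota_K\overline H$. In particular, your closing caveat that $\iota_K\overline H$ need not be $d$ of an invariant one-form does not apply here.
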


\begin{proof}
For $r\in[0,1]$, put
\begin{equation}
 g^{(r)}_{\alpha\beta}(x)=
 \exp\left(
 r\int_{-\pi/2}^{\pi/2}
 \ell_{\alpha\beta}(t;x)\rho(t)\,dt
 \right).
\end{equation}
The proof of \eqref{eq:average-cocycle} applies without change and shows that
$\{g^{(r)}_{\alpha\beta}\}$ is a system of transition functions for every $r$.
At $r=0$ these are the transition functions of the trivial line bundle, while at
$r=1$ they are those of $\mathcal L_H$.  This homotopy proves the assertion.
\end{proof}

\begin{remark}
The point of $\mathcal L_H$ is not its first Chern class, but its distinguished
connection.  In general this connection is not flat, and its curvature is the
form $\iota_K\overline H$ in \eqref{eq:average-curvature}.  {\em The topological
triviality reflects the elementary fact that every orbit circle on $S^2$ bounds
a disc.  This is one of the features which distinguishes the present
construction from the double loop construction of \cite{HM24}}.
\end{remark}

We have now obtained the two terms which must cancel in the equivariant theory:
the curvature of $\nabla^{\mathcal L_H}$ and the contraction of the averaged
3-form.  The natural language for this cancellation is that of superconnections.

\subsection{The equivariantly flat superconnection and the exotic twisted theory on brane space}
\label{subsec:brane-periodic-completion}

Let $u$ have degree $2$.  We spell out the completion carefully, since the localization argument depends on it.  For a space $Y$ and a
coefficient bundle $\xi$, the component of total degree $k$ is
\[
 \bigl(\Omega^\bullet(Y,\xi)[[u,u^{-1}]]\bigr)^k
 =
 \prod_{\substack{j\geq0\\j\equiv k\;(\mathrm{mod}\;2)}}
 u^{(k-j)/2}\Omega^j(Y,\xi).
\]
The resulting graded space is the direct sum of these components over
$k\in\mathbb Z$.  We also write
$\Omega^\bullet(Y,\xi)\widehat\otimes R$ for it, where
$R=\mathbb C[u,u^{-1}]$ is the graded Laurent polynomial ring.
Thus the completion is taken in ordinary form degree, not in a ring
of unrestricted two-sided scalar series.  Products and the operators
used below are defined componentwise.  In each ordinary form degree
the products involve only finitely many terms.  The same convention
applies to invariant forms and to forms with trivial coefficients.
For finite dimensional $Y$ the product at fixed total degree is
finite, so the completed tensor product agrees with the ordinary one.

On the completed periodic space
\begin{equation}
 \Omega^\bullet(S^2M,\mathcal L_H)[[u,u^{-1}]],
\end{equation}
consider the odd operator
\begin{equation}
 Q_H
 =\nabla^{\mathcal L_H}
 -u\iota_K
 +u^{-1}\overline H.
 \label{eq:brane-superconnection}
\end{equation}
The curvature identity above is exactly what is required for equivariant
flatness.

\begin{theorem}
One has
\begin{equation}
 Q_H^2+u\mathcal L_K^{\mathcal L_H}=0.
 \label{eq:brane-equivariant-flatness}
\end{equation}
In particular, $Q_H^2=0$ on the invariant subspace
\begin{equation}
 \Omega^\bullet(S^2M,\mathcal L_H)^{\mathbb T}
 [[u,u^{-1}]].
\end{equation}
\end{theorem}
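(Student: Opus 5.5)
We expand $Q_H^2$ directly, using that $Q_H$ is odd. Write $Q_H=\nabla-u\iota_K+u^{-1}\overline H$ with $\nabla=\nabla^{\mathcal L_H}$. Since $u$ is central of even degree, squaring an odd operator produces graded commutators, and we get
\[
Q_H^2=\nabla^2+u^2\iota_K^2+u^{-2}\overline H\wedge\overline H
-u[\nabla,\iota_K]+u^{-1}[\nabla,\overline H]-[\iota_K,\overline H],
\]
where $[\cdot,\cdot]$ denotes the anticommutator of odd operators. We then evaluate each term in turn.

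First, $\iota_K^2=0$, and $\overline H\wedge\overline H=0$ because $\overline H$ is an odd form. Next, $[\nabla,\overline H]=d\overline H=\overline{dH}=0$ by the Lemma, using $dH=0$. The term $[\iota_K,\overline H]$ is multiplication by the function-valued form $\iota_K\overline H$; this is the standard fact that $\iota_K(\overline H\wedge\omega)+\overline H\wedge\iota_K\omega=(\iota_K\overline H)\wedge\omega$. Finally, $[\nabla,\iota_K]=\nabla\iota_K+\iota_K\nabla$ is by definition the Lie derivative $\mathcal L_K^{\mathcal L_H}$ on $\mathcal L_H$-valued forms (Cartan's formula for a connection).

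Collecting the surviving terms gives
\[
Q_H^2=(\nabla^{\mathcal L_H})^2-\iota_K\overline H-u\mathcal L_K^{\mathcal L_H}.
\]
Proposition \ref{connection} identifies $(\nabla^{\mathcal L_H})^2=\iota_K\overline H$, so the first two terms cancel and $Q_H^2=-u\mathcal L_K^{\mathcal L_H}$. This is the asserted identity.

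For the second claim we must check that $Q_H$ preserves the invariant subspace and that $\mathcal L_K^{\mathcal L_H}$ vanishes there. The connection is $\mathbb T$-invariant (Proposition \ref{connection}), $\overline H$ is invariant by the Lemma, and $\iota_K$ commutes with the action. Hence $Q_H$ preserves invariant forms, and on them $\mathcal L_K^{\mathcal L_H}=0$, giving $Q_H^2=0$.

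The only delicate point is the definition of $\mathcal L_K^{\mathcal L_H}$ and its compatibility with invariance of sections of $\mathcal L_H$. We will work in the $\mathbb T$-invariant frames $s_\alpha$. There $\nabla=d+A_\alpha$, and
\[
\mathcal L_K^{\mathcal L_H}=\mathcal L_K+\iota_KA_\alpha .
\]
We have $\iota_KA_\alpha=-\iota_K\iota_K\overline{B_\alpha}=0$, so in these frames $\mathcal L_K^{\mathcal L_H}$ is just the ordinary Lie derivative of coefficient forms. It therefore agrees with the infinitesimal action, and it vanishes exactly on invariant forms. The completion causes no trouble, since all operators act componentwise and each form degree involves only finitely many terms.
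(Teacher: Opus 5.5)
Your argument is correct, but it is not the route the paper takes for this theorem. The paper works in an invariant frame $s_\alpha$ and observes that $Q_H=e^{-u^{-1}\overline{B_\alpha}}(d-u\iota_K)e^{u^{-1}\overline{B_\alpha}}$. Hence $Q_H^2$ is the conjugate of $(d-u\iota_K)^2=-u\mathcal L_K$, and the conjugation is harmless because $\mathcal L_K\overline{B_\alpha}=0$. That argument uses only $A_\alpha=-\iota_K\overline{B_\alpha}$, $d\overline{B_\alpha}=\overline H$ and the invariance of $\overline{B_\alpha}$. The curvature identity of Proposition \ref{connection} and the closedness of $\overline H$ are absorbed into the gauge transformation rather than invoked separately.

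Your global expansion instead isolates exactly the two cancellations that matter, $(\nabla^{\mathcal L_H})^2-\iota_K\overline H=0$ and $[\nabla,\overline H]=d\overline H=0$. It therefore applies verbatim to any invariant connection whose curvature is $\iota_K\chi$ for a closed invariant $\chi$, and it scales immediately to level $m$. Indeed, this is the argument the paper itself uses later for $Q_{mH}$.

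What the conjugation buys is structural: it shows that $Q_H$ is locally gauge equivalent to the untwisted equivariant differential. That mechanism is reused for the closedness of the brane Bismut--Chern forms (Lemma \ref{lem:character-loop-calculus}) and in the endpoint lemma.

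Your final paragraph checks that $\iota_KA_\alpha=-\iota_K\iota_K\overline{B_\alpha}=0$, so $\mathcal L_K^{\mathcal L_H}$ is the ordinary Lie derivative of coefficients in the invariant frames. This makes explicit a point the paper leaves as a one-line remark. One small slip: $\iota_K\overline H$ is a scalar $2$-form, not a function-valued form.
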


\begin{proof}
Work over $S^2\sU_\alpha$ and use the invariant frame $s_\alpha$.  In this
frame, the operator is
\begin{equation}
 Q_H
 =d-\iota_K\overline{B_\alpha}
 -u\iota_K
 +u^{-1}d\overline{B_\alpha}.
\end{equation}
Equivalently,
\begin{equation}
 Q_H=
 e^{-u^{-1}\overline{B_\alpha}}
 (d-u\iota_K)
 e^{u^{-1}\overline{B_\alpha}}.
 \label{eq:local-conjugation}
\end{equation}
Since $\overline{B_\alpha}$ is $\mathbb T$ invariant, Cartan's formula gives
\begin{equation}
 Q_H^2=
 e^{-u^{-1}\overline{B_\alpha}}
 (d-u\iota_K)^2
 e^{u^{-1}\overline{B_\alpha}}
 =-u\mathcal L_K.
\end{equation}
The local frames are invariant, so the operator on the right is
$-u\mathcal L_K^{\mathcal L_H}$.  These local identities agree on intersections
and hence give the asserted global formula.
\end{proof}

\begin{definition}
The {\bf completed periodic exotic twisted equivariant cohomology of the brane
space} is the cohomology of the complex
\begin{equation}
 \left(
 \Omega^\bullet(S^2M,\mathcal L_H)^{\mathbb T}
 [[u,u^{-1}]],
 Q_H
 \right).
 \label{eq:brane-exotic-complex}
\end{equation}
We denote it by
\begin{equation}
 h^\bullet_{\mathbb T}
 \bigl(S^2M,\nabla^{\mathcal L_H}:\overline H\bigr).
 \label{eq:brane-exotic-cohomology}
\end{equation}
\end{definition}

Thus the gerbe on $M$ gives rise, by averaging its latitude holonomy, to a
natural twisted equivariant theory on $S^2M$.  The construction retains the
local geometry of the gerbe through $\nabla^{\mathcal L_H}$, while
\eqref{eq:brane-equivariant-flatness} makes the resulting complex accessible to
localization.  This is the {\em  passage from target space geometry to brane space
geometry} which underlies the next section.

\section{Localization and path space contraction}
\label{sec:brane-localization}

The exotic twisted theory constructed in the preceding section lives naturally
on the infinite-dimensional brane space $S^2M$. The purpose of the present
section is to examine the cohomological content of this construction, and to
show that, once we pass to the completed periodic localization, the theory
is isomorphic to the ordinary $H$-twisted theory of $M$
itself. Seen in this light, the whole construction may be regarded quite
naturally as a delocalization of the twisted cohomology theory of the
constant brane space, that is to say, of the manifold $M$ itself, out onto
the full brane space $S^2M$.

The argument proceeds in two stages. We begin by recalling the localization
principle for a general circle manifold equipped with an equivariantly flat
superconnection, developed in \cite{HM15,HM24}. Armed with this principle, we
then identify the fixed point set of $S^2M$ with the free path space $PM$,
and observe that this free path space contracts naturally onto the space of
constant paths, i.e. $M$ itself.

\subsection{The localization principle}

We begin by recalling the exotic twisted completed periodic localization
theory developed in \cite{HM15,HM24}. Rather than reproduce the theory in
full, we shall content ourselves with drawing out its essential features,
retaining only what is needed for our later purposes; the reader who wishes
to see the complete picture, with all its proofs and refinements, is
referred to \cite{HM15,HM24}.

Let $Y$ be a possibly infinite dimensional
$\mathbb T$-manifold, and let $K$ be the vector field which generates the circle
action.  We work with smooth differential forms and assume the
availability of smooth partitions of unity and an invariant weak
Riemannian metric, as for the smooth mapping spaces considered here.
All completions have the meaning fixed in
Section~\ref{subsec:brane-periodic-completion}.  Suppose that $\xi\to Y$ is a $\mathbb T$ equivariant complex line
bundle with invariant connection $\nabla^\xi$, and that
$\chi\in\Omega^3(Y)$ is closed and invariant.  Introduce the odd operator
\begin{equation}
 D_{\xi,\chi}
 =\nabla^\xi-u\iota_K+u^{-1}\chi
 \label{eq:general-superconnection}
\end{equation}
on $\Omega^\bullet(Y,\xi)[[u,u^{-1}]]$, where $u$ has degree $2$.  We assume
that
\begin{equation}
 D_{\xi,\chi}^2+u\mathcal L_K^\xi=0.
 \label{eq:general-flatness}
\end{equation}
Thus $D_{\xi,\chi}$ is an equivariantly flat superconnection and
defines a differential on the invariant subspace.  Its cohomology is denoted by
\begin{equation}
 \begin{split}
 h^\bullet_{\mathbb T}(Y,\nabla^\xi:\chi)
 &=
 H^\bullet\Bigl(
 \Omega^\bullet(Y,\xi)^{\mathbb T}[[u,u^{-1}]],
 D_{\xi,\chi}
 \Bigr).
 \end{split}
 \label{eq:general-exotic-theory}
\end{equation}

The fixed point theorem requires a mild regularity condition near the fixed
point set.  It allows us to compare a tubular neighborhood with the fixed points themselves without losing control of the coefficient data.

\begin{definition}
Let $F=Y^{\mathbb T}$.  The $\mathbb T$-manifold $Y$ is called
\emph{strongly regular} if $F$ is a smooth submanifold and has an invariant
neighborhood $N$ with the following properties.
\begin{enumerate}[(i)]
\item The neighborhood $N$ admits an invariant good cover.
\item There are an invariant projection $p:N\to F$ and an equivariant homotopy
\begin{equation}
 g:N\times[0,1]\longrightarrow N
\end{equation}
such that
\begin{equation}
 g_0=i\circ p,
 \qquad
 g_1=\operatorname{id}_N,
 \qquad
 p(g_t(y))=p(y).
 \label{eq:strong-regular-homotopy}
\end{equation}
Here $i:F\to N$ is the inclusion.
\end{enumerate}
\end{definition}

The localization argument has four ingredients.  The first explains
why the completed periodic theory is concentrated at the fixed points.

\begin{lemma}[Acyclicity away from the fixed points]
\label{lem:no-fixed-points}
If the $\mathbb T$-action on $Y$ has no fixed points, then
\begin{equation}
 h^\bullet_{\mathbb T}(Y,\nabla^\xi:\chi)=0.
\end{equation}
\end{lemma}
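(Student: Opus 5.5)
The plan is the standard Witten-type homotopy argument: build an invariant one-form $\theta$ with $\iota_K\theta=1$, and use it to produce an explicit contracting homotopy for $D_{\xi,\chi}$ on the invariant complex. Since the action has no fixed points, $K$ vanishes nowhere. Using the invariant weak Riemannian metric $g$, I would set $\theta_0=g(K,\cdot)/g(K,K)$. This is a smooth one-form because $g(K,K)>0$ pointwise, and averaging it over $\mathbb T$ gives an invariant $\theta$ with $\iota_K\theta=1$ (averaging preserves this since $K$ is invariant). I would then take $\vartheta=d\theta$, which is invariant, and compute
\[
 (d-u\iota_K)\theta=d\theta-u.
\]
Because $\iota_K\vartheta=\mathcal L_K\theta-d\iota_K\theta=0$, we have $(d-u\iota_K)\vartheta=0$.

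Next, on invariant forms with coefficients in $\xi$, I would use the operator of multiplication by $\theta$, which is odd and $\mathbb T$-equivariant. A direct computation gives the graded commutator
\[
 [D_{\xi,\chi},\theta]=d\theta-u\iota_K\theta=d\theta-u.
\]
The $\nabla^\xi$-part contributes $d\theta$ and the $u^{-1}\chi$ term contributes nothing, since multiplication by even and odd forms graded-commutes. So on a $D_{\xi,\chi}$-closed invariant $\omega$,
\[
 D_{\xi,\chi}(\theta\omega)=(d\theta-u)\omega .
\]
The aim is to invert the multiplication operator $u-d\theta=u(1-u^{-1}d\theta)$ by the series
\[
 u^{-1}\sum_{n\ge0}(u^{-1}d\theta)^n .
\]
In the completed space of Section~\ref{subsec:brane-periodic-completion}, the $n$-th term raises ordinary form degree by $2n$ while keeping total degree fixed. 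Hence in each ordinary degree only finitely many terms contribute, and the series is a well-defined element of the product. This is exactly where the completion in form degree, rather than polynomial in $u$, is used, and I expect it to be the main (if mild) technical point.

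Let $\Theta=-\theta\,(u-d\theta)^{-1}$. Since $d\theta$ commutes with $D_{\xi,\chi}$ (it is even, closed, and annihilated by $\iota_K$), $(u-d\theta)^{-1}$ also commutes with $D_{\xi,\chi}$. It follows that $[D_{\xi,\chi},\Theta]=\mathrm{id}$ on invariant forms. Therefore every closed invariant $\omega$ satisfies $\omega=D_{\xi,\chi}(\Theta\omega)$, with $\Theta\omega$ invariant, and the cohomology vanishes.

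Besides the convergence issue, the only points to check are smoothness of $\theta$ on the Fréchet manifold and the sign conventions in the graded commutator. Both are routine given the standing assumptions on invariant metrics.
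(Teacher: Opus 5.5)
Your proposal is correct and is essentially the paper's proof. The paper uses the same invariant one-form $K^\flat/\lVert K\rVert^2$ (which it calls $\vartheta$, where you write $\theta$) and inverts $\delta\vartheta=d\vartheta-u$ by the same geometric series in the form-degree completion. Its contracting homotopy, multiplication by $\vartheta(\delta\vartheta)^{-1}$, is exactly your $\Theta=-\theta(u-d\theta)^{-1}$.
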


\begin{proof}
Choose an invariant metric on $Y$.  Since $K$ is nowhere zero, the one-form
\begin{equation}
 \vartheta=\frac{K^\flat}{\lVert K\rVert^2}
\end{equation}
is invariant and satisfies $\iota_K\vartheta=1$.  Put
$\delta=d-u\iota_K$.  In the completed periodic algebra the even form
\begin{equation}
 \Gamma=(\delta\vartheta)^{-1}
 =-u^{-1}\sum_{j=0}^{\infty}u^{-j}(d\vartheta)^j
 \label{eq:inverse-delta-theta}
\end{equation}
is well defined.  Since $\mathcal L_K\vartheta=0$, one has
$\delta\Gamma=0$.  The odd form
\[
 \omega=\vartheta\Gamma
\]
therefore satisfies $\delta\omega=1$.  If
$a\in\Omega^\bullet(Y,\xi)^{\mathbb T}[[u,u^{-1}]]$, graded commutativity gives
\begin{equation}
 D_{\xi,\chi}(\omega a)+\omega D_{\xi,\chi}a=a.
\end{equation}
Multiplication by $\omega$ is consequently a contracting homotopy.
\end{proof}

The geometric series in \eqref{eq:inverse-delta-theta} is where the completion enters the argument.  It provides an inverse to the equivariant
Euler term wherever the generating vector field does not vanish.  This is the local algebra behind Borel-Witten localization.

The second ingredient is the familiar gluing property.

\begin{lemma}[Mayer-Vietoris]
\label{lem:mayer-vietoris}
Suppose that $Y=\sU\cup\sV$, where $\sU$, $\sV$, and
$\sU\cap\sV$ are invariant open subsets.  The complex
\eqref{eq:general-exotic-theory} gives rise to a Mayer-Vietoris long exact
sequence.
\end{lemma}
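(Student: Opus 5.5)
The plan is the standard construction: build a short exact sequence of complexes from restriction maps and check that the differential $D_{\xi,\chi}$ is compatible with it. Write
\[
 C^\bullet(W)=\Omega^\bullet(W,\xi|_W)^{\mathbb T}[[u,u^{-1}]]
\]
for an invariant open set $W\subset Y$, with differential $D_{\xi,\chi}$. The candidate sequence is
\[
 0\longrightarrow C^\bullet(Y)\xrightarrow{\ (a\mapsto (a|_{\sU},a|_{\sV}))\ }
 C^\bullet(\sU)\oplus C^\bullet(\sV)
 \xrightarrow{\ (b,c)\mapsto b|_{\sU\cap\sV}-c|_{\sU\cap\sV}\ }
 C^\bullet(\sU\cap\sV)\longrightarrow0 .
\]

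First I check that these are chain maps. The operators $\nabla^\xi$, $\iota_K$ and multiplication by $\chi$ are all local and commute with restriction to open invariant subsets, so $D_{\xi,\chi}$ does too. Restriction also preserves $\mathbb T$-invariance, because the subsets are invariant. Since $D_{\xi,\chi}$ and the restrictions act componentwise in ordinary form degree, everything passes to the completion at each fixed total degree.

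Exactness on the left and in the middle is the sheaf property of smooth forms with coefficients in $\xi$. A pair that agrees on the overlap glues uniquely, and the glued form is invariant because each piece is. All of this holds degree by degree in $u$.

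The only real step is surjectivity on the right, while keeping invariance. I would take a smooth partition of unity $\{\phi_{\sU},\phi_{\sV}\}$ subordinate to $\{\sU,\sV\}$, whose existence the paper assumes. Averaging over $\mathbb T$ gives $\bar\phi=\int_{\mathbb T}a^*\phi\,da$. Since $\sU$ and $\sV$ are invariant, the averaged functions keep support in $\sU$ and $\sV$ respectively, still sum to $1$, and are smooth and invariant.

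Given $\omega\in C^\bullet(\sU\cap\sV)$, set $b=\bar\phi_{\sV}\,\omega$ extended by zero to $\sU$, and $c=-\bar\phi_{\sU}\,\omega$ extended by zero to $\sV$. Both are invariant, and $b-c=\omega$ on the overlap. Multiplication by a function acts componentwise, so this also works in the completed product. The one point to watch is that extension by zero is smooth. This holds as usual: the support of $\bar\phi_{\sV}$ is closed in $Y$ and contained in $\sV$, so $\bar\phi_{\sV}\omega$ vanishes near $\sU\setminus\sV$. In the Fréchet setting this uses only locality of smooth forms.

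With the short exact sequence of complexes in hand, the standard zig-zag lemma gives the long exact Mayer–Vietoris sequence in $h^\bullet_{\mathbb T}$. The connecting map is odd in total degree. It is given by $\omega\mapsto D_{\xi,\chi}(\bar\phi_{\sV}\omega)$ on $\sU$, glued with $-D_{\xi,\chi}(\bar\phi_{\sU}\omega)$ on $\sV$, which equals $d\bar\phi_{\sV}\wedge\omega$ on the overlap, because $\iota_K\bar\phi_{\sV}=0$ by invariance. I expect no genuine obstacle. The only care needed is the invariant averaging of the partition of unity, together with the observation that all constructions act degreewise and so are compatible with the form-degree completion.
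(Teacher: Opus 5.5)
Your proposal is correct and follows the same route as the paper. Both arguments use the restriction short exact sequence of $\mathbb T$-invariant $\xi$-valued forms, made surjective by a $\mathbb T$-averaged partition of unity, with $D_{\xi,\chi}$ commuting with all maps, and then apply the zig-zag lemma; your version spells out the details the paper leaves implicit. One small quibble: $\iota_K\bar\phi_{\sV}=0$ holds simply because $\bar\phi_{\sV}$ is a function, not because of invariance.
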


\begin{proof}
Choose a partition of unity subordinate to $\{\sU,\sV\}$ and average it over
$\mathbb T$.  The usual restriction sequence of invariant differential forms
with values in $\xi$ is then short exact, and all its maps commute with
$D_{\xi,\chi}$.  The associated long exact sequence is the required one.
\end{proof}

The remaining two ingredients concern homotopy.  We first record the easy case
in which all the coefficient data are pulled back from one end of a cylinder.

\begin{lemma}[Cylinder invariance]
\label{lem:completed-cylinder-invariance}
Let $\pi:Y\times[0,1]\to Y$ be the projection, with $\mathbb T$ acting trivially
on the second factor.  Pullback induces an isomorphism
\begin{equation}
 \pi^*:
 h^\bullet_{\mathbb T}(Y,\nabla^\xi:\chi)
 \xrightarrow{\ \cong\ }
 h^\bullet_{\mathbb T}
 (Y\times[0,1],\pi^*\nabla^\xi:\pi^*\chi).
 \label{eq:cylinder-invariance}
\end{equation}
\end{lemma}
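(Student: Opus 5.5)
The plan is to construct an explicit chain homotopy inverse to $\pi^*$, using the section $i_0:Y\to Y\times[0,1]$, $y\mapsto(y,0)$, together with a homotopy operator given by fibre integration over $[0,1]$. Since $\pi\circ i_0=\operatorname{id}_Y$, and since $i_0^*\pi^*\xi=\xi$ canonically with $i_0^*\pi^*\nabla^\xi=\nabla^\xi$ and $i_0^*\pi^*\chi=\chi$, we already have $i_0^*\pi^*=\operatorname{id}$ on the chain level. Both maps commute with the superconnections: $\pi$ and $i_0$ are equivariant, and they pull back the connection, the three-form and the generating vector field compatibly. The vector field on $Y\times[0,1]$ is $K$ with no $\partial_s$ component, because the circle acts trivially on $[0,1]$. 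It therefore remains to show that $\pi^*i_0^*$ is chain homotopic to the identity on $\Omega^\bullet(Y\times[0,1],\pi^*\xi)^{\mathbb T}[[u,u^{-1}]]$.

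For the homotopy, write a form on the cylinder as $\alpha=\alpha_1(s)+ds\wedge\alpha_2(s)$, where $\alpha_1(s),\alpha_2(s)$ are $s$-dependent forms on $Y$ with values in $\xi$; this uses the canonical identification of the fibres of $\pi^*\xi$ along each segment. Define
\[
(h\alpha)(s)=\int_0^s\alpha_2(r)\,dr,
\]
viewed as a form on the cylinder and extended $u$-linearly and componentwise in form degree. This is compatible with the completion, since $h$ lowers form degree by one and acts degreewise. The standard computation gives $[d_s,h]=\operatorname{id}-\pi^*i_0^*$, where $d_s=ds\wedge\partial_s$. I would then check that $h$ graded-commutes, up to the appropriate sign convention for $h$, with the remaining pieces of $\pi^*D_{\xi,\chi}-d_s$. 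These pieces are the $Y$-directional part of $\pi^*\nabla^\xi$ (the fibres are identified constantly in $s$, so the connection has no $ds$ term), the operator $-u\iota_K$ (and $\iota_K ds=0$), and multiplication by $u^{-1}\pi^*\chi$, which has no $ds$ component. Each of these operators acts pointwise in $s$ and commutes with $\int_0^s dr$, so
\[
\pi^*D_{\xi,\chi}\,h+h\,\pi^*D_{\xi,\chi}=\operatorname{id}-\pi^*i_0^*.
\]
Finally, $h$ preserves $\mathbb T$-invariance, because the action is trivial in $s$, so the identity holds on the invariant subcomplex.

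The main obstacle is the bookkeeping of signs when $h$ is commuted past the odd operators $\nabla^\xi$, $\iota_K$ and $\chi\wedge$. One convention must be fixed: for instance, always put $ds$ leftmost, and let $h$ strip it off with a sign $(-1)$ relative to odd operators. With that fixed, one verifies that the mixed terms cancel rather than add. A secondary issue is justifying differentiation and integration under the integral sign on the Fréchet manifold $Y$. This is routine because everything is smooth in $s$ on a compact interval, and it is applied componentwise in each form degree.
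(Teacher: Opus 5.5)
Your proposal is correct and takes essentially the same route as the paper. The interval homotopy operator $h$, which is the paper's $K_0$, gives $\pi^*i_0^*\simeq\operatorname{id}$. Because $\iota_K$, multiplication by $\pi^*\chi$ and the $Y$-directional part of $\pi^*\nabla^\xi$ have no $ds$-component, they anticommute with $h$, so $h$ remains a chain homotopy for the completed periodic differential on invariant forms. Your explicit sign check and your remark that $h$ acts degreewise on the completion spell out details the paper's proof leaves implicit.
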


\begin{proof}
The usual integration operator in the interval direction gives a chain
homotopy between the identity and $\pi^*i_0^*$.  Since
$\iota_K$ and multiplication by $\pi^*\chi$ are horizontal, the same operator
is a chain homotopy for the completed periodic differential.  Thus $\pi^*$ and
$i_0^*$ induce inverse maps on cohomology.
\end{proof}

For localization, however, the pulled back bundle $g^*\xi$ on a homotopy
cylinder need not be the pullback of its restriction at either endpoint as a
bundle with connection.  Moreover, the 3-form $g^*\chi$ also varies with
the homotopy parameter.  The following endpoint lemma is the delicate part of
the theory.  It is the result which allows a geometric deformation to carry
the complete twisting data with it.

\begin{lemma}[Endpoint invariance]
\label{lem:equivariant-endpoint-invariance}
Let $Y$ admit an invariant good cover, and let
$i_t:Y\to Y\times[0,1]$ denote inclusion at $t$.  Suppose that
$\Xi\to Y\times[0,1]$ is a $\mathbb T$-equivariant complex line bundle with
invariant connection $\nabla^\Xi$, and that
$\widetilde\chi\in\Omega^3(Y\times[0,1])$ is closed and invariant.  If
\begin{equation}
 \left(\nabla^\Xi-u\iota_K+u^{-1}\widetilde\chi\right)^2
 +u\mathcal L_K^\Xi=0,
 \label{eq:cylinder-general-flatness}
\end{equation}
then there is an isomorphism
\begin{equation}
 \rho_{10}:
 h^\bullet_{\mathbb T}
 (Y,i_0^*\nabla^\Xi:i_0^*\widetilde\chi)
 \xrightarrow{\ \cong\ }
 h^\bullet_{\mathbb T}
 (Y,i_1^*\nabla^\Xi:i_1^*\widetilde\chi).
 \label{eq:endpoint-invariance}
\end{equation}
\end{lemma}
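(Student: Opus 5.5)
Our plan is to prove the endpoint invariance lemma by trivializing the $[0,1]$-direction of the coefficient data and then applying the cylinder invariance of Lemma \ref{lem:completed-cylinder-invariance}. In one sentence: we shall exhibit an isomorphism of complexes on $Y\times[0,1]$ between the complex defined by $(\Xi,\nabla^\Xi,\widetilde\chi)$ and the complex pulled back from $Y\times\{0\}$. Restricting this isomorphism to $t=1$ then produces $\rho_{10}$.

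The first step is to normalize the bundle. Write $s$ for the coordinate on $[0,1]$. Parallel transport of $\nabla^\Xi$ along the segments $\{y\}\times[0,1]$ gives an isomorphism $\Phi:\pi^*i_0^*\Xi\to\Xi$. It is $\mathbb T$-equivariant because the action is trivial on $[0,1]$ and $\nabla^\Xi$ is invariant. Under $\Phi$ we have
\[
 \nabla^\Xi=\pi^*i_0^*\nabla^\Xi+\alpha,
\]
where $\alpha$ is an invariant imaginary one-form with $\iota_{\partial_s}\alpha=0$, since the transported sections are parallel in the $s$-direction.

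The second step extracts the relations. Expanding \eqref{eq:cylinder-general-flatness} and separating the powers of $u$, together with $\iota_K^2=0$, shows two things: the curvature equals $\iota_K\widetilde\chi$, and $\widetilde\chi$ is closed. Put $\chi_0=\pi^*i_0^*\widetilde\chi$. Since $d(\widetilde\chi-\chi_0)=0$ and $i_0^*(\widetilde\chi-\chi_0)=0$, the homotopy formula gives
\[
 \widetilde\chi-\chi_0=d\beta,
 \qquad
 \beta=\int_0^s\iota_{\partial_s}\widetilde\chi .
\]
The form $\beta$ is invariant and satisfies $i_0^*\beta=0$.

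The third step, which we expect to be the main obstacle, is to match $\alpha$ with $\beta$. The natural candidate is $\alpha=-\iota_K\beta$, mirroring Proposition \ref{connection}. We can check this in three parts. By the curvature identities, $d\alpha=\iota_K(\widetilde\chi-\chi_0)=\iota_K d\beta=-d\iota_K\beta$, where the last equality uses $\mathcal L_K\beta=0$. Both $\alpha$ and $-\iota_K\beta$ have no $ds$-component. Both vanish at $s=0$. Write $\gamma=\alpha+\iota_K\beta$. Then $d\gamma=0$, and the $ds\wedge$-part of $d\gamma$ is $ds\wedge\partial_s\gamma$, so $\partial_s\gamma=0$ and hence $\gamma=0$. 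Granting this, conjugation by $e^{-u^{-1}\beta}$ intertwines the operators:
\[
 e^{u^{-1}\beta}\bigl(\nabla^{\pi^*i_0^*\Xi}-u\iota_K+u^{-1}\chi_0\bigr)e^{-u^{-1}\beta}
 =\nabla^{\pi^*i_0^*\Xi}-u\iota_K+u^{-1}\chi_0-\iota_K\beta+u^{-1}d\beta .
\]
The right-hand side is exactly the operator of $(\Xi,\nabla^\Xi,\widetilde\chi)$. We must still verify that conjugation by $e^{-u^{-1}\beta}$ is well defined on the completed algebra. This holds because $\beta$ has even degree $2$, so each power of $u^{-1}\beta$ raises ordinary form degree and each form degree receives only finitely many terms. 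We also need invariance, which is preserved.

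The last step assembles the isomorphism. Compose the cylinder isomorphism $\pi^*$ of Lemma \ref{lem:completed-cylinder-invariance} with the gauge map $e^{-u^{-1}\beta}\Phi$ and then with restriction $i_1^*$. This gives
\[
 \rho_{10}:\ \omega\longmapsto e^{-u^{-1}i_1^*\beta}\,\Phi_1\,\omega ,
\]
where $\Phi_1$ is transport along $[0,1]$. This map is an isomorphism of complexes from $(i_0^*\Xi,i_0^*\widetilde\chi)$ to $(i_1^*\Xi,i_1^*\widetilde\chi)$, and it is invertible at the chain level. In particular, this route does not need the invariant good cover hypothesis. We keep that hypothesis available only as a fallback: if the global transport argument fails, we would argue chart by chart and glue the results using Lemma \ref{lem:mayer-vietoris}.
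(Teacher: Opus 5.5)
Your proposal is correct and is essentially the paper's own proof. Both arguments use parallel transport in the interval direction, take $\beta=K_0\widetilde\chi$ with $d\beta=\widetilde\chi-\chi_0$, read off $\alpha=-\iota_K\beta$ from the $ds$-part of the curvature identity, and arrive at $\rho_{10}(\omega)=e^{-u^{-1}i_1^*\beta}\Phi_1\omega$, which is exactly the paper's $i_1^*\bigl(e^{-u^{-1}K_0\widetilde\chi}\tau(\omega)\bigr)$. The detour through cylinder invariance is not needed, since this composite is visibly a chain isomorphism.

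There is one sign slip to fix. Put $D_0=\nabla^{\pi^*i_0^*\Xi}-u\iota_K+u^{-1}\chi_0$. The correct identity is $e^{-u^{-1}\beta}D_0e^{u^{-1}\beta}=D_0+u^{-1}d\beta-\iota_K\beta$, whereas your displayed $e^{u^{-1}\beta}D_0e^{-u^{-1}\beta}$ actually equals $D_0-u^{-1}d\beta+\iota_K\beta$. It is the corrected identity that makes multiplication by $e^{-u^{-1}\beta}$, as in your final formula, a chain map.
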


\begin{proof}
We recall the construction, since it contains the essential cancellation.
Parallel transport for $\nabla^\Xi$ in the interval direction identifies the
underlying coefficient bundles at the two ends.  Let $\tau(\alpha)$ denote the
parallel extension to $Y\times[0,1]$ of a form
$\alpha\in\Omega^\bullet(Y,i_0^*\Xi)$.

Let $K_0$ be the usual homotopy operator in the interval direction.  If a
differential form is written as
\[
 \alpha(t)+dt\wedge\beta(t),
\]
then
\begin{equation}
 K_0\bigl(\alpha(t)+dt\wedge\beta(t)\bigr)
 =\int_0^t\beta(s)\,ds.
 \label{eq:interval-homotopy-operator}
\end{equation}
It satisfies
\begin{equation}
 dK_0+K_0d
 =\operatorname{id}-\pi^*i_0^*.
 \label{eq:interval-homotopy-formula}
\end{equation}
The endpoint map is represented by
\begin{equation}
 \rho_{10}(\alpha)
 =
 i_1^*\left(
 e^{-u^{-1}K_0\widetilde\chi}\tau(\alpha)
 \right).
 \label{eq:endpoint-map-formula}
\end{equation}

In the parallel-transport identification, write the connection as
$\pi^*\nabla_0+A_t+dt\,\partial_t$, where $A_0=0$, and write
$\widetilde\chi=\chi_t+dt\wedge\psi_t$.  Put
$B_t=\int_0^t\psi_s\,ds$.  Closedness of $\widetilde\chi$ and
the curvature part of \eqref{eq:cylinder-general-flatness} give
\[
 d_YB_t=\chi_t-\chi_0,\qquad
 \partial_tA_t=-\iota_K\psi_t,\qquad A_t=-\iota_KB_t.
\]
The sign in the middle identity comes from
$\iota_K(dt\wedge\psi_t)=-dt\wedge\iota_K\psi_t$.
Consequently the endpoint operators satisfy
\[
 D_t=e^{-u^{-1}B_t}D_0e^{u^{-1}B_t}.
\]
This identity accounts for both variations: $d_YB_t$ corrects the
flux, and $-\iota_KB_t$ corrects the connection.  Since the
identification is by equivariant parallel transport, it is global and
preserves invariant forms.  At $t=1$ it proves
\eqref{eq:endpoint-map-formula}; multiplication by the opposite
exponential followed by inverse parallel transport gives its inverse.
\end{proof}

\begin{remark}
Lemma \ref{lem:equivariant-endpoint-invariance} is stronger than ordinary
homotopy invariance.  The bundle, its connection, and the twisting three-form
are all allowed to vary.  Parallel transport controls the bundle, the
exponential $e^{-u^{-1}K_0\widetilde\chi}$ controls the 3-form, and
equivariant flatness is precisely the identity which makes the two corrections
compatible.  It is this mechanism that permits localization with nontrivial coefficient line bundles.
\end{remark}

We can now compare a strongly regular neighborhood with its fixed point set.

\begin{lemma}[Equivariant retraction]
\label{lem:equivariant-retraction}
Let $N$ be an invariant neighborhood of $F$ with the data in
\eqref{eq:strong-regular-homotopy}.  Restriction induces an isomorphism
\begin{equation}
 \begin{split}
 i^*:h^\bullet_{\mathbb T}(N,\nabla^\xi:\chi)
 \longrightarrow
 H^\bullet\Bigl(
 \Omega^\bullet(F,i^*\xi)[[u,u^{-1}]],
 i^*\nabla^\xi+u^{-1}i^*\chi
 \Bigr).
 \end{split}
 \label{eq:retraction-isomorphism}
\end{equation}
\end{lemma}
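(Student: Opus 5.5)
We prove that $i^*$ is an isomorphism by factoring it through the cylinder $N\times[0,1]$ and applying Lemma~\ref{lem:equivariant-endpoint-invariance} together with Lemma~\ref{lem:completed-cylinder-invariance}. First we check that the target is a complex. On $F$ the vector field $K$ vanishes, so $i^*(D_{\xi,\chi})=i^*\nabla^\xi+u^{-1}i^*\chi$. The flatness identity \eqref{eq:general-flatness} restricts to $(i^*\nabla^\xi)^2=\iota_K(\cdots)|_F=0$, and $\mathcal L_K$ acts trivially on forms on $F$; the $\mathbb T$-action on $i^*\xi$ over $F$ is a fibrewise character, which is absorbed or trivial on invariant sections. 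Hence the target differential squares to zero, and restriction is a chain map from invariant forms.

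Next we pull all data back along the equivariant homotopy $g:N\times[0,1]\to N$ and set $\Xi=g^*\xi$, $\widetilde\chi=g^*\chi$. These satisfy \eqref{eq:cylinder-general-flatness}, since pullback along an equivariant map intertwines $K$ with the cylinder generator. At $t=1$ we recover $(\xi,\chi)$ on $N$, and at $t=0$ we get $p^*i^*\xi$ with $p^*i^*\chi$. By Lemma~\ref{lem:equivariant-endpoint-invariance}, $h_{\mathbb T}(N,\nabla^\xi:\chi)\cong h_{\mathbb T}(N,p^*i^*\nabla^\xi:p^*i^*\chi)$. Moreover, since $p\circ g_t=p$ and $g_t\circ i=i$ on $F$, the gauge correction $K_0\widetilde\chi$ and the parallel transport are trivial on $F$. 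Therefore this isomorphism is compatible with restriction to $F$, that is, $i^*\rho_{10}^{-1}=i^*$.

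It then remains to show that
\[
i^*:h_{\mathbb T}(N,p^*i^*\nabla^\xi:p^*i^*\chi)\to H^\bullet(\Omega^\bullet(F,i^*\xi)[[u,u^{-1}]],i^*\nabla^\xi+u^{-1}i^*\chi)
\]
is an isomorphism with inverse $p^*$. One composite is immediate: $i^*p^*=\mathrm{id}$ at the chain level. For the other, we must show that $p^*i^*$ is homotopic to the identity on the pulled-back complex. This is the main obstacle, because the differential on $N$ still contains $-u\iota_K$, while on $F$ it does not. The identity $p^*i^*\simeq\mathrm{id}$ is again a case of endpoint invariance for the family $g_t$, applied now to data pulled back via $p$. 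The key observation is that $p\circ g_t=p$ makes $g_t^*p^*=p^*$, so the cylinder bundle is $\pi^*$ of a bundle on $N$. Lemma~\ref{lem:completed-cylinder-invariance} then shows that $g_1^*=\mathrm{id}$ and $g_0^*=p^*i^*$ agree in cohomology; the homotopy operator $K_0$ commutes appropriately with $\iota_K$ because $g$ is equivariant and $K$ is tangent to the slices.

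The remaining subtlety is that $p^*$ must land in invariant forms and be a chain map for the differential including $-u\iota_K$. Here $p$ is invariant, so $\iota_Kp^*\beta=p^*\iota_{p_*K}\beta=0$, since $p_*K=0$ because $p$ is constant on orbits. Hence $p^*$ intertwines $i^*\nabla^\xi+u^{-1}i^*\chi$ with the full $D$. Composing the two isomorphisms completes the proof. The invariant good cover in hypothesis (i) is used only to invoke Lemma~\ref{lem:equivariant-endpoint-invariance}.
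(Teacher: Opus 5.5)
Your proposal is correct and follows essentially the same route as the paper. You use endpoint invariance along $g$ to replace the data on $N$ by $p^*i^*(\xi,\nabla^\xi,\chi)$, stationarity of $g$ on $F$ to see that this identification commutes with $i^*$, and cylinder invariance together with $p\circ g_t=p$ to make $p^*$ and $i^*$ mutually inverse on the pulled-back theory.

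The one point to sharpen is your remark on the fibrewise character of $i^*\xi$. The $u$-linear part of \eqref{eq:general-flatness} forces $\mathcal L_K^\xi=\nabla^\xi\iota_K+\iota_K\nabla^\xi$, so the weight of the circle action on $\xi|_F$ vanishes. It is this, not the invariance of $p$ alone, that guarantees $p^*$ lands in invariant forms.
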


\begin{proof}
The relation $p\circ g_t=p$ shows, by cylinder invariance, that $p^*$ and $i^*$
are inverse on the theory formed from the pulled back data
$p^*i^*(\xi,\nabla^\xi,\chi)$.  Pulling the original data back by
$g:N\times[0,1]\to N$ gives an equivariantly flat family whose restriction at
$t=0$ is the pulled back fixed point data and whose restriction at $t=1$ is the
original data on $N$.  Lemma \ref{lem:equivariant-endpoint-invariance}
identifies the two endpoint theories.  The homotopy is stationary
on $F$, so this identification restricts to the identity there.
Composing with the pulled-back contraction therefore gives restriction
$i^*$ itself as the isomorphism in
\eqref{eq:retraction-isomorphism}.
\end{proof}

The localization theorem now follows from these structural facts.

\begin{theorem}[Exotic twisted Borel-Witten localization]
\label{thm:general-localization}
If $Y$ is strongly regular, restriction to the fixed point set induces an
isomorphism
\begin{equation}
 \begin{split}
 i^*:h^\bullet_{\mathbb T}(Y,\nabla^\xi:\chi)
 \longrightarrow
 H^\bullet\Bigl(
 \Omega^\bullet(F,i^*\xi)[[u,u^{-1}]],
 i^*\nabla^\xi+u^{-1}i^*\chi
 \Bigr).
 \end{split}
 \label{eq:general-localization-isomorphism}
\end{equation}
\end{theorem}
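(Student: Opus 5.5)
The plan is to assemble Theorem \ref{thm:general-localization} from the four ingredients already recorded: Lemma \ref{lem:no-fixed-points}, Lemma \ref{lem:mayer-vietoris}, Lemma \ref{lem:equivariant-retraction}, and the cylinder and endpoint lemmas that feed into it. Let $N$ be the invariant neighborhood of $F$ provided by strong regularity. I would cover $Y$ by the two invariant open sets $N$ and $\sV=Y\setminus\overline{N'}$. Here $N'\subset N$ is a smaller closed invariant neighborhood of $F$, for instance the set where an averaged bump function supported in $N$ and equal to $1$ near $F$ is at least $1/2$. Then $\sV$ and $N\cap\sV$ are invariant open sets on which the circle acts without fixed points.

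First, Lemma \ref{lem:no-fixed-points} applied to the restricted data on $\sV$ and on $N\cap\sV$ gives $h^\bullet_{\mathbb T}(\sV,\nabla^\xi:\chi)=0$ and $h^\bullet_{\mathbb T}(N\cap\sV,\nabla^\xi:\chi)=0$. The flatness identity \eqref{eq:general-flatness} restricts to open sets, so the lemma applies. It needs only an invariant metric on these sets, which is among the standing assumptions.

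Second, the Mayer--Vietoris sequence of Lemma \ref{lem:mayer-vietoris} reads
\[
\cdots\to h^{k-1}_{\mathbb T}(N\cap\sV)\to h^k_{\mathbb T}(Y)\to h^k_{\mathbb T}(N)\oplus h^k_{\mathbb T}(\sV)\to h^k_{\mathbb T}(N\cap\sV)\to\cdots.
\]
Since the $\sV$ and $N\cap\sV$ terms vanish, restriction $h^\bullet_{\mathbb T}(Y)\to h^\bullet_{\mathbb T}(N)$ is an isomorphism.

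Third, Lemma \ref{lem:equivariant-retraction} says that $i^*:h^\bullet_{\mathbb T}(N)\to H^\bullet(\Omega^\bullet(F,i^*\xi)[[u,u^{-1}]],i^*\nabla^\xi+u^{-1}i^*\chi)$ is an isomorphism. On $F$ the field $K$ vanishes, so $-u\iota_K$ drops out. The action on $F$ is trivial, so all forms on $F$ are invariant. Composing the two isomorphisms, restriction $Y\to N\to F$ equals $i^*$, which proves the theorem.

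I expect the main care to be needed in the first step, not in the formal assembly. The concern is that the completed periodic setting must allow the geometric series \eqref{eq:inverse-delta-theta} and the Mayer--Vietoris partition of unity to coexist. Concretely, the contracting homotopy on $N\cap\sV$ should be built using a metric restricted from $Y$. Mayer--Vietoris must also be exact at the level of the completed complexes. This holds because the completion is a product over ordinary form degree, and both restriction and multiplication by cutoffs act degreewise. I would verify this degreewise exactness first.

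A second point to check is that $\sV$ really contains no fixed points, which holds since $F\subset N'$. The infinite-dimensional Fréchet setting causes no trouble, because only smooth invariant partitions of unity and metrics are used, and these are assumed available.
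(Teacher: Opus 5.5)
Your proposal is correct and follows the paper's proof: Mayer--Vietoris for a cover of $Y$ by $N$ and a fixed-point-free invariant open set, acyclicity from Lemma \ref{lem:no-fixed-points}, and then Lemma \ref{lem:equivariant-retraction}. The only difference is that the paper takes the second open set to be $Y\setminus F$ directly, which is open because $F$ is closed, so your shrunken neighborhood $N'$ and bump function are harmless but unnecessary.
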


\begin{proof}
Let $N$ be the invariant neighborhood appearing in the definition of strong
regularity.  Apply the Mayer-Vietoris sequence to
\begin{equation}
 Y=N\cup(Y\setminus F).
\end{equation}
Both $Y\setminus F$ and $N\setminus F$ are fixed point free.  Their completed
periodic cohomology therefore vanishes by Lemma
\ref{lem:no-fixed-points}.  Mayer-Vietoris identifies the theory of $Y$ with
that of $N$, and Lemma \ref{lem:equivariant-retraction} identifies the latter
with the theory of $F$.
\end{proof}

The theorem holds for general $Y$; on the brane space it takes a concrete geometric form.  The fixed maps are not yet the points of $M$.  They are maps
which may vary in the latitude direction, and hence form a path space.  We now
make this identification precise.

\subsection{Path space as the fixed point set of the brane space}

Let
\begin{equation}
 PM=C^\infty([-1,1],M)
 \label{eq:free-path-space}
\end{equation}
be the free path space.  If $z:S^2\to[-1,1]$ is the height function, define
\begin{equation}
 j:PM\longrightarrow S^2M,
 \qquad
 j(\gamma)=\gamma\circ z.
 \label{eq:path-fixed-inclusion}
\end{equation}

\begin{proposition}
\label{prop:brane-fixed-points}
The fixed point set of the circle action on $S^2M$ is $j(PM)$.  Thus
\begin{equation}
 (S^2M)^{\mathbb T}\simeq PM.
 \label{eq:fixed-path-space}
\end{equation}
Moreover, $S^2M$ is strongly regular in the sense of Theorem
\ref{thm:general-localization}.
\end{proposition}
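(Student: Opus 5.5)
The plan has two parts. First we identify the fixed point set with $j(PM)$. Then we verify strong regularity by building an explicit equivariant tubular neighbourhood together with its retraction.

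For the fixed points, let $x\in S^2M$ with $a\cdot x=x$ for all $a\in\mathbb T$. Then $x(\theta+a,t)=x(\theta,t)$, so $x$ is constant on every latitude circle $c_t(\mathbb T)$. It therefore factors through the orbit map $S^2\to S^2/\mathbb T\cong[-1,1]$, which is the height function $z$, giving $x=\gamma\circ z$ for a map $\gamma:[-1,1]\to M$. Conversely every $j(\gamma)$ is invariant. The step needing care is smoothness: $x$ smooth should correspond to $\gamma$ smooth on the closed interval. Away from the poles this is clear, since $\gamma(s)=x(\theta_0,\arcsin s)$ in coordinates. At the poles we use the classical fact (Whitney's theorem on even functions, via $z=\pm\sqrt{1-r^2}$ near a pole) that a smooth rotation-invariant function near a pole is a smooth function of $r^2$, equivalently of $z$. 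This gives a bijection of $PM$ onto the closed subset $(S^2M)^{\mathbb T}$. The map $j$ is a linear-type embedding in Fr\'echet charts, because composition with $z$ is continuous and the inverse is given by restriction to a meridian. So $j(PM)$ is a smooth submanifold, and this proves \eqref{eq:fixed-path-space}.

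For strong regularity, we fix an invariant Riemannian metric on $M$ and work with exponential charts around a fixed sphere. Define the averaging projection $x\mapsto\bar x$ by averaging $x$ over each latitude circle inside a chart. This is not intrinsic, so instead we use the linear structure of the tangent space. Let $N$ be the set of $x$ such that, for every $t$, the latitude image $x(c_t(\mathbb T))$ lies in a geodesically convex ball. Concretely, put
\[
p(x)(t)=\text{Riemannian center of mass of }\theta\mapsto x(c_t(\theta)).
\]
For sufficiently small convex balls this center of mass is well defined and depends smoothly on $x$ and $t$. It is invariant under rotation and equals $x$ on fixed spheres. The homotopy is then
\[
g_s(x)(\theta,t)=\exp_{p(x)(t)}\bigl(s\exp^{-1}_{p(x)(t)}x(\theta,t)\bigr).
\]
This is equivariant, satisfies $g_0=i\circ p$ and $g_1=\mathrm{id}$, and has $p(g_s x)=p(x)$, because scaling toward the center of mass preserves the center of mass. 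The neighbourhood $N$ is open and invariant. An invariant good cover of $N$ is obtained by pulling back, along $p$, a good cover of $PM$ by contractible chart neighbourhoods, made invariant in the usual way.

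We expect the main obstacles to be regularity at the poles, in two places. First, $p(x)$ must be a smooth path on the closed interval $[-1,1]$: as $t\to\pm\pi/2$ the latitude circles shrink to a point, and we need the center of mass to be smooth as a function of $z$ there. Second, $g_s(x)$ must be a smooth sphere map. We would handle both by the same even-function argument in polar coordinates used for the fixed points, together with the observation that the center of mass of a smooth family of loops shrinking to a point is smooth in $r^2$.
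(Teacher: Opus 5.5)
Your route is the paper's. Fixed maps factor through the height function, $p$ is the Riemannian centre of mass over the orbit circles, and $g_s$ is radial scaling of $\exp^{-1}_{p(x)}x$, which preserves the centre. The good cover of $N$ is pulled back along $p$.

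\textbf{The gap: the good cover of $PM$.} A cover by contractible chart domains need not have contractible finite intersections, so it is not a good cover. The paper instead uses pointwise geodesically convex neighbourhoods
\[
\{\sigma: d(\sigma(z),\gamma(z))<\varepsilon \text{ for all } z\}
\]
with small $\varepsilon$. A finite intersection of such sets is again pointwise convex, so it contracts to any of its members by pointwise geodesics. You also need to say why the pulled-back sets form a good cover of $N$. Since $p\circ g_s=p$, $g_0=i\circ p$, and $g_s$ fixes $F$, the homotopy $g$ deformation retracts $p^{-1}(W)$ onto $W$ for every open $W\subset PM$. Contractibility therefore passes to $p^{-1}(W_1\cap\dots\cap W_k)$. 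Invariance is automatic because $p$ is invariant, so nothing needs to be ``made invariant''.

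\textbf{The pole regularity you leave open.} It disappears if you define the centre on the sphere rather than as a function of latitude. Let $p(x)(q)$ be the centre of mass of $\theta\mapsto x(\theta\cdot q)$. The family $(q,\theta)\mapsto x(\theta\cdot q)$ is smooth on $S^2\times\mathbb T$, poles included. At a pole the loop is constant, and the linearization of $c\mapsto\int_{\mathbb T}\exp_c^{-1}x(\theta\cdot q)\,d\theta$ is $-\mathrm{id}$. The implicit function theorem then makes $p(x)$ a smooth invariant sphere map, and your first paragraph turns it into a smooth path in the height variable. The homotopy $g_s(x)(q)=\exp_{p(x)(q)}\bigl(s\exp^{-1}_{p(x)(q)}x(q)\bigr)$ is then visibly smooth, so no separate ``smooth in $r^2$'' lemma is needed. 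This is how the paper argues.

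\textbf{Two smaller points in the first part.}
\begin{enumerate}[(i)]
\item Restricting $j(\gamma)$ to a meridian parametrized by latitude gives $\gamma\circ\sin$, not $\gamma$. The height function $z$ has no smooth section through its critical points at the poles. So continuity of $j^{-1}$ again rests on Whitney's theorem, not on a restriction.
\item A submanifold of a Fr\'echet manifold needs a complement to $T_xj(PM)$. The paper obtains it by averaging sections of $x^*TM$ over the circle. This linear averaging is intrinsic, because $x^*TM$ is pulled back from $[-1,1]$ when $x$ is fixed. That distinguishes it from the averaging of maps, which you rightly discarded.
\end{enumerate}
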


\begin{proof}
A map $x:S^2\to M$ is fixed if and only if it is constant on every orbit circle.
It therefore factors uniquely through the height function.  Smoothness at the
two poles is equivalent to smoothness of the resulting map as a function of the
height variable.  This proves \eqref{eq:fixed-path-space}.

Choose a Riemannian metric on $M$.  Along a fixed map $x$,
averaging sections over the source circle is a continuous projection
onto the invariant sections.  Its kernel is an invariant complement
to $T_xj(PM)$ and thus gives the normal bundle.

For completeness, the pointwise exponential construction can be made
uniform near each fixed map as follows.  For a nearby sphere map $y$,
each orbit image lies in a sufficiently small convex normal ball.
Take its Riemannian center of mass with normalized circle measure
\cite{Karcher}.  The center depends smoothly on the orbit image and
is characterized by the vanishing average of its logarithmic
displacements.  As the orbit varies, these centers give a smooth
invariant sphere map $p(y)$, hence a point of $j(PM)$.  Smooth
dependence follows from the finite dimensional implicit function
theorem in the convex balls, applied pointwise to smooth families.

The section $\zeta_y(q)=\exp_{p(y)(q)}^{-1}(y(q))$ has zero orbit
average.  Conversely, a sufficiently small section with zero orbit
average has $p(y)$ as the center of its pointwise exponential.
After shrinking the neighborhood, these constructions are inverse
normal coordinates.  Choose their domain to be star-shaped in each
normal fibre.  Scaling $\zeta_y$ by $s\in[0,1]$ then defines an
equivariant homotopy $g_s$ with
$g_0=ip$, $g_1=\operatorname{id}$ and $p g_s=p$.
The allowable radius may depend on the fixed map; a uniform radius
on the whole path space is unnecessary.

Finally, $PM$ has a good cover by sufficiently small pointwise
geodesically convex path neighborhoods.  Nonempty finite
intersections contract by pointwise geodesics to any chosen path
in the intersection.  Pulling this cover back by $p$ and using the
normal contraction gives an invariant good cover of the neighborhood.
This verifies the strong regularity required above.
\end{proof}

The restriction of the coefficient data is especially simple.  On the fixed
point set every orbit circle is constant.  Its holonomy is therefore trivial,
and the normalized logarithm used in the construction of $\mathcal L_H$
vanishes.  Consequently,
\begin{equation}
 j^*g_{\alpha\beta}=1,
 \qquad
 j^*\mathcal L_H\simeq PM\times\mathbb C,
 \qquad
 j^*\nabla^{\mathcal L_H}=d.
 \label{eq:fixed-line-trivialization}
\end{equation}
Under the coordinate $z=\sin t$, the restriction of $\overline H$ is
\begin{equation}
 H_P=\frac{1}{2}\int_{-1}^{1}\operatorname{ev}_z^*H\,dz
 \in\Omega^3(PM),
 \label{eq:path-average-H}
\end{equation}
where $\operatorname{ev}_z(\gamma)=\gamma(z)$.

\begin{theorem}
\label{thm:brane-to-path}
Restriction to the fixed point set induces an isomorphism
\begin{equation}
 \begin{split}
 j^*:h^\bullet_{\mathbb T}
 (S^2M,\nabla^{\mathcal L_H}:\overline H)
 \longrightarrow
 H^\bullet\Bigl(
 \Omega^\bullet(PM)[[u,u^{-1}]],
 d+u^{-1}H_P
 \Bigr).
 \end{split}
 \label{eq:brane-to-path-isomorphism}
\end{equation}
\end{theorem}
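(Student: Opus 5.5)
The plan is to apply Theorem~\ref{thm:general-localization} with $Y=S^2M$, $\xi=\mathcal L_H$, $\nabla^\xi=\nabla^{\mathcal L_H}$ and $\chi=\overline H$. The hypotheses are already available from earlier results.
\begin{itemize}
\item[(a)] Equivariant flatness \eqref{eq:general-flatness} is exactly \eqref{eq:brane-equivariant-flatness}.
\item[(b)] The bundle $\mathcal L_H$ is $\mathbb T$-equivariant with invariant connection by Proposition~\ref{connection}. The transition functions $g_{\alpha\beta}$ are invariant and the frames $s_\alpha$ may be chosen invariant.
\item[(c)] The form $\overline H$ is closed and invariant by \eqref{eq:average-identities}.
\item[(d)] Strong regularity of $S^2M$ and the identification $F=(S^2M)^{\mathbb T}=j(PM)$ are given by Proposition~\ref{prop:brane-fixed-points}.
\end{itemize}
Theorem~\ref{thm:general-localization} then gives an isomorphism from $h^\bullet_{\mathbb T}(S^2M,\nabla^{\mathcal L_H}:\overline H)$ onto the cohomology of $\Omega^\bullet(PM,j^*\mathcal L_H)[[u,u^{-1}]]$ with differential $j^*\nabla^{\mathcal L_H}+u^{-1}j^*\overline H$. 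Note that $\iota_K$ drops out on the fixed set because $K$ vanishes there.

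The next step is to identify the restricted data, using \eqref{eq:fixed-line-trivialization} and \eqref{eq:path-average-H}. On $j(PM)$ every latitude image is a constant loop, so $h_{\alpha\beta}(t;j\gamma)=1$ for all $t$. Continuity of the normalized logarithm with $\ell_{\alpha\beta}(-\pi/2)=0$ forces $\ell_{\alpha\beta}\equiv0$, hence $j^*g_{\alpha\beta}=1$. The frames $j^*s_\alpha$ therefore glue to a global trivialization of $j^*\mathcal L_H$.

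In this trivialization the connection form is $j^*A_\alpha=-j^*\iota_K\overline{B_\alpha}$. I will argue that this vanishes: for a tangent vector $v$ to $j(PM)$, the value of $\iota_K\overline{B_\alpha}(v)$ is an average of $B_\alpha(dx(V),\operatorname{ev}_*v)$, and $dx(V)=0$ along $j(PM)$. Hence $j^*\nabla^{\mathcal L_H}=d$.

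For the flux, I substitute $j(\gamma)(c_t(\theta))=\gamma(\sin t)$ into \eqref{eq:average-form}. The $\theta$-integral becomes trivial, and the change of variables $z=\sin t$ gives $\rho(t)\,dt=\tfrac12\,dz$. This yields $j^*\overline H=H_P$.

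The only point needing care is the first step. The isomorphism in Theorem~\ref{thm:general-localization} is stated as the restriction $i^*$, so I must check that this restriction is literally $j^*$ composed with the trivialization. That follows because the trivialization is by the restricted invariant frames, so no extra gauge factor appears. Combining the three steps gives the isomorphism \eqref{eq:brane-to-path-isomorphism}.
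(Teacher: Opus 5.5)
Your proposal is correct and follows the paper's proof. The paper likewise applies Theorem~\ref{thm:general-localization} with $Y=S^2M$, $\xi=\mathcal L_H$, $\chi=\overline H$, using equivariant flatness and Proposition~\ref{prop:brane-fixed-points}, and then identifies the fixed-point complex via \eqref{eq:fixed-line-trivialization} and \eqref{eq:path-average-H}. Your checks that $j^*g_{\alpha\beta}=1$, that $j^*A_\alpha=0$ because $K$ vanishes on the fixed set, and that $z=\sin t$ turns $\rho(t)\,dt$ into $\tfrac12\,dz$ just spell out what the paper leaves to those two displayed identities.
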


\begin{proof}
Apply Theorem \ref{thm:general-localization} with
\[
 Y=S^2M,
 \qquad
 \xi=\mathcal L_H,
 \qquad
 \chi=\overline H.
\]
The required equivariant flatness is
\eqref{eq:brane-equivariant-flatness}.  Proposition
\ref{prop:brane-fixed-points}, together with
\eqref{eq:fixed-line-trivialization} and \eqref{eq:path-average-H}, identifies
the fixed point complex with the complex on the right hand side of
\eqref{eq:brane-to-path-isomorphism}.
\end{proof}

Localization has reduced the brane space to the path space.  A final elementary
homotopy now reduces the path space to $M$.  The only subtlety is that the averaged 3-form $H_P$ is not literally the pullback of $H$ by
evaluation at the midpoint.  The two forms differ by an exact term, and this
term produces the gauge transformation below.

\subsection{Contraction of the path space}

Let
\begin{equation}
 e_0:PM\longrightarrow M,
 \qquad
 e_0(\gamma)=\gamma(0),
\end{equation}
and let $c:M\to PM$ be the inclusion of the constant paths.  Define
\begin{equation}
 r_s:PM\longrightarrow PM,
 \qquad
 (r_s\gamma)(z)=\gamma(sz),
 \qquad
 0\leq s\leq1.
 \label{eq:path-retraction}
\end{equation}
Then
\[
 r_1=\operatorname{id}_{PM},
 \qquad
 r_0=c\circ e_0.
\]

Let $r:PM\times[0,1]\to PM$ denote this homotopy, and put
\begin{equation}
 \eta=\int_0^1\iota_{\partial_s}r^*H_P\,ds
 \in\Omega^2(PM).
 \label{eq:path-eta}
\end{equation}

\begin{proposition}
One has
\begin{equation}
 H_P-e_0^*H=d\eta.
 \label{eq:path-twist-homotopy}
\end{equation}
Multiplication by $e^{-u^{-1}\eta}$ defines an isomorphism of complexes
\begin{equation}
 \begin{split}
 \Psi:\quad&
 \Bigl(
 \Omega^\bullet(PM)[[u,u^{-1}]],
 d+u^{-1}e_0^*H
 \Bigr)\\
 &\longrightarrow
 \Bigl(
 \Omega^\bullet(PM)[[u,u^{-1}]],
 d+u^{-1}H_P
 \Bigr).
 \end{split}
 \label{eq:path-gauge-isomorphism}
\end{equation}
\end{proposition}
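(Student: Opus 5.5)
The plan is to prove \eqref{eq:path-twist-homotopy} with the standard homotopy formula for the cylinder $r:PM\times[0,1]\to PM$, and then to read off the gauge isomorphism by a one-line conjugation.

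\emph{Step 1 (homotopy formula).} For a family of maps $r_s$ and a form $\omega$, the Cartan homotopy formula on $PM\times[0,1]$ gives
\[
 r_1^*\omega-r_0^*\omega
 =d\int_0^1\iota_{\partial_s}r^*\omega\,ds
 +\int_0^1\iota_{\partial_s}r^*d\omega\,ds .
\]
We apply this to $\omega=H_P$. The form $H_P$ is closed, since $d$ commutes with the $z$-average and $dH=0$. Because $r_1=\operatorname{id}$ and $r_0=c\circ e_0$, this yields
\[
 H_P-e_0^*c^*H_P=d\eta .
\]

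\emph{Step 2 (restriction of $H_P$ to constant paths).} We show $c^*H_P=H$. This holds because $\operatorname{ev}_z\circ c=\operatorname{id}_M$ for every $z$, so
\[
 c^*H_P=\tfrac12\int_{-1}^1 H\,dz=H .
\]
Hence $e_0^*c^*H_P=e_0^*H$, and \eqref{eq:path-twist-homotopy} follows.

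The only point needing care is the differentiation under the integral sign and the Cartan formula on the Fréchet manifold $PM$. This is justified because the forms involved are pulled back from $M$ through smooth evaluation maps, as in the proof of \eqref{eq:average-identities}. A useful cross-check, also needed later, is $c^*\eta=0$: the homotopy $r_s$ is stationary on constant paths, so $r\circ(c\times\mathrm{id})$ factors through the projection and $\iota_{\partial_s}$ kills the pullback.

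\emph{Step 3 (gauge isomorphism).} Since $\eta$ is an even form of degree $2$, multiplication by $e^{\pm u^{-1}\eta}$ is well defined on the completed space. Each ordinary form degree receives only finitely many terms, so the convention of Section~\ref{subsec:brane-periodic-completion} applies. The map $e^{-u^{-1}\eta}$ is invertible with inverse $e^{u^{-1}\eta}$. Because $\eta$ is even, the Leibniz rule gives
\[
 d\bigl(e^{-u^{-1}\eta}a\bigr)=e^{-u^{-1}\eta}\bigl(da-u^{-1}d\eta\wedge a\bigr).
\]
Therefore
\[
 (d+u^{-1}H_P)\,e^{-u^{-1}\eta}
 =e^{-u^{-1}\eta}\,(d+u^{-1}H_P-u^{-1}d\eta)
 =e^{-u^{-1}\eta}\,(d+u^{-1}e_0^*H),
\]
so $\Psi$ is a chain isomorphism.

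I expect no real obstacle here. The only step deserving attention is the sign convention in the homotopy formula, which must be arranged so that $d\eta$ appears with the stated sign. With $\iota_{\partial_s}$ on the left and $ds$ integrated from $0$ to $1$, the convention above gives exactly $r_1^*-r_0^*$.
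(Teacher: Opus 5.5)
Your proof is correct and follows the same route as the paper. You apply the homotopy formula for $r$ to the closed form $H_P$, use the normalized measure to get $c^*H_P=H$ (so $r_0^*H_P=e_0^*H$), and finish with the direct conjugation identity $(d+u^{-1}H_P)e^{-u^{-1}\eta}=e^{-u^{-1}\eta}(d+u^{-1}e_0^*H)$. You also make explicit two points the paper leaves implicit: that $H_P$ is closed and that the sign convention in the homotopy formula is the right one.
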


\begin{proof}
The homotopy formula gives
\begin{equation}
 r_1^*H_P-r_0^*H_P=d\eta.
\end{equation}
The measure in \eqref{eq:path-average-H} has total mass one, and hence
$c^*H_P=H$.  It follows that $r_0^*H_P=e_0^*H$, which proves
\eqref{eq:path-twist-homotopy}.  A direct calculation gives
\begin{equation}
 (d+u^{-1}H_P)e^{-u^{-1}\eta}
 =
 e^{-u^{-1}\eta}(d+u^{-1}e_0^*H).
\end{equation}
This proves \eqref{eq:path-gauge-isomorphism}.
\end{proof}

The maps $e_0$ and $c$ are homotopy inverses.  Since the twisting form in the
source complex is exactly $e_0^*H$, pullback by $c$ is a quasi-isomorphism
\begin{equation}
 \begin{split}
 c^*:\quad&
 \Bigl(
 \Omega^\bullet(PM)[[u,u^{-1}]],
 d+u^{-1}e_0^*H
 \Bigr)\\
 &\longrightarrow
 \Bigl(
 \Omega^\bullet(M)[[u,u^{-1}]],
 d+u^{-1}H
 \Bigr).
 \end{split}
 \label{eq:path-to-target-quasi-isomorphism}
\end{equation}
Indeed, $e_0\circ c=\operatorname{id}_M$, and
$e_0\circ r_s=e_0$ for every $s$.  The interval homotopy operator
for $r_s$ therefore anticommutes with multiplication by $e_0^*H$.
The ordinary homotopy formula is consequently also a chain homotopy
for $d+u^{-1}e_0^*H$, proving the assertion for the twisted complex.

Combining the fixed point theorem with this path space contraction gives the
main result of the section.

\begin{theorem}
\label{thm:brane-to-target}
The chosen circle action and normalized measure determine an isomorphism
\begin{equation}
 \begin{split}
 \operatorname{Loc}:
 h^\bullet_{\mathbb T}
 (S^2M,\nabla^{\mathcal L_H}:\overline H)
 \longrightarrow
 H^\bullet\Bigl(
 \Omega^\bullet(M)[[u,u^{-1}]],
 d+u^{-1}H
 \Bigr),
 \end{split}
 \label{eq:brane-to-target-isomorphism}
\end{equation}
given by
\begin{equation}
 \operatorname{Loc}
 =c^*\circ\Psi^{-1}\circ j^*.
 \label{eq:localization-map}
\end{equation}
Since $c^*\eta=0$, this map is also restriction to constant sphere
maps in the canonical trivialization of their coefficient line.
The factorization displays the two geometric steps which establish
that restriction is an isomorphism.
The construction is summarized by the commutative diagram
\begin{equation}
\begin{CD}
h^\bullet_{\mathbb T}(S^2M,\nabla^{\mathcal L_H}:\overline H)
@>{j^*}>>
H^\bullet(PM,d+u^{-1}H_P)\\
@V{\operatorname{Loc}}V{\cong}V
@V{c^*\circ\Psi^{-1}}V{\cong}V\\
H^\bullet(M,d+u^{-1}H)
@=
H^\bullet(M,d+u^{-1}H).
\end{CD}
 \label{eq:localization-diagram}
\end{equation}
\end{theorem}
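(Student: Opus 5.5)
The plan is to assemble the isomorphism from three results already established, in the order recorded by the factorization $\operatorname{Loc}=c^*\circ\Psi^{-1}\circ j^*$.

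\emph{First factor.} Theorem~\ref{thm:brane-to-path} shows that $j^*$ is an isomorphism onto $H^\bullet(\Omega^\bullet(PM)[[u,u^{-1}]],d+u^{-1}H_P)$. Here the coefficient line is trivialized by \eqref{eq:fixed-line-trivialization}.

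\emph{Second factor.} The gauge map $\Psi$ of \eqref{eq:path-gauge-isomorphism} is an isomorphism of complexes, with inverse given by multiplication by $e^{u^{-1}\eta}$. Hence $\Psi^{-1}$ induces an isomorphism onto the cohomology of $(\Omega^\bullet(PM)[[u,u^{-1}]],d+u^{-1}e_0^*H)$.

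\emph{Third factor.} The quasi-isomorphism \eqref{eq:path-to-target-quasi-isomorphism} shows that $c^*$ is an isomorphism on cohomology. The completion enters here, so I will check that the interval homotopy operator does not raise form degree in a way incompatible with the product over form degrees; since it acts degreewise (lowering degree by one), it extends componentwise. The composite of the three factors is therefore an isomorphism. This also gives commutativity of \eqref{eq:localization-diagram}: the right vertical arrow $c^*\circ\Psi^{-1}$ is an isomorphism, and the square commutes by definition of $\operatorname{Loc}$.

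Next I identify $\operatorname{Loc}$ with restriction to constant spheres. First, $j\circ c=\iota_M$, since a constant path composed with the height function is a constant sphere. Second, I show $c^*\eta=0$. The homotopy $r_s$ fixes constant paths, so $r\circ(c\times\operatorname{id})$ factors through the projection to $M$. Consequently the contraction $\iota_{\partial_s}$ kills the pulled-back form, and so $c^*\eta=0$. It follows that $c^*(e^{u^{-1}\eta}\alpha)=c^*\alpha$, and hence $\operatorname{Loc}=\iota_M^*$. In this formula we use the canonical trivialization $\iota_M^*\mathcal L_H\simeq M\times\mathbb C$: it comes from the vanishing normalized logarithms, and under it $\iota_M^*\overline H=H$ because the averaging measure has total mass one.

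Finally, the dependence on the circle action and on the normalized measure enters only through $\mathcal L_H$, $\overline H$ and $H_P$, so the statement that these choices determine the isomorphism is immediate. I expect the main obstacle to lie in Theorem~\ref{thm:brane-to-path} itself, namely the strong regularity of $S^2M$. That is taken as given from Proposition~\ref{prop:brane-fixed-points}, so the remaining delicate point is only the compatibility of the chain homotopies with the form-degree completion, and I would check this degreewise.
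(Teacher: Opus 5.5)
Your proposal is correct and follows the paper's own argument: $\operatorname{Loc}$ is the composite of the fixed-point isomorphism $j^*$ of Theorem~\ref{thm:brane-to-path}, the gauge isomorphism $\Psi^{-1}$ (multiplication by $e^{u^{-1}\eta}$), and the quasi-isomorphism $c^*$ of \eqref{eq:path-to-target-quasi-isomorphism}. Your identification of $\operatorname{Loc}$ with $\iota_M^*=(j\circ c)^*$ uses $c^*\eta=0$, which holds because the contraction $r_s$ is stationary on constant paths; this is exactly the paper's reasoning, made explicit in the proof of Proposition~\ref{prop:character-restriction}.
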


The theorem also shows which components of the brane space survive
localization.

\begin{corollary}
Let $S^2_\alpha M$ be a component corresponding to a nonzero class
$\alpha\in[S^2,M]$.  Then
\begin{equation}
 h^\bullet_{\mathbb T}
 (S^2_\alpha M,\nabla^{\mathcal L_H}:\overline H)=0.
 \label{eq:nonzero-component-vanishing}
\end{equation}
Only the component containing the constant maps contributes to
\eqref{eq:brane-to-target-isomorphism}.
\end{corollary}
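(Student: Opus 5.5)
The plan is to run the localization argument of Theorem \ref{thm:general-localization} on the single component $Y=S^2_\alpha M$ rather than on all of $S^2M$. First, observe that each homotopy component $S^2_\alpha M$ is an open and closed subset of $S^2M$. It is invariant under $\mathbb T$, because rotation of the source is homotopic to the identity and so preserves free homotopy classes. Consequently the restrictions of $\mathcal L_H$, $\nabla^{\mathcal L_H}$ and $\overline H$ to $S^2_\alpha M$ still satisfy the equivariant flatness identity \eqref{eq:brane-equivariant-flatness}, and the complex \eqref{eq:brane-exotic-complex} splits as a product over components. In particular, $h^\bullet_{\mathbb T}(S^2M,\nabla^{\mathcal L_H}:\overline H)$ is the product of the component groups.

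Second, determine the fixed points in $S^2_\alpha M$. By Proposition \ref{prop:brane-fixed-points}, every fixed sphere map has the form $j(\gamma)=\gamma\circ z$, so it factors through the interval $[-1,1]$. The straight-line contraction $s\mapsto\gamma(s\,\cdot)$ of \eqref{eq:path-retraction}, composed with $z$, is a homotopy of sphere maps from $j(\gamma)$ to the constant map at $\gamma(0)$. Hence every fixed point lies in the component of constant maps, and for $\alpha\neq0$ the circle action on $S^2_\alpha M$ is fixed point free.

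Third, apply Lemma \ref{lem:no-fixed-points} to $Y=S^2_\alpha M$ with $\xi=\mathcal L_H$ and $\chi=\overline H$. We need an invariant weak metric and the completed periodic algebra, both of which are available as stated in the setup of that section. The contracting homotopy given by multiplication by $\omega=\vartheta\Gamma$ then shows that the cohomology vanishes, which proves \eqref{eq:nonzero-component-vanishing}.

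For the final sentence of the corollary, combine this vanishing with the product decomposition. Only the component $S^2_0M$ can contribute. Moreover $j(PM)\subset S^2_0M$, so Theorem \ref{thm:brane-to-target} applied to that component alone (equivalently, Theorem \ref{thm:general-localization} with $Y=S^2_0M$) gives the full isomorphism.

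The only step needing care is the component bookkeeping. One must check that the product over components is compatible with the completion in form degree, and that a component is itself an invariant open subset admitting the metric and partitions of unity used in Lemma \ref{lem:no-fixed-points}. I expect this to be routine, since components are open and closed and the completion is taken degreewise.
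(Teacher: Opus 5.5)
Your proposal is correct and follows the paper's proof: fixed sphere maps factor through $[-1,1]$, hence are null homotopic, so $S^2_\alpha M$ with $\alpha\neq0$ is fixed point free, and Lemma~\ref{lem:no-fixed-points} gives the vanishing. Your extra bookkeeping makes explicit what the paper leaves implicit for the final sentence: components are open, closed and invariant, the completed complex splits as a product over components, and $j(PM)\subset S^2_0M$.
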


\begin{proof}
Every fixed map factors through the interval $[-1,1]$ and is therefore null
homotopic.  Thus $S^2_\alpha M$ has no fixed points when $\alpha\neq0$.  The
conclusion follows from Lemma \ref{lem:no-fixed-points}.
\end{proof}

\section{The twisted brane Bismut-Chern character}
\label{sec:brane-character}

A connection on a vector bundle gives more than its curvature.  It also
gives parallel transport along paths, and hence holonomy around loops.
The Bismut-Chern character brings these two aspects of a connection
together in a differential form on the free loop space.  We shall use
this observation to construct characteristic forms on the brane
space in the presence of a gerbe.

What we want is quite concrete.  If
$\iota_M:M\to S^2M$ denotes the inclusion of constant sphere maps, the
new form should restrict under $\iota_M$ to the usual twisted Chern
form on $M$.  It should also be closed for the superconnection $Q_H$,
and a change of module connections should give an exact difference.
We shall establish these properties by constructing the form and its
transgression explicitly.  The relation with the localization theorem
will then follow from the restriction to constant maps.

There is a natural starting point.  A sphere map restricts to a loop on
each latitude, so it determines a family to which the loop-space theory
can be applied.  The difficulty is that the twisted Bismut-Chern forms
at different latitudes take values in different fibres of holonomy line
bundles.  Before these forms can be integrated, their coefficient lines must be compared.  We obtain this comparison by viewing
the same sphere as a circular family of meridian paths.  Parallel
transport along a meridian identifies the fibres at its different
latitudes.  We average the transported connections, and only then apply
the Bismut-Chern construction in the circular direction.  The essential
point is that the resulting holonomy line, with its connection, pulls
back to the average latitude holonomy line $\mathcal L_H$ already used
in the brane complex.

We begin with a brief review of the characteristic forms on the target and on its loop space.  We then construct the averaged gerbe and its
modules, define the brane forms and their transgressions, and prove the
restriction and localization identities.  Write $\mathcal G$ for the
gerbe of Section \ref{sec:brane-space}, and retain the normalized measure
of that section.  Periodic series have the meaning fixed in
\eqref{eq:brane-exotic-complex}. 

\subsection{Twisted Chern characters and their loop-space refinements}
\label{subsec:character-review}

We recall the geometric description of twisted $K$-theory in
\cite{BCMMS,MS}, with the signs and periodic normalization of
\cite{HM15}.  A gerbe module with connection consists locally of bundles
$E_\alpha\to\sU_\alpha$ and connection-preserving isomorphisms
\begin{equation}
 \phi_{\alpha\beta}:L_{\alpha\beta}\otimes E_\beta
 \longrightarrow E_\alpha,
 \label{eq:character-module-descent}
\end{equation}
compatible with the gerbe product.  Since
$F_{\alpha\beta}=B_\beta-B_\alpha$, their curvatures satisfy
\begin{equation}
 \phi_{\alpha\beta}^{-1}
 (B_\alpha\operatorname{id}+F^{E_\alpha})
 \phi_{\alpha\beta}
 =B_\beta\operatorname{id}+F^{E_\beta}.
 \label{eq:character-module-curvature-descent}
\end{equation}
Thus the combination of module curvature and gerbe curving has the
transformation law needed for a Chern-Weil construction.

For a general twist we use the relative Hilbert-module model
of \cite[Section~4]{MS}.  The trace-class reduction is part of the
module data.  In compatible paired local trivializations, the
transition operators of $E$ and $E'$ agree modulo the trace-class
ideal, and admissible module connections have a difference which is
a smooth trace-class one-form, with all derivatives taken in the
trace norm.  Equivalently, the pair has a common connection after
passing to the bounded operators modulo the trace-class ideal.
Compatibility includes the overlap module identifications, so the
comparison is preserved when a path crosses between local charts.

Only paired differences are traced.  In particular, equality of the
quotient transports makes the difference of the two monodromies
trace class; a condition on curvature alone would not suffice for
the degree-zero loop character.  The existence of compatible
relative connections is part of the model recalled in \cite{MS}.
Finite-rank representatives, when available, use ordinary traces,
and finite-rank summands are retained in stabilization.  We do not
require the individual Hilbert-module curvatures to be trace class.

The periodic twisted Chern form is given locally by
\begin{equation}
 \operatorname{Ch}_H(\nabla^E,\nabla^{E'})|_{\sU_\alpha}
 =e^{-u^{-1}B_\alpha}
 \Tr\left(e^{-u^{-1}F^{E_\alpha}}
          -e^{-u^{-1}F^{E'_\alpha}}\right).
 \label{eq:character-target-chern-form}
\end{equation}
Equation \eqref{eq:character-module-curvature-descent} shows that these
local forms agree.  The Bianchi identity and $dB_\alpha=H$ give
\begin{equation}
 (d+u^{-1}H)\operatorname{Ch}_H(\nabla^E,\nabla^{E'})=0.
 \label{eq:character-target-chern-closedness}
\end{equation}
The resulting class depends only on $[E,E']\in K^0(M,H)$.
Suspension gives the character in odd degree, with the periodic
normalization recalled below.  In this way one obtains
\[
 \operatorname{Ch}_H:K^\bullet(M,H)\longrightarrow
 H^\bullet\bigl(\Omega^\bullet(M)[[u,u^{-1}]],d+u^{-1}H\bigr).
\]

The passage to loops adds parallel transport to this curvature
description.  Let $LM=C^\infty(\mathbb T,M)$, let $K_L$ generate loop
rotation, and let $\iota_L:M\to LM$ be the constant-loop inclusion.
For an ordinary bundle $V$ with connection, Bismut's construction
\cite{B85} gives a completed even form $\operatorname{BCh}(\nabla^V)$
with
\begin{equation}
 (d-u\iota_{K_L})\operatorname{BCh}(\nabla^V)=0,
 \qquad
 \iota_L^*\operatorname{BCh}(\nabla^V)
 =\Tr e^{-u^{-1}F^V}.
 \label{eq:character-review-ordinary-bismut}
\end{equation}
Its degree-zero term is the trace of inverse holonomy in our orientation
convention.  The higher terms are circular iterated integrals of
curvature, with the curvature operators brought to a common fibre by
parallel transport.  On a constant loop the transports become
identities, and the simplex integrals become the coefficients of the
curvature exponential.

For a gerbe, the analogous form has coefficients in its holonomy line
bundle $\mathcal L_H^{\mathrm{loop}}\to LM$.  Put
\[
 \overline H_L=\int_{\mathbb T}\operatorname{ev}_\theta^*H\,d\theta,
 \qquad
 D_H^L=\nabla^{\mathcal L_H^{\mathrm{loop}}}
       -u\iota_{K_L}+u^{-1}\overline H_L,
\]
where $\operatorname{ev}_\theta:LM\to M$ is evaluation.  The twisted
Bismut-Chern form of \cite[Section~3]{HM15} satisfies
\begin{equation}
 D_H^L\operatorname{BCh}_H(\nabla^E,\nabla^{E'})=0,
 \qquad
 \iota_L^*\operatorname{BCh}_H(\nabla^E,\nabla^{E'})
 =\operatorname{Ch}_H(\nabla^E,\nabla^{E'}).
 \label{eq:character-review-twisted-bismut}
\end{equation}
The second identity uses the canonical trivialization of the holonomy
line over constant loops.  Locally, the curving exponential corrects
the transformation of the iterated integrals under tensoring a module
with $L_{\alpha\beta}$.  Their remaining holonomy factor is precisely
the transition factor of the coefficient line.

It is this last feature that matters for the sphere.  For each latitude $t$ there
is a restriction map
\[
 r_t:S^2M\longrightarrow LM,\qquad r_t(x)=x\circ c_t.
\]
The form $r_t^*\operatorname{BCh}_H$ takes values in
$r_t^*\mathcal L_H^{\mathrm{loop}}$, which varies with $t$.
The averaged transition functions defining $\mathcal L_H$ do not by
themselves identify these coefficient bundles.  The following
construction supplies a common geometric setting for the connections
before any characteristic form is averaged or traced.

\subsection{Latitude families and the averaged gerbe}

Put $a=-\pi/2$, $b=\pi/2$, and $I=[a,b]$.  We use the
\textbf{latitude path space}
\begin{equation}
 \mathscr P M=C^\infty(I,M),
 \qquad
 e_t:\mathscr P M\longrightarrow M,\quad\gamma\longmapsto\gamma(t).
 \label{eq:character-latitude-path-space}
\end{equation}
Both endpoints are free.  The parameter here is latitude, whereas the
parameter of $PM=C^\infty([-1,1],M)$ in
Section \ref{sec:brane-localization} is height.  These two uses of paths
are related by the smooth maps
\begin{equation}
 \begin{aligned}
 \kappa:PM&\longrightarrow\mathscr P M,
 & (\kappa\gamma)(t)&=\gamma(\sin t),\\
 \Gamma:S^2M&\longrightarrow L(\mathscr P M),
 & \Gamma(x)(\theta)(t)&=x(c_t(\theta)).
 \end{aligned}
 \label{eq:character-currying}
\end{equation}
For fixed $\theta$, the function $t\mapsto x(c_t(\theta))$ is the
image of a meridian.  As $\theta$ goes round the circle, these paths
form a loop in $\mathscr P M$.  This is the meaning of $\Gamma$.
It is equivariant for source rotation and loop rotation.  Its
smoothness follows from the smooth parametrization
$(\theta,t)\mapsto c_t(\theta)$ on $\mathbb T\times I$ and the
exponential law for smooth mapping spaces.

For a form $\omega$ on $\mathscr P M$, write
\[
 \mathsf A\omega=\int_{\mathbb T}\operatorname{ev}_\theta^*\omega\,d\theta,
 \qquad
 \operatorname{ev}_\theta:L(\mathscr P M)\longrightarrow\mathscr P M,
 \quad\lambda\longmapsto\lambda(\theta).
\]

\subsubsection{The averaged gerbe and its holonomy line bundle}

For an open set $\sU_\alpha$ of the brane Brylinski cover, put
$\mathscr P\sU_\alpha=C^\infty(I,\sU_\alpha)$.  These sets cover
$\mathscr P M$.  Indeed, if $\gamma\in\mathscr P M$, the sphere map
\[
 p\longmapsto\gamma\left(a+\frac{b-a}{2}(z(p)+1)\right)
\]
has the same image as $\gamma$, where $z:S^2\to[-1,1]$ is height.
The defining property of the brane cover therefore places this image
in some $\sU_\alpha$.

For loops of paths we use the invariant open cover of $L(\mathscr P M)$ by the sets $L(\mathscr P\sU_\alpha)$; its union is 
\begin{equation}
 \mathscr W=\bigcup_\alpha L(\mathscr P\sU_\alpha).
 \label{eq:character-loop-path-neighborhood}
\end{equation}
Indeed, a loop $\lambda$ in $\mathscr P M$ corresponds to a
smooth map $F_\lambda:\mathbb T\times I\to M$.  Choose a smooth
surjection $q:S^2\to\mathbb T\times I$.  The sphere map
$F_\lambda\circ q$ has the same image as $F_\lambda$, so the
defining property of the brane Brylinski cover places this image
in some $\sU_\alpha$.  Thus
$\lambda\in L(\mathscr P\sU_\alpha)$.

Define the averaged curvings and flux by
\begin{equation}
 B_\alpha^\rho=\int_a^b e_t^*B_\alpha\,\rho(t)\,dt,
 \qquad
 H_\rho=\int_a^b e_t^*H\,\rho(t)\,dt.
 \label{eq:character-path-curving}
\end{equation}
To give these forms a gerbe interpretation, we must choose connections on the overlap line bundles.  Let $L\to\sU$ be a line bundle with
connection, and let
$P_t^L:e_a^*L\to e_t^*L$ be parallel transport along a path from
$a$ to $t$.  The transported connections all act on the single
bundle $e_a^*L$, so their affine average is defined by
\begin{equation}
 \nabla^{L,\rho}
 =e_a^*\nabla^L+
 \int_a^b(\nabla_t^L-e_a^*\nabla^L)\rho(t)\,dt,
 \label{eq:character-averaged-line-connection}
\end{equation}
where
$$ \nabla_t^L=(P_t^L)^{-1}\circ e_t^*\nabla^L\circ P_t^L. $$
In particular,
$e_t^*\nabla^L$ differentiates $P_t^L$ as the path varies.
It follows that $\nabla_t^L-e_a^*\nabla^L$ is a well-defined scalar
1-form on $\mathscr P\sU$.

\begin{proposition}
\label{prop:character-path-gerbe}
The bundles $L_{\alpha\beta}^\rho=e_a^*L_{\alpha\beta}$, with the
connections \eqref{eq:character-averaged-line-connection} and the gerbe
product isomorphisms pulled back by $e_a$, form a gerbe $\mathcal G_\rho$
on $\mathscr P M$ with curvings $B_\alpha^\rho$ and curvature $H_\rho$.
Its underlying gerbe is $e_a^*\mathcal G$.
\end{proposition}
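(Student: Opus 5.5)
We will verify the three descent relations \eqref{eq:gerbe-triple-brane} for the data $(H_\rho,B^\rho_\alpha,(L^\rho_{\alpha\beta},\nabla^{L_{\alpha\beta},\rho}))$ on the cover $\{\mathscr P\sU_\alpha\}$. The first is immediate: $e_t^*$ commutes with $d$, and integration in $t$ commutes with $d$, so $dB^\rho_\alpha=\int_a^b e_t^*dB_\alpha\,\rho\,dt=H_\rho$. Closedness of $H_\rho$ follows the same way. The statement about the underlying gerbe is also formal. As bundles, and with their product maps, the $L^\rho_{\alpha\beta}$ are literally $e_a^*L_{\alpha\beta}$ with $e_a^*\mu_{\alpha\beta\gamma}$. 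Associativity on quadruple overlaps is therefore pulled back from $M$.

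The core is the curvature computation $(\nabla^{L,\rho})^2=\int_a^b e_t^*F^L\,\rho\,dt$ for a line bundle $L$ with connection on $\sU$. Since $\rho\,dt$ has mass one, $\nabla^{L,\rho}=\int_a^b\nabla_t^L\rho\,dt$ is an affine average of connections on the single line $e_a^*L$. For line bundles, curvature is affine in the connection: writing $\nabla_t^L=e_a^*\nabla^L+\omega_t$, we get $(\nabla^{L,\rho})^2=e_a^*F^L+d\int\omega_t\rho\,dt=\int(e_a^*F^L+d\omega_t)\rho\,dt=\int(\nabla^L_t)^2\rho\,dt$. Conjugation by the transport $P^L_t$ does not change scalar curvature, so $(\nabla_t^L)^2=(e_t^*\nabla^L)^2=e_t^*F^L$. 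Applying this to $L_{\alpha\beta}$ and using $F_{\alpha\beta}=B_\beta-B_\alpha$ gives $(\nabla^{L_{\alpha\beta},\rho})^2=B^\rho_\beta-B^\rho_\alpha$, which is the second relation.

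For the third relation we must check that $e_a^*\mu_{\alpha\beta\gamma}$ is connection-preserving for the averaged connections. For each $t$, parallel transport along the path is compatible with tensor products and with the flat parallel isomorphism $\mu_{\alpha\beta\gamma}$. Hence $P_t^{L_{\alpha\gamma}}\circ e_a^*\mu=e_t^*\mu\circ(P_t^{L_{\alpha\beta}}\otimes P_t^{L_{\beta\gamma}})$. Conjugating, the transported connection $\nabla_t$ on $e_a^*(L_{\alpha\beta}\otimes L_{\beta\gamma})$ is carried by $e_a^*\mu$ to $\nabla_t$ on $e_a^*L_{\alpha\gamma}$. The scalar one-forms $\omega_t$ are therefore additive: $\omega_t^{\alpha\beta}+\omega_t^{\beta\gamma}=\omega_t^{\alpha\gamma}$ under $\mu$. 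Averaging linear identities preserves them, so $\mu$ intertwines the averaged connections.

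The main obstacle is making the transported connection $\nabla^L_t$ rigorous on the Fréchet manifold $\mathscr P\sU$. Concretely, we must show that $\omega_t=\nabla^L_t-e_a^*\nabla^L$ is a smooth one-form depending smoothly on $t$, so that the integrals above are legitimate and commute with $d$. I would do this by the standard variation formula for parallel transport, in the form used for \eqref{eq:brylinski-dlog}. In a local frame this gives $\omega_t(\xi)=\int_a^t F^L(\dot\gamma(s),\xi(s))\,ds$ up to sign, as an evaluation-type form. Smoothness of this expression is clear, and it can be patched across frames because $P_t$ is frame-independent.
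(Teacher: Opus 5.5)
Your proposal is correct and follows essentially the same route as the paper. Both arguments use the same three steps: the transport identity for $\mu$ gives additivity of the scalar one-forms $C_t=\nabla^L_t-e_a^*\nabla^L$, the curvature computation $F_a+dC_\rho=\int_a^b e_t^*F^L\rho(t)\,dt$ relies on the normalization of $\rho$, and the underlying bundles and products are pulled back by $e_a$. Your explicit fibre-integral formula for $\omega_t$ also justifies the smoothness of $C_t$, which the paper only asserts.
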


\begin{proof}
We first check that the averaged connections respect the gerbe
product.  Fix a triple intersection and write
\[
 \mu:L_{\alpha\beta}\otimes L_{\beta\gamma}
 \longrightarrow L_{\alpha\gamma}
\]
for its connection-preserving multiplication.  Over
$\mathscr P\sU_{\alpha\beta\gamma}$, put $\mu_t=e_t^*\mu$
and write $P_t^{\alpha\beta}$ for parallel transport in
$L_{\alpha\beta}$ from $a$ to $t$.  Since $\mu$ preserves
connections, parallel transport satisfies
\[
 P_t^{\alpha\gamma}\circ\mu_a
 =\mu_t\circ
 (P_t^{\alpha\beta}\otimes P_t^{\beta\gamma}).
\]
After transport to the initial fibres, the same product $\mu_a$
therefore intertwines the transported connections for every $t$:
\[
 \nabla_t^{L_{\alpha\gamma}}\circ\mu_a
 =\mu_a\circ\left(
 \nabla_t^{L_{\alpha\beta}}\otimes1
 +1\otimes\nabla_t^{L_{\beta\gamma}}\right).
\]
The map $\mu_a$ is independent of $t$, although it may vary with the
path; this dependence is included in the connection identity.

To see why averaging preserves this identity, set
\[
 C_t^{\alpha\beta}
 =\nabla_t^{L_{\alpha\beta}}-e_a^*\nabla^{L_{\alpha\beta}}.
\]
These are scalar one-forms on the path space.  Subtracting the
connection identity at $t=a$ from the one at $t$ gives
\[
 C_t^{\alpha\gamma}=C_t^{\alpha\beta}+C_t^{\beta\gamma}.
\]
The same equality holds for their averages
$C_\rho^{\alpha\beta}=\int_a^b C_t^{\alpha\beta}\rho(t)\,dt$.
Consequently the connections
$\nabla^{L_{\alpha\beta},\rho}
=e_a^*\nabla^{L_{\alpha\beta}}+C_\rho^{\alpha\beta}$
are compatible with $\mu_a$.  Associativity is unchanged, since the
product maps themselves are simply pulled back by $e_a$.

We now carry out the curvature calculation.  For a line bundle $L$,
write
\[
 \nabla_a=e_a^*\nabla^L,\qquad
 \nabla_t^L=\nabla_a+C_t,\qquad
 C_\rho=\int_a^b C_t\rho(t)\,dt,
\]
and put $F_a=e_a^*F^L$.  Since $C_t$ is scalar,
$C_t\wedge C_t=0$.  The curvature of a transported connection is
therefore
\[
 F_a+dC_t=(\nabla_t^L)^2
 =(P_t^L)^{-1}e_t^*F^L P_t^L=e_t^*F^L.
\]
The final equality uses the fact that curvature acts by scalars on a
line bundle.  Applying the same calculation to $C_\rho$ gives
\begin{align*}
& (\nabla^{L,\rho})^2\\
 =&F_a+dC_\rho\\
 =&F_a+\int_a^b(e_t^*F^L-F_a)\rho(t)\,dt\\
 =&\int_a^b e_t^*F^L\,\rho(t)\,dt.
\end{align*}
Thus the normalization of the measure cancels the initial curvature
term.  In particular,
\begin{equation}
 (\nabla^{L_{\alpha\beta},\rho})^2
 =\int_a^b e_t^*F_{\alpha\beta}\,\rho(t)\,dt
 =B_\beta^\rho-B_\alpha^\rho.
 \label{eq:character-path-overlap-curvature}
\end{equation}
Also $dB_\alpha^\rho=H_\rho$.  These are the required connection and
curving identities.  The underlying line bundles and their products
are pulled back by $e_a$, which proves the assertion about the
integral gerbe class.
\end{proof}

On $\mathscr W$, the holonomy line of $\mathcal G_\rho$ has
frames $\sigma_\alpha^\rho$ over $L(\mathscr P\sU_\alpha)$.
Its transition functions are the holonomies of
$L_{\alpha\beta}^\rho$, and its connection form is
$-\iota_{K_\ell}\mathsf A B_\alpha^\rho$, where $K_\ell$
generates loop rotation.  We denote this line bundle with connection
over $\mathscr W$ by
$(\mathscr L_\rho,\nabla^{\mathscr L_\rho})$.
Since $\Gamma$ takes values in $\mathscr W$, these local data
define its pullback on the entire brane space.

\subsubsection{Comparison with the average latitude holonomy line}
We now compare the pullback of $\mathscr L_\rho$ under $\Gamma$ with the coefficient line of the brane theory.
\begin{proposition}[Comparison of holonomy line bundles]
\label{prop:character-line-comparison}
There is an equivariant isomorphism preserving connections
\begin{equation}
 \Gamma^*(\mathscr L_\rho,\nabla^{\mathscr L_\rho})
 \cong(\mathcal L_H,\nabla^{\mathcal L_H}).
 \label{eq:character-line-comparison}
\end{equation}
It identifies $\Gamma^*\sigma_\alpha^\rho$ with $s_\alpha$.  Moreover,
\begin{equation}
 \Gamma^*\mathsf A H_\rho=\overline H,
 \qquad
 \Gamma^*D_\rho=Q_H\Gamma^*,
 \quad
 D_\rho=\nabla^{\mathscr L_\rho}
       -u\iota_{K_\ell}+u^{-1}\mathsf A H_\rho.
 \label{eq:character-superconnection-comparison}
\end{equation}
\end{proposition}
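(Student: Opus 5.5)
Comparing transition functions and connection forms chart by chart reduces the statement to three identities over $S^2\sU_\alpha\cap S^2\sU_\beta$. Note that $\Gamma^{-1}(L(\mathscr P\sU_\alpha))=S^2\sU_\alpha$ by the exponential law, so the two covers match. The identities are:
(a) $\Gamma^*\mathsf A\omega^\rho=\overline\omega$ for any form $\omega$ on $M$, where $\omega^\rho=\int_a^b e_t^*\omega\,\rho(t)\,dt$;
(b) $\Gamma^*\operatorname{hol}(L^\rho_{\alpha\beta})=g_{\alpha\beta}$;
(c) the connection forms correspond.

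Identity (a) is a direct computation. We have $\operatorname{ev}_\theta\circ\Gamma$ followed by $e_t$ equal to $\operatorname{ev}_{c_t(\theta)}$, so Fubini gives $\Gamma^*\mathsf A\omega^\rho=\int\!\!\int\operatorname{ev}_{c_t(\theta)}^*\omega\,d\theta\,\rho\,dt=\overline\omega$. Applying this to $B_\alpha$ and $H$ yields $\Gamma^*\mathsf AB^\rho_\alpha=\overline{B_\alpha}$ and $\Gamma^*\mathsf AH_\rho=\overline H$. Since $\Gamma$ is equivariant, $\Gamma_*K=K_\ell$, hence $\Gamma^*\iota_{K_\ell}=\iota_K\Gamma^*$. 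Together these give $\Gamma^*(-\iota_{K_\ell}\mathsf AB^\rho_\alpha)=-\iota_K\overline{B_\alpha}=A_\alpha$. So once (b) holds and the frames are matched, (c) follows, and $\Gamma^*D_\rho=Q_H\Gamma^*$ follows termwise, using the local frame expressions of both operators.

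The main obstacle is (b). Write $h^\rho(x)$ for the holonomy of $(e_a^*L_{\alpha\beta},\nabla^{L,\rho})$ around the loop $\Gamma(x)$. By \eqref{eq:character-averaged-line-connection}, along this loop
\[
\log h^\rho=\log\operatorname{hol}_{e_a^*\nabla}+\int_a^b\Bigl(-\oint C_t\Bigr)\rho\,dt .
\]
Here the sign follows the orientation convention for holonomy. The first term is the holonomy around the south pole loop, which equals $1$. The key claim is that $\nabla^L_t$, being a gauge transform of $e_t^*\nabla^L$ by $P_t^L$, has holonomy around $\Gamma(x)$ equal to the holonomy of $L$ around the latitude $x\circ c_t$, namely $h_{\alpha\beta}(t;x)$. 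Hence $\exp(-\oint C_t)=h_{\alpha\beta}(t;x)$.

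To pass from exponentials to logarithms I need $-\oint C_t=\ell_{\alpha\beta}(t;x)$ exactly, not merely modulo $2\pi\sqrt{-1}$. I would argue by continuity in $t$: the quantity $\oint_{\Gamma(x)}C_t$ is a smooth function of $t$ with value $0$ at $t=a$, because $C_a=0$. Its exponential agrees with $h_{\alpha\beta}(t;x)$ for every $t$. By the uniqueness of the continuous logarithm normalized at the south pole, \eqref{eq:normalized-log-holonomy}, it equals $\ell_{\alpha\beta}(t;x)$. Averaging against $\rho$ and exponentiating then gives $g_{\alpha\beta}$. This is precisely where the ``exponential of averaged logarithms'' definition is recovered. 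I expect the bookkeeping of orientation and sign conventions, matching \eqref{eq:frame-convention} with the holonomy-line convention, to need care. As a cross-check I would use the consistency relation $d\log g_{\alpha\beta}=\iota_K\overline{F_{\alpha\beta}}$ from Proposition~\ref{connection}.

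Finally, I would define the isomorphism by $\Gamma^*\sigma^\rho_\alpha\mapsto s_\alpha$. Transition functions agree by (b), so this map is globally well defined. It is equivariant because both systems of frames are invariant and $\Gamma$ is equivariant. It preserves connections because the connection forms agree, by (c).
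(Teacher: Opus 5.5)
Your proposal is correct and follows essentially the paper's route. Like you, the paper reduces everything to the identity $\operatorname{hol}_{\nabla^{L_{\alpha\beta},\rho}}(\Gamma(x))=g_{\alpha\beta}(x)$: $e_a^*L_{\alpha\beta}$ has constant fibre $(L_{\alpha\beta})_{x(S)}$ along $\Gamma(x)$, the meridian-transported connection forms have exponentiated circle integrals equal to the latitude holonomies, and continuity in $t$ with vanishing at $t=a$ identifies them with the normalized logarithms $\ell_{\alpha\beta}(t;x)$; Fubini and equivariance of $\Gamma$ then give $\overline{B_\alpha}$, $\overline H$, the connection forms and the contractions as you describe. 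The only difference is cosmetic: the paper computes in the explicit frame obtained by transporting a vector at $x(S)$ along the meridians, whereas you phrase the same computation through gauge invariance of holonomy under $P_t^L$ and the scalar forms $C_t$.
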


\begin{proof}
Fix $x\in S^2\sU_{\alpha\beta}$.  Along the loop $\Gamma(x)$, the bundle
$e_a^*L_{\alpha\beta}$ has the constant fibre
$(L_{\alpha\beta})_{x(S)}$.  A nonzero vector in this fibre gives a periodic
frame in the circular direction.  Transport it along the meridians to
latitude $t$, and write $q_t$ for the connection 1-form in the resulting
frame along that circle.  With our convention for parallel transport,
\[
 \operatorname{hol}_{\nabla^{L_{\alpha\beta}}}(x\circ c_t)
 =\exp\left(-\int_{\mathbb T}q_t\right).
\]
The function $-\int_{\mathbb T}q_t$ is continuous in $t$ and is zero at
$a$.  It is therefore exactly the normalized logarithm
$\ell_{\alpha\beta}(t;x)$ in
\eqref{eq:normalized-log-holonomy}.  The averaged connection on the
constant fibre has 1-form $\int_a^b q_t\rho(t)\,dt$.  Its holonomy is
thus
\begin{equation}
 \operatorname{hol}_{\nabla^{L_{\alpha\beta},\rho}}(\Gamma(x))
 =\exp\left(\int_a^b\ell_{\alpha\beta}(t;x)\rho(t)\,dt\right)
 =g_{\alpha\beta}(x).
 \label{eq:character-holonomy-comparison}
\end{equation}
This identifies the frames and the transition functions.  Fubini's theorem
gives
$\Gamma^*\mathsf A B_\alpha^\rho=\overline{B_\alpha}$ and
$\Gamma^*\mathsf A H_\rho=\overline H$.  Equivariance of $\Gamma$ then
identifies the connection forms and the contractions by the two generating
vector fields.  This proves all the assertions.
\end{proof}

The proposition explains why we average before applying the character formula.  Forms on different latitudes initially have
different coefficient lines.  The path-space gerbe supplies a single
connection whose loop holonomy has the required averaged transition law.

\subsection{Gerbe modules and averaged connections}

We now apply the same parallel transport and averaging to the admissible
gerbe modules of Section \ref{subsec:character-review}.  Compatibility
with the overlap line connections will give modules for the averaged
gerbe.  The difference between scalar and operator-valued connections shows up in the curvature, which we compute below.

Set $E_\alpha^\rho=e_a^*E_\alpha$.  If
$P_t^{E_\alpha}:e_a^*E_\alpha\to e_t^*E_\alpha$ denotes parallel transport,
define
\begin{equation}
 \begin{split}
 \nabla_t^{E_\alpha}
 &=(P_t^{E_\alpha})^{-1}\circ e_t^*\nabla^{E_\alpha}
                            \circ P_t^{E_\alpha},\\
 \nabla^{E_\alpha,\rho}
 &=e_a^*\nabla^{E_\alpha}+
   \int_a^b(\nabla_t^{E_\alpha}-e_a^*\nabla^{E_\alpha})\rho(t)\,dt.
 \end{split}
 \label{eq:character-averaged-module-connection}
\end{equation}
Define $(E'{}^\rho,\nabla^{E',\rho})$ in the same manner.

\begin{proposition}
\label{prop:character-averaged-modules}
The bundles $E_\alpha^\rho$, with the connections
\eqref{eq:character-averaged-module-connection} and the isomorphisms
$e_a^*\phi_{\alpha\beta}$, form a module for $\mathcal G_\rho$.  The
construction preserves direct sums and admissible pairs.  Along the
inclusion $c_{\mathrm{lat}}:M\to\mathscr P M$ of constant paths, it
restricts to the original module with connection.
\end{proposition}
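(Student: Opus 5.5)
The plan follows the proof of Proposition~\ref{prop:character-path-gerbe}, replacing scalar connection differences by operator-valued ones. The first step is to check that the maps $e_a^*\phi_{\alpha\beta}:e_a^*L_{\alpha\beta}\otimes e_a^*E_\beta\to e_a^*E_\alpha$ intertwine the averaged connections. Since each $\phi_{\alpha\beta}$ preserves connections, parallel transport satisfies
\[
 P_t^{E_\alpha}\circ e_a^*\phi_{\alpha\beta}
 =e_t^*\phi_{\alpha\beta}\circ\bigl(P_t^{L_{\alpha\beta}}\otimes P_t^{E_\beta}\bigr).
\]
Hence $e_a^*\phi_{\alpha\beta}$ intertwines $\nabla_t^{E_\alpha}$ with $\nabla_t^{L_{\alpha\beta}}\otimes1+1\otimes\nabla_t^{E_\beta}$ for every $t$.

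Next we subtract the identity at $t=a$. Writing $C_t^{E_\alpha}=\nabla_t^{E_\alpha}-e_a^*\nabla^{E_\alpha}$, which is an $\operatorname{End}(E_\alpha^\rho)$-valued one-form, this gives
\[
 e_a^*\phi_{\alpha\beta}^{-1}\,C_t^{E_\alpha}\,e_a^*\phi_{\alpha\beta}
 =C_t^{\alpha\beta}+C_t^{E_\beta}.
\]
The relation is linear in these differences, so it survives averaging against $\rho$. Since $\rho$ has total mass one, it follows that $e_a^*\phi_{\alpha\beta}$ preserves the averaged connections $\nabla^{L_{\alpha\beta},\rho}$ and $\nabla^{E,\rho}$. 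Compatibility with the gerbe product $\mu_a$ is inherited from the original module, because all structure maps are pulled back by $e_a$. This shows that the data form a $\mathcal G_\rho$-module. I would not compute the curvature here, since it is not needed for the module axioms; it is not the scalar average, because $C_\rho\wedge C_\rho\neq0$.

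Direct sums are preserved because parallel transport is block diagonal and averaging is linear. For admissible pairs, the main point is the relative trace-class condition. In compatible paired trivializations, the transition operators of $E$ and $E'$ agree modulo trace class, and $\nabla^{E}-\nabla^{E'}$ is a smooth trace-class one-form. The transports $P_t^{E}$ and $P_t^{E'}$ solve ODEs whose coefficients differ by trace-class terms, so $P_t^{E}-P_t^{E'}$ is trace class, smoothly in the path and in $t$. I would prove this by Duhamel's formula,
\[
 P^E_t-P^{E'}_t=\int_a^t P^E_{t,s}\,(\text{trace-class difference})\,P^{E'}_s\,ds,
\]
with the trace-class ideal stable under bounded multiplication. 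Then $C_t^{E}-C_t^{E'}$ is trace class: it is a conjugated pullback of $\nabla^E-\nabla^{E'}$ plus terms built from $dP_t^E-dP_t^{E'}$, again controlled by differentiating Duhamel's formula in the path variable. Integrating in $t$ preserves the trace norm and its smooth dependence. The same argument in the quotient by the trace-class ideal shows the pair keeps a common quotient connection, compatible with overlap identifications because those are pulled back by $e_a$. This trace-norm control of transport derivatives is the step I expect to be the main obstacle.

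Finally, along $c_{\mathrm{lat}}$ every path is constant, so $P_t^{E_\alpha}$ is the identity. Its derivative in directions tangent to constant paths is zero, since the pullback bundle along $e_t\circ c_{\mathrm{lat}}=\operatorname{id}_M$ is the same for all $t$. Thus $c_{\mathrm{lat}}^*\nabla_t^{E_\alpha}=\nabla^{E_\alpha}$ for every $t$. Averaging with total mass one returns $\nabla^{E_\alpha}$, and $c_{\mathrm{lat}}^*e_a^*\phi_{\alpha\beta}=\phi_{\alpha\beta}$, so the restriction is the original module with connection.
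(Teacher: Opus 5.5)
Your proposal is correct and follows essentially the same route as the paper. Transport compatibility of $\phi_{\alpha\beta}$ gives the intertwining identity at every latitude with the fixed map $e_a^*\phi_{\alpha\beta}$, and linearity of the affine average then gives the module identity and direct sums. Variation of parameters (your Duhamel formula), together with the two-sided trace-class ideal and differentiation in the path variables, gives admissibility, and trivial transport on constant paths gives the restriction; your version simply spells out these steps in more detail.
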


\begin{proof}
Compatibility of \eqref{eq:character-module-descent} with connections
implies compatibility with parallel transport.  After transporting to the
initial fibres, the connection identity holds for every $t$ with the fixed
isomorphism $e_a^*\phi_{\alpha\beta}$.  Averaging gives
\begin{equation}
 (e_a^*\phi_{\alpha\beta})^{-1}\nabla^{E_\alpha,\rho}
 (e_a^*\phi_{\alpha\beta})
 =\nabla^{L_{\alpha\beta},\rho}\otimes1
  +1\otimes\nabla^{E_\beta,\rho}.
 \label{eq:character-averaged-module-descent}
\end{equation}
This proves the module assertion and also proves compatibility with direct
sums.

To check admissibility, choose compatible local charts for the Hilbert
bundles.  Variation of parameters expresses the difference between the
two parallel transports as an integral with one factor equal to the
difference of the connection forms.  The compatible quotient identifications also compare the transports
across changes of chart.  The trace-class operators form a two-sided
ideal in the bounded operators, and the integral is over a compact
interval.  The difference is therefore trace class.  The same
argument applies after differentiation in the path variables and shows
that the difference of the transported, and hence of the averaged,
connections is smooth in the trace norm.  This verifies admissibility.
On constant paths the parallel transports are identities and all the
pulled back connections coincide.  Normalization of the measure proves
the last assertion.
\end{proof}

It is useful to record the curvature of the averaged connection without
suppressing its noncommutative term.  Write
$\nabla_t^{E_\alpha}=\nabla_a^{E_\alpha}+C_t$ and
$C_\rho=\int_a^b C_t\rho(t)\,dt$.  Then
\begin{equation}
 F^{E_\alpha,\rho}
 =\int_a^b(P_t^{E_\alpha})^{-1}e_t^*F^{E_\alpha}
                   P_t^{E_\alpha}\,\rho(t)\,dt
 +C_\rho\wedge C_\rho
   -\int_a^b C_t\wedge C_t\,\rho(t)\,dt.
 \label{eq:character-averaged-module-curvature}
\end{equation}
This follows by expanding the curvature of
$\nabla_a^{E_\alpha}+C_\rho$.  It reduces to the curvature average for a
line bundle, as used above.  For a module, the full expression
\eqref{eq:character-averaged-module-curvature} is the curvature which enters
the character.  Thus the meridian transport contributes to the circular
iterated integrals even before the latter are performed.

\subsection{The twisted brane Bismut-Chern form}

We recall the form of the loop construction needed here, fixing its
orientation convention.  Let $V$ be a bundle with connection on an open
set of $\mathscr P M$.  On its loop space, let
$U_s^V:\operatorname{ev}_0^*V\to\operatorname{ev}_s^*V$ be parallel
transport in the circular direction, and put
\[
 R_V(s)=(U_s^V)^{-1}(\operatorname{ev}_s^*F^V) U_s^V.
\]
For $\Delta_n=\{0\leq s_1\leq\cdots\leq s_n\leq1\}$, define the
endomorphism-valued series
\begin{equation}
 \mathcal W(V)=
 \left(1+\sum_{n\geq1}(-u)^{-n}
 \int_{\Delta_n}R_V(s_1)\cdots R_V(s_n)\,ds_1\cdots ds_n\right)
 (U_1^V)^{-1}.
 \label{eq:character-loop-iterated-series}
\end{equation}
The inverse monodromy agrees with the convention $K_x=dx(V)$ and the
differential $d-u\iota_K$ used in this paper.

For a line bundle $L$ it
gives
\begin{equation}
 \mathcal W(L)=\operatorname{hol}_{\nabla^L}^{-1}
                e^{-u^{-1}\mathsf A F^L}.
 \label{eq:character-line-iterated-series}
\end{equation}

On $L(\mathscr P\sU_\alpha)$ set
\begin{equation}
 \begin{split}
 \omega_\alpha^\rho
 &=\Tr\left(\mathcal W(E_\alpha^\rho)
            -\mathcal W((E'_\alpha)^\rho)\right),\\
 \beta_\alpha^\rho
 &=e^{-u^{-1}\mathsf A B_\alpha^\rho}
       \omega_\alpha^\rho\otimes\sigma_\alpha^\rho.
 \end{split}
 \label{eq:character-path-loop-local-form}
\end{equation}
The connections used in the first line are the averaged connections of
\eqref{eq:character-averaged-module-connection}.

\begin{lemma}
\label{lem:character-loop-calculus}
The expressions \eqref{eq:character-path-loop-local-form} are smooth
completed differential forms.  They are invariant under loop rotation,
agree on intersections, and define an $\mathscr L_\rho$-valued
form $\beta^\rho$ on $\mathscr W$ satisfying
$D_\rho\beta^\rho=0$.
\end{lemma}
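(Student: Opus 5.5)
The plan is to reduce the lemma to the twisted Bismut-Chern theory of \cite[Section~3]{HM15}, applied to the gerbe $\mathcal G_\rho$ on the (infinite-dimensional) manifold $\mathscr P M$. That theory was formulated for a gerbe with connection and admissible modules on an arbitrary smooth base. Proposition~\ref{prop:character-path-gerbe} and Proposition~\ref{prop:character-averaged-modules} supply exactly such data on $\mathscr P M$: the gerbe $\mathcal G_\rho$ has curvings $B^\rho_\alpha$ and curvature $H_\rho$, and $(E^\rho,E'^\rho)$ is an admissible pair of $\mathcal G_\rho$-modules.

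The steps are as follows.

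\emph{Smoothness.} I would check smoothness first, since only the local situation on $L(\mathscr P\sU_\alpha)$ matters. The transports $U^V_s$ solve linear ODEs whose coefficients depend smoothly on the loop. The simplex integrals in \eqref{eq:character-loop-iterated-series} therefore produce smooth forms, and in each fixed form degree only finitely many $n$ contribute. This is compatible with the completion of Section~\ref{subsec:brane-periodic-completion}.

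\emph{Trace class.} For Hilbert modules, the individual $\mathcal W$'s are not trace class. I would write the difference $\mathcal W(E^\rho_\alpha)-\mathcal W(E'^\rho_\alpha)$ as a telescoping sum. Each term contains one difference factor: either $U^{E}-U^{E'}$ (handled by variation of parameters) or $R_E-R_{E'}$. Every such factor is trace class by admissibility, and the trace-class operators form an ideal, so the trace is defined.

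\emph{Rotation invariance.} This follows from cyclicity of the trace together with the circular structure of the iterated integrals (Chen's cyclic reparametrization argument), since $\mathsf A B^\rho_\alpha$ is invariant.

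\emph{Agreement on intersections.} Over $L(\mathscr P\sU_{\alpha\beta})$ I would use the descent identity \eqref{eq:character-averaged-module-descent}. It gives $E^\rho_\alpha\cong L^\rho_{\alpha\beta}\otimes E^\rho_\beta$ compatibly with connections, so the transports and curvatures split: $R_{E_\alpha}=R_{L}\,\mathrm{id}+R_{E_\beta}$, with $R_L$ scalar. Because $R_L$ is scalar it commutes out of the iterated integrals. This yields
\[
\mathcal W(E^\rho_\alpha)=\mathcal W(L^\rho_{\alpha\beta})\,\mathcal W(E^\rho_\beta),
\]
and by \eqref{eq:character-line-iterated-series} the first factor is $\operatorname{hol}^{-1}e^{-u^{-1}\mathsf A(B^\rho_\beta-B^\rho_\alpha)}$, using \eqref{eq:character-path-overlap-curvature}. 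The exponential factor cancels the change of curving in $\beta^\rho_\alpha$. The inverse holonomy factor is then absorbed by the frame change $\sigma^\rho_\alpha=\operatorname{hol}\,\sigma^\rho_\beta$, matching the convention \eqref{eq:frame-convention}. Checking that the sign and orientation conventions make these two factors cancel exactly is the bookkeeping to do carefully.

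\emph{Closedness.} This is the main obstacle. For each ordinary bundle, Bismut's identity gives
\[
(d-u\iota_{K_\ell})\Tr\mathcal W(V)=0.
\]
This holds also for the operator-valued averaged curvature \eqref{eq:character-averaged-module-curvature}, since the identity uses only that $F^V$ is the curvature of some connection; for Hilbert bundles one applies it to the traced difference. In the local frame $\sigma^\rho_\alpha$, the operator $D_\rho$ becomes
\[
e^{u^{-1}\mathsf A B^\rho_\alpha}(d-u\iota_{K_\ell})e^{-u^{-1}\mathsf A B^\rho_\alpha}
\]
up to the sign convention, by the same conjugation computation as in \eqref{eq:local-conjugation}, with connection form $-\iota_{K_\ell}\mathsf A B^\rho_\alpha$. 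So $D_\rho\beta^\rho_\alpha=0$ reduces to $(d-u\iota_{K_\ell})\omega^\rho_\alpha=0$. The points needing care are that the module curvature in $\mathcal W$ must be taken as the $B$-uncorrected curvature with signs consistent with \eqref{eq:character-target-chern-form}, and that the trace-class estimates justify differentiating under the trace; I would handle the latter by again telescoping the differences.
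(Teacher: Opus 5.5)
Your proposal follows essentially the same route as the paper's proof. Smoothness comes from the transport ODE with finitely many curvature insertions per coefficient, trace class from expanding paired differences so that each term has one trace-class factor, agreement on overlaps from the splitting $\mathcal W(E^\rho_\alpha)=\mathcal W(L^\rho_{\alpha\beta})\mathcal W(E^\rho_\beta)$ combined with \eqref{eq:character-line-iterated-series}, and closedness from the Bismut identity after conjugating away the curving.

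Three small points.
\begin{enumerate}[(i)]
\item The paper gets rotation invariance for free from $(d-u\iota_{K_\ell})^2=-u\mathcal L_{K_\ell}$ applied to the Bismut identity, so your separate cyclic-reparametrization argument is valid but unnecessary.
\item Your displayed conjugation has the exponentials inverted. It must read $D_\rho=e^{-u^{-1}\mathsf A B^\rho_\alpha}(d-u\iota_{K_\ell})e^{u^{-1}\mathsf A B^\rho_\alpha}$, as in \eqref{eq:local-conjugation}. With your ordering, $D_\rho\beta^\rho_\alpha$ does not reduce to $(d-u\iota_{K_\ell})\omega^\rho_\alpha$.
\item For Hilbert modules, the overlap comparison needs one more fact: the paired trace is unchanged by the compatible change of relative trivializations on the intersection. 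This holds because the extra terms are commutators with a trace-class factor, which have zero trace.
\end{enumerate}
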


\begin{proof}
We explain why the loop calculation applies on the part of the path space
under consideration.  Each coefficient of
\eqref{eq:character-loop-iterated-series} contains finitely many curvature
insertions and an integral over a compact simplex.  Parallel transport
depends smoothly on the path and on smooth families of paths.  This follows
from its ordinary differential equation; in Hilbert-bundle charts it is
an equation in bounded operators.  For a pair of connections as above,
differences of transports and curvatures are smooth in the trace norm.
Expanding a difference of products as a sum with one such factor shows
that every paired coefficient is trace class.  This includes the term with no curvature insertions, which is
the paired difference of inverse monodromies.  The operator-ideal
estimate
\[
 \|ATB\|_1\leq\|A\|\,\|T\|_1\,\|B\|
\]
bounds every term by the trace norm of its trace-class factor and
the operator norms of the remaining factors.  These bounds hold on compact
parameter families and persist after differentiation.  They justify the
integrations and traces coefficient by coefficient.  No analytic
convergence of the formal series in $u^{-1}$ is required.

The Bismut identity in these conventions is
\begin{equation}
 (d-u\iota_{K_\ell})\omega_\alpha^\rho=0.
 \label{eq:character-local-bismut-identity}
\end{equation}
It follows from the loop calculation recalled in
Section \ref{subsec:character-review}; see \cite{B85,HM15}.
Differentiating parallel
transport inserts curvature; the Bianchi identity treats the interior
terms, and Stokes' formula on $\Delta_n$ treats the faces.  The internal
faces cancel in pairs.  The end faces match the contraction term in the
next coefficient, while cyclicity of the trace identifies the two ends
of the loop.  These are identities for a connection evaluated on a
finite collection of tangent vectors and smooth parameter families.
The preceding smoothness and trace estimates justify the identical
calculation here.  In the relative case, commutators reduce to
commutators with a trace-class factor, whose traces vanish.  Squaring
$d-u\iota_{K_\ell}$ in \eqref{eq:character-local-bismut-identity}
also proves rotation invariance.

Let $h_{\alpha\beta}^\rho$ be the holonomy of
$L_{\alpha\beta}^\rho$.  The tensor product rule for parallel transport,
together with \eqref{eq:character-line-iterated-series}, gives
\[
 \omega_\alpha^\rho
 =(h_{\alpha\beta}^\rho)^{-1}
 e^{-u^{-1}\mathsf A(B_\beta^\rho-B_\alpha^\rho)}
 \omega_\beta^\rho.
\]
The paired trace is unchanged by a compatible change of relative
trivializations: the extra terms are commutators with a trace-class
factor and have zero trace.  Since
$\sigma_\alpha^\rho=h_{\alpha\beta}^\rho\sigma_\beta^\rho$,
the forms $\beta_\alpha^\rho$ agree on intersections.

Finally, in these
frames,
\[
 D_\rho=e^{-u^{-1}\mathsf A B_\alpha^\rho}
          (d-u\iota_{K_\ell})
         e^{u^{-1}\mathsf A B_\alpha^\rho}.
\]
Equation \eqref{eq:character-local-bismut-identity} proves the required
closedness.
\end{proof}

\begin{definition}
The {\bf twisted brane Bismut-Chern form} of the admissible pair
$(E,E')$ with connection is
\begin{equation}
 \operatorname{BBCh}_H(\nabla^E,\nabla^{E'})
 =\Gamma^*\beta^\rho
 \in\Omega^{\mathrm{even}}(S^2M,\mathcal L_H)^{\mathbb T}
 [[u,u^{-1}]],
 \label{eq:character-brane-definition}
\end{equation}
where coefficients are identified by
\eqref{eq:character-line-comparison}.  Equivalently, its local formula is
\begin{equation}
 \operatorname{BBCh}_H|_{S^2\sU_\alpha}
 =e^{-u^{-1}\overline{B_\alpha}}
   \Gamma^*\omega_\alpha^\rho\otimes s_\alpha.
 \label{eq:character-brane-local-form}
\end{equation}
\end{definition}

To study the dependence on connections, it is useful to make the
transgression form part of the construction. Fix an admissible pair
$(E,E')$ and a smooth path of admissible module connections
$\nabla_s=(\nabla_s^E,\nabla_s^{E'})$, $0\leq s\leq1$.
Let $p_M:M\times[0,1]\to M$ be projection, and equip the pulled
back modules with the connections
$\widetilde\nabla^E=\nabla_s^E+ds\,\partial_s$ and
$\widetilde\nabla^{E'}=\nabla_s^{E'}+ds\,\partial_s$.
The gerbe on the product is $p_M^*\mathcal G$.  There is a natural
map
\[
 F_I:S^2M\times[0,1]\longrightarrow S^2(M\times[0,1]),
 \qquad F_I(x,s)(p)=(x(p),s).
\]
Apply the brane construction to these product connections and put
\begin{equation}
 \widetilde\beta
 =F_I^*\operatorname{BBCh}_{p_M^*H}
       (\widetilde\nabla^E,\widetilde\nabla^{E'}).
 \label{eq:character-transgression-cylinder}
\end{equation}
Its coefficient line and twisting form on $S^2M\times[0,1]$ are
pulled back from $\mathcal L_H$ and $\overline H$.  Its restriction
at parameter $s$ is $\operatorname{BBCh}_H(\nabla_s)$.

\begin{definition}[\bf Twisted brane Bismut-Chern-Simons form]
\label{def:character-brane-cs}
\leavevmode\par\noindent
The twisted brane Bismut-Chern-Simons form associated to the path
$\nabla_\bullet$ is the odd completed form
\begin{equation}
 \operatorname{BBCS}_H(\nabla_\bullet)
 =\int_0^1\iota_{\partial_s}\widetilde\beta\,ds
 \in\Omega^{\mathrm{odd}}(S^2M,\mathcal L_H)^{\mathbb T}
 [[u,u^{-1}]].
 \label{eq:character-brane-cs-form}
\end{equation}
For the affine path
$\nabla_s=(1-s)\nabla_0+s\nabla_1$, we also write
$\operatorname{BBCS}_H(\nabla_0,\nabla_1)$.
\end{definition}

The definition records the chosen path of connections.  The affine
convention gives a specified form for a pair of endpoints; the
transgression identity itself holds for every smooth admissible path.
All integrals are taken coefficientwise, and the trace estimates used
in Lemma \ref{lem:character-loop-calculus} apply to this compact
parameter family.

\begin{theorem}
\label{thm:character-even}
The form \eqref{eq:character-brane-definition} is $Q_H$-closed.
For a smooth admissible path of module connections, its variation is
given by
\begin{equation}
 \operatorname{BBCh}_H(\nabla_1)-\operatorname{BBCh}_H(\nabla_0)
 =Q_H\operatorname{BBCS}_H(\nabla_\bullet).
 \label{eq:character-brane-transgression}
\end{equation}
Consequently, its cohomology class depends only on the class of
$(E,E')$ in $K^0(M,H)$ and defines an additive homomorphism
\begin{equation}
 \operatorname{BBCh}_H:K^0(M,H)
 \longrightarrow h^{\mathrm{even}}_{\mathbb T}
 (S^2M,\nabla^{\mathcal L_H}:\overline H).
 \label{eq:character-even-homomorphism}
\end{equation}
\end{theorem}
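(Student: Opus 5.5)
Closedness will be obtained by pulling back along $\Gamma$.  Lemma \ref{lem:character-loop-calculus} gives $D_\rho\beta^\rho=0$ on $\mathscr W$.  Proposition \ref{prop:character-line-comparison} gives $\Gamma^*D_\rho=Q_H\Gamma^*$, with the coefficient lines identified by \eqref{eq:character-line-comparison}.  Since $\Gamma$ takes values in $\mathscr W$, these combine to give
\[
 Q_H\operatorname{BBCh}_H(\nabla^E,\nabla^{E'})=\Gamma^*D_\rho\beta^\rho=0 .
\]
Rotation invariance of $\beta^\rho$, together with equivariance of $\Gamma$, shows that the form lies in the invariant subcomplex.

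For the transgression identity \eqref{eq:character-brane-transgression} I will apply the same closedness to the product data.  Let $\widetilde Q$ be the superconnection on $S^2M\times[0,1]$ built from the pulled-back line $\mathcal L_H$, its connection and $\overline H$, with the rotation acting trivially on $[0,1]$.  First check that $F_I^*$ intertwines the brane superconnection of $(M\times[0,1],p_M^*\mathcal G)$ with $\widetilde Q$.  The average of $p_M^*B_\alpha$ pulls back under $F_I$ to $\overline{B_\alpha}$ and has no $ds$ component, so $F_I^*$ carries the local frames and connection forms $-\iota_K\overline{B}$, and the average of $p_M^*H$, to the pulled-back data.  Closedness on $M\times[0,1]$ then gives $\widetilde Q\widetilde\beta=0$.

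Now write $\widetilde Q=Q_H+ds\,\partial_s$, using the frames pulled back from $S^2M$, and decompose $\widetilde\beta=\beta_s+ds\wedge\gamma_s$.  The $ds$ component of $\widetilde Q\widetilde\beta=0$ reads $\partial_s\beta_s=Q_H\gamma_s$, up to the sign produced by moving $ds$ past an odd operator.  Integrating from $0$ to $1$ and using $\iota_{\partial_s}\widetilde\beta=\pm\gamma_s$ gives
\[
 \beta_1-\beta_0=Q_H\int_0^1\iota_{\partial_s}\widetilde\beta\,ds=Q_H\operatorname{BBCS}_H(\nabla_\bullet).
\]
This is the homotopy formula \eqref{eq:interval-homotopy-formula}, applied exactly as in Lemma \ref{lem:completed-cylinder-invariance}; it works because $\iota_K$ and $\overline H$ are horizontal and the coefficient line is pulled back.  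Its restriction at $s$ is $\operatorname{BBCh}_H(\nabla_s)$, since the constant-$s$ slice of the product construction reproduces the averaging on $M$.  The signs should be fixed by a direct check with the convention $\iota_{\partial_s}(ds\wedge\gamma)=\gamma$.

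The class is independent of connections because admissible connections on a pair form an affine space, so any two are joined by the affine path.  Next, Proposition \ref{prop:character-averaged-modules} shows that averaging preserves direct sums, and $\Tr\mathcal W$ is additive.  Hence adding a pair $(F,F)$ with the same connection on both entries does not change the form, and the form is additive under direct sums.  Passing to $K^0(M,H)$ also requires invariance under isomorphism of pairs.  An isomorphism of gerbe modules transports connections and leaves the paired trace unchanged by conjugation invariance; the relative case uses the trace-class commutator argument of Lemma \ref{lem:character-loop-calculus}.  Together with stabilization these give a well-defined additive homomorphism.

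The main obstacle will be the identification of the $s$-slices and of the product superconnection under $F_I$.  The averaged connections on $\mathscr P(M\times[0,1])$ involve meridian transport for $\widetilde\nabla$, and these paths may move in $s$ as well.  One must check that the $ds\,\partial_s$ term produces no spurious contributions on the image of $F_I$: along $F_I$ the meridians have constant $s$, so the transport is that of $\nabla_s$.  In the Hilbert-module setting one must also confirm that the paired forms remain trace class uniformly in $s$.
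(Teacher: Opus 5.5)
Your proposal is correct and follows the paper's proof essentially step for step. You get closedness by pulling $D_\rho\beta^\rho=0$ back along $\Gamma$ via Lemma~\ref{lem:character-loop-calculus} and Proposition~\ref{prop:character-line-comparison}. You get the transgression by splitting the closedness of $\widetilde\beta$ on $S^2M\times[0,1]$ into its $ds$-free and $ds$ components, which gives $Q_H\beta_s=0$ and $\partial_s\beta_s=Q_H\gamma_s$ with no extra sign, and integrating in $s$. You pass to $K^0(M,H)$ through additivity, stabilization, isomorphism invariance and independence of the connections. Your extra checks, namely that $F_I^*$ carries the product coefficient line, connection and flux to pulled-back data and that the affine path stays admissible, fill in points the paper only asserts.
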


\begin{proof}
Closedness follows from Lemma \ref{lem:character-loop-calculus} and
Proposition \ref{prop:character-line-comparison}.  The same argument
on the product shows that \eqref{eq:character-transgression-cylinder}
is closed for the product differential.  Write
\[
 \widetilde\beta=\beta_s+ds\wedge\gamma_s,
 \qquad \beta_s=\operatorname{BBCh}_H(\nabla_s).
\]
Separating the two components of the closedness identity gives
\[
 Q_H\beta_s=0,\qquad \partial_s\beta_s=Q_H\gamma_s.
\]
Integration in $s$ proves
\eqref{eq:character-brane-transgression}.

Direct sums give sums of relative traces.  An isomorphism of modules
preserves the form when their connections are transported with it,
and the transgression identity removes the dependence on this choice
of connections.  Finally, addition of the same module with the same
connection to both entries contributes zero.  These are the
isomorphism and stabilization relations in the gerbe-module description
of $K^0(M,H)$, so the form defines the stated homomorphism.
\end{proof}

\subsection{Localization and the twisted brane Bismut-Chern character}

Let $\iota_{\mathscr P}:\mathscr P M\to L(\mathscr P M)$ denote
the constant-loop inclusion, and let
$\iota_M=j\circ c:M\to S^2M$ denote the inclusion of constant sphere
maps.  The maps introduced above satisfy
\begin{equation}
 \begin{CD}
 PM @>{j}>> S^2M\\
 @V{\kappa}VV @VV{\Gamma}V\\
 \mathscr P M @>{\iota_{\mathscr P}}>> \mathscr W.
 \end{CD}
 \label{eq:character-fixed-point-square}
\end{equation}
Here $\kappa^*H_\rho=H_P$ and $\kappa\circ c=c_{\mathrm{lat}}$.

\begin{proposition}
\label{prop:character-restriction}
In the canonical trivialization of the coefficient line on the fixed point
set, one has
\begin{equation}
 \begin{split}
 j^*\operatorname{BBCh}_H(\nabla^E,\nabla^{E'})
 &=\kappa^*\operatorname{Ch}_{H_\rho}
       (\nabla^{E,\rho},\nabla^{E',\rho}),\\
 \iota_M^*\operatorname{BBCh}_H(\nabla^E,\nabla^{E'})
 &=\operatorname{Ch}_H(\nabla^E,\nabla^{E'}).
 \end{split}
 \label{eq:character-restriction-identities}
\end{equation}
For a smooth admissible path of connections, one also has
\begin{equation}
 \iota_M^*\operatorname{BBCS}_H(\nabla_\bullet)
 =\operatorname{CS}_H(\nabla_\bullet),
 \label{eq:character-cs-restriction}
\end{equation}
where the target twisted Chern-Simons form is
\[
 \operatorname{CS}_H(\nabla_\bullet)
 =\int_0^1\iota_{\partial_s}
   \operatorname{Ch}_{p_M^*H}
   (\widetilde\nabla^E,\widetilde\nabla^{E'})\,ds.
\]
Consequently,
\begin{equation}
 \operatorname{Loc}\,[\operatorname{BBCh}_H(E,E')]
 =[\operatorname{Ch}_H(E,E')].
 \label{eq:character-even-localization}
\end{equation}
\end{proposition}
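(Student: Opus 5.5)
The proof reduces everything to the local formula \eqref{eq:character-brane-local-form} and the commutative square \eqref{eq:character-fixed-point-square}. Since $\Gamma\circ j=\iota_{\mathscr P}\circ\kappa$, we have $j^*\operatorname{BBCh}_H=\kappa^*\iota_{\mathscr P}^*\beta^\rho$. So the first identity becomes a statement about restricting the loop-space form $\beta^\rho$ to constant loops in $L(\mathscr P M)$. It then follows from the loop restriction identity \eqref{eq:character-review-twisted-bismut}, applied to the averaged gerbe $\mathcal G_\rho$ of Proposition \ref{prop:character-path-gerbe} and its module $(E^\rho,E'^\rho)$ of Proposition \ref{prop:character-averaged-modules}.

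Concretely, on a constant loop the circular transports $U_s^V$ are identities and $R_V(s)=F^V$ for all $s$. The simplex integrals therefore give $\operatorname{vol}(\Delta_n)=1/n!$, and $\iota_{\mathscr P}^*\mathcal W(V)=e^{-u^{-1}F^V}$. In addition, $\iota_{\mathscr P}^*\mathsf A B_\alpha^\rho=B_\alpha^\rho$ because the circle measure is normalized. Hence
\[
\iota_{\mathscr P}^*\beta_\alpha^\rho=e^{-u^{-1}B_\alpha^\rho}\Tr\bigl(e^{-u^{-1}F^{E_\alpha,\rho}}-e^{-u^{-1}F^{E'_\alpha,\rho}}\bigr)\otimes\iota_{\mathscr P}^*\sigma_\alpha^\rho .
\]
The holonomy frames $\sigma_\alpha^\rho$ become the canonical frame over constant loops, since the transition functions restrict to $1$. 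This is compatible with \eqref{eq:fixed-line-trivialization} through the identification $\Gamma^*\sigma_\alpha^\rho=s_\alpha$ of Proposition \ref{prop:character-line-comparison}. The result is $\operatorname{Ch}_{H_\rho}(\nabla^{E,\rho},\nabla^{E',\rho})$, and pulling back by $\kappa$ gives the first line of \eqref{eq:character-restriction-identities}.

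For the second identity, write $\iota_M=j\circ c$ and use $\kappa\circ c=c_{\mathrm{lat}}$. We then need $c_{\mathrm{lat}}^*\operatorname{Ch}_{H_\rho}(\nabla^{E,\rho},\nabla^{E',\rho})=\operatorname{Ch}_H(\nabla^E,\nabla^{E'})$. By Proposition \ref{prop:character-averaged-modules} the averaged module connections restrict to the original ones along constant paths. The curvature formula \eqref{eq:character-averaged-module-curvature} also restricts correctly: on constant paths the transports are trivial, so every $C_t$ vanishes after pullback, leaving $F^{E_\alpha}$. Finally $c_{\mathrm{lat}}^*B_\alpha^\rho=B_\alpha$ by normalization of $\rho$. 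The Chern–Simons identity \eqref{eq:character-cs-restriction} follows by running the same argument for the product data on $M\times[0,1]$. Restriction commutes with $F_I$, and $\iota_M^*$ commutes with $\int_0^1\iota_{\partial_s}$ because $\iota_M\times\operatorname{id}$ is compatible with the cylinder structure. We then apply the second identity to $p_M^*\mathcal G$.

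For \eqref{eq:character-even-localization}, use Theorem \ref{thm:brane-to-target}, which states that $\operatorname{Loc}$ is restriction to constant sphere maps in the canonical trivialization, because $c^*\eta=0$. Combined with the second identity this gives the claim immediately.

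The main obstacle is the frame bookkeeping. One must check that the canonical trivialization of $j^*\mathcal L_H$ used in Theorem \ref{thm:brane-to-path} agrees, under \eqref{eq:character-line-comparison}, with the canonical trivialization of $\mathscr L_\rho$ over constant loops. Both are given by frames whose transition functions are identically $1$ there, and the connection forms $-\iota_K\overline{B_\alpha}$ vanish because $K=0$ on fixed points. With these two observations the check reduces to matching $s_\alpha$ with $\Gamma^*\sigma_\alpha^\rho$, which Proposition \ref{prop:character-line-comparison} already provides. The trace-class justifications for passing restrictions through traces are as in Lemma \ref{lem:character-loop-calculus}.
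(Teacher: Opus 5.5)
Your proposal is correct and follows essentially the same route as the paper. Both use the fixed-point square $\Gamma\circ j=\iota_{\mathscr P}\circ\kappa$, restrict the iterated series to constant loops via $\operatorname{vol}(\Delta_n)=1/n!$, derive the second identity from the averaged-module restriction together with the normalization of $\rho$, run the same argument on the cylinder for the Chern--Simons form, and use $c^*\eta=0$ for localization. Your frame bookkeeping only spells out the paper's remark that the coefficient-line frame restricts to $1$.
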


\begin{proof}
On a constant loop, circular parallel transport is the identity and
$R_V(s)$ is independent of $s$.  Since the volume of $\Delta_n$ is
$1/n!$, \eqref{eq:character-loop-iterated-series} restricts to
$e^{-u^{-1}F^V}$.  The coefficient-line frame restricts to $1$.
The square \eqref{eq:character-fixed-point-square} proves the first
identity.  Proposition \ref{prop:character-averaged-modules} and the
normalization of $\rho$ prove the second.  Applying this same
restriction identity on $M\times[0,1]$ and then integrating in $s$
proves \eqref{eq:character-cs-restriction}.  Finally,
$\operatorname{Loc}=c^*e^{u^{-1}\eta}j^*$ and $c^*\eta=0$, because the
homotopy defining $\eta$ is stationary on constant paths.  This gives
\eqref{eq:character-even-localization}.
\end{proof}

We also need the odd construction, since T-duality reverses parity.
Let $q:M\times S^1\to M$.  Suspension identifies $K^1(M,H)$ with
the kernel of restriction
$K^0(M\times S^1,q^*H)\to K^0(M,H)$ at a chosen point of $S^1$.
Write $\sigma a$ for the suspended class of $a\in K^1(M,H)$, and choose
a gerbe-module representative for it.  With
\[
 \jmath:S^2M\times S^1\longrightarrow S^2(M\times S^1),
 \qquad \jmath(x,s)(p)=(x(p),s),
\]
define an odd form by
\begin{equation}
 \operatorname{BBCh}_H^{\mathrm{odd}}(a)
 =u\int_{S^1}\jmath^*\operatorname{BBCh}_{q^*H}(\sigma a).
 \label{eq:character-odd-suspension}
\end{equation}
The suspension circle has total length one and the orientation used in
the ordinary odd Chern character.  Its integral is taken coefficientwise.
The factor $u$ is required by our periodic convention: a term of degree
$2k$ has coefficient $u^{-k}$, whereas after integration the degree
$2k-1$ term has coefficient $u^{-(k-1)}$.

\begin{theorem}[Twisted brane Bismut-Chern character and localization]
\label{thm:character-localization}
The even construction and \eqref{eq:character-odd-suspension} define a
natural homomorphism in both parities
\begin{equation}
 \operatorname{BBCh}_H:K^\bullet(M,H)
 \longrightarrow h^\bullet_{\mathbb T}
 (S^2M,\nabla^{\mathcal L_H}:\overline H).
 \label{eq:character-all-parities}
\end{equation}
It is represented by $Q_H$-closed forms, and the diagram
\begin{equation}
 \begin{tikzcd}[column sep=large,row sep=large]
 K^\bullet(M,H)
 \arrow[r,"\operatorname{BBCh}_H"]
 \arrow[dr,"\operatorname{Ch}_H"'] &
 h^\bullet_{\mathbb T}(S^2M,\nabla^{\mathcal L_H}:\overline H)
 \arrow[d,"\operatorname{Loc}","\cong"']\\
 &H^\bullet(\Omega^\bullet(M)[[u,u^{-1}]],d+u^{-1}H)
 \end{tikzcd}
 \label{eq:character-localization-square}
\end{equation}
commutes.  Naturality is with respect to smooth pullback of the target
and of its gerbe with connection.
\end{theorem}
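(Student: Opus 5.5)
The even part is already in hand: Theorem~\ref{thm:character-even} gives a homomorphism $K^0(M,H)\to h^{\mathrm{even}}_{\mathbb T}(S^2M,\nabla^{\mathcal L_H}:\overline H)$ represented by $Q_H$-closed forms. Proposition~\ref{prop:character-restriction} gives $\operatorname{Loc}\circ\operatorname{BBCh}_H=\operatorname{Ch}_H$ on $K^0$. What remains is the odd part, naturality, and the odd case of the triangle.

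\textbf{Odd closedness.} Apply Theorem~\ref{thm:character-even} on $M\times S^1$ with the gerbe $q^*\mathcal G$. This gives that $\operatorname{BBCh}_{q^*H}(\sigma a)$ is $Q_{q^*H}$-closed on $S^2(M\times S^1)$. Pulling back along $\jmath$, I would check the following.
\begin{itemize}
\item $\jmath^*\mathcal L_{q^*H}=\operatorname{pr}^*\mathcal L_H$, with connection and flux. Latitude holonomies of $q^*L_{\alpha\beta}$ along $\jmath(x,s)$ are those of $L_{\alpha\beta}$ along $x$, and $\overline{q^*B_\alpha}$ pulls back to $\overline{B_\alpha}$ because $\jmath(x,s)$ has constant second coordinate.
\item $K$ on $S^2M\times S^1$ is tangent to $S^2M$, so $\iota_K$ commutes with $\int_{S^1}$ up to the usual sign.
\end{itemize}
Hence $\jmath^*Q_{q^*H}=Q_H\jmath^*$, with $Q_H$ acting on the $S^2M$ factor. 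Fibre integration over the closed circle anticommutes with $Q_H$ (Stokes, no boundary), and the factor $u$ is even, so $\operatorname{BBCh}^{\mathrm{odd}}_H(a)$ is $Q_H$-closed and $\mathbb T$-invariant.

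\textbf{Odd well-definedness.} Change of representative of $\sigma a$ changes the even form by $Q_{q^*H}\operatorname{BBCS}$ (Theorem~\ref{thm:character-even}). After $\jmath^*$ and $u\int_{S^1}$, this becomes a $Q_H$-exact term. Additivity follows from additivity of the even character and linearity of fibre integration. This yields the homomorphism in \eqref{eq:character-all-parities}.

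\textbf{Naturality.} For $f:N\to M$ with $f^*\mathcal G$, the induced map $S^2f:S^2N\to S^2M$ is equivariant. It pulls back the average latitude holonomy line, its connection (from the local formula $-\iota_K\overline{B_\alpha}$) and $\overline H$. Parallel transports and averaged module connections are compatible with pullback, so the local formula \eqref{eq:character-brane-local-form} gives $(S^2f)^*\operatorname{BBCh}_H=\operatorname{BBCh}_{f^*H}\circ f^*$ at the level of forms. The suspension is likewise compatible.

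\textbf{Odd commutativity.} Theorem~\ref{thm:brane-to-target} shows that $\operatorname{Loc}$ is restriction $\iota_M^*$ to constant spheres. Then
\[
\iota_M^*\operatorname{BBCh}^{\mathrm{odd}}_H(a)=u\int_{S^1}(\iota_M\times\mathrm{id})^*\jmath^*\operatorname{BBCh}_{q^*H}(\sigma a),
\]
since $\iota_M$ acts only on the $S^2M$ factor and so commutes with fibre integration. Moreover $\jmath\circ(\iota_M\times\mathrm{id})=\iota_{M\times S^1}$, the constant-sphere inclusion of $M\times S^1$. By Proposition~\ref{prop:character-restriction} on $M\times S^1$, the expression becomes $u\int_{S^1}\operatorname{Ch}_{q^*H}(\sigma a)$, which is by definition the periodic odd twisted Chern character $\operatorname{Ch}_H(a)$.

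\textbf{Main obstacle.} The main obstacle is bookkeeping of conventions: that $\operatorname{Ch}_H$ in odd degree is defined with the same circle orientation and the same factor $u$, and that the canonical trivializations on constant spheres match under $\jmath$. I would first verify the degree count $u^{-k}\mapsto u^{-(k-1)}$ noted after \eqref{eq:character-odd-suspension}. Then I would take the definition of the odd target character to be exactly $u\int_{S^1}\operatorname{Ch}_{q^*H}(\sigma\,\cdot)$, so the triangle commutes on the nose at the form level after restriction.
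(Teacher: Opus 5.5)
Your proposal is correct and follows the paper's proof step for step. The even case comes from Theorem~\ref{thm:character-even} and Proposition~\ref{prop:character-restriction}. In the odd case, $\jmath$ pulls back the coefficient line and flux from $S^2M$, and fibre integration over the closed circle anticommutes with the product differential; note that this operator also contains $d$ in the circle direction, not only $Q_H$ on the $S^2M$ factor, which your Stokes remark covers. The identity $\jmath\circ(\iota_M\times\mathrm{id})=\iota_{M\times S^1}$ then reduces the triangle to $u\int_{S^1}\operatorname{Ch}_{q^*H}(\sigma a)=\operatorname{Ch}_H^{\mathrm{odd}}(a)$, and naturality follows because every ingredient commutes with pullback. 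The only cosmetic difference is that the paper cites this last identity as the usual compatibility of the Chern character with suspension, while you take it as the definition of the odd target character; that is consistent with how the paper introduces $\operatorname{Ch}_H$ in odd degree.
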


\begin{proof}
The even assertion is Theorem \ref{thm:character-even} and Proposition
\ref{prop:character-restriction}.  On $S^2M\times S^1$ the coefficient
line and the twisting form are pulled back from $S^2M$.  Fibre integration
over the closed circle anticommutes with the odd product differential,
so \eqref{eq:character-odd-suspension} is closed and takes exact forms
to exact forms.  The even theorem therefore makes its class independent
of the representative of $\sigma a$.  Restricting to constant spheres
and integrating gives
\[
 u\int_{S^1}\operatorname{Ch}_{q^*H}(\sigma a)
 =\operatorname{Ch}_H^{\mathrm{odd}}(a)
\]
in periodic twisted cohomology.  This is the usual compatibility of the
Chern character with suspension, with the periodic factor explained
above.  Proposition \ref{prop:character-restriction} now proves the
odd part of the diagram as well.  Evaluation, parallel transport, the
averages, and the iterated integrals all commute with pullback, which
proves naturality.
\end{proof}

\begin{remark}
(i) The equality
$\operatorname{BBCh}_H=\operatorname{Loc}^{-1}\operatorname{Ch}_H$
characterizes the resulting homomorphism on cohomology.  The construction
above supplies its differential-form representatives and the transgression
between different connections.  The curvature terms in
\eqref{eq:character-averaged-module-curvature} and the circular parallel
transports describe the geometry of these representatives on the brane
space.

(ii) The construction also respects the levels which occur in T-duality.
Replacing $\mathcal G$ by $\mathcal G^{\otimes m}$, with the dual gerbe
used for negative $m$, gives
\begin{equation}
 \operatorname{BBCh}_{mH}:K^\bullet(M,mH)
 \longrightarrow h^\bullet_{\mathbb T}
 (S^2M,\nabla^{\mathcal L_H^{\otimes m}}:m\overline H),
 \qquad m\in\mathbb Z.
 \label{eq:character-levels}
\end{equation}
At every level the localization identity is
$\operatorname{Loc}_m\operatorname{BBCh}_{mH}=\operatorname{Ch}_{mH}$.
Thus these characters assemble over the level variable $y$.  We now turn
to the Hori morphisms and to the relation between the level-one character
and twisted $K$-theory T-duality.
\end{remark}

\section{Graded genus-0 T-duality with \texorpdfstring{$H$}{H}-flux on brane spaces}
\label{sec:genus-zero-t-duality}

The localization theorem of Section \ref{sec:brane-localization} brings the exotic twisted
cohomology of the brane space back down to the periodic twisted cohomology of
the target. This raises a further question: if two targets are related by T-duality, does the relation persist, in a suitable sense, on their brane spaces? The localization isomorphism provides such a relation on
cohomology.  It carries a brane class to the target, where the
ordinary Hori transform applies, and the localization isomorphism
on the dual side then gives the corresponding brane class.

We first develop the cohomological argument. We begin by recalling the geometry
of T-duality with $H$-flux, together with the classical Hori isomorphism. We
then introduce its level $m$ refinement and its completed periodic forms.
With this finite-dimensional preparation in hand, we return to the average
holonomy line bundle and establish an invariant brane localization theorem at
each level. Finally, we transport the periodic Hori morphisms through brane
localization; the resulting maps assemble into a single graded genus-0 Hori morphism, whose composite with its dual is governed by the Euler operator in the level variable.  We then use the character constructed in
Section \ref{sec:brane-character} to relate the level-one morphism to
twisted $K$-theory T-duality.

\subsection{Review of T-duality with \texorpdfstring{$H$}{H}-flux}

At its simplest, T-duality replaces a circle of radius $R$ by a circle of
radius $R^{-1}$.  In the presence of background flux, the circle fibration and the
flux must be considered together.  The original flux determines the
topology of the dual circle bundle, while the topology of the original bundle
returns as part of the dual flux.  We recall the mathematical formulation of
this principle following \cite{BEM04a,BEM04b}, and use the notation of
\cite{HM24,HM25}.  It is useful to settle one convention at the outset.  The
circle rotating the source sphere will be denoted by $\mathbb T_{\rm rot}$,
whereas $\mathbb T$ and $\widehat{\mathbb T}$ will always denote the structure
circles of the two spacetime bundles.

The geometric data are displayed in the diagram
\[
\begin{tikzcd}[column sep=large, row sep=large]
\mathbb{T}
  \arrow[r]
&
(Z,A,H)
  \arrow[dr, "\pi"']
&&
(\widehat{Z},\widehat{A},\widehat{H})
  \arrow[dl, "\widehat{\pi}"]
&
\widehat{\mathbb{T}}
  \arrow[l]
\\
&&
X.
\end{tikzcd}
\]
Here $A$ and $\widehat A$ are invariant connection 1-forms, normalized to
have integral one on the corresponding circle fibres.  The forms
$H\in\Omega^3(Z)$ and $\widehat H\in\Omega^3(\widehat Z)$ are closed and
invariant, and have integral periods.  We suppress factors of
$2\pi\sqrt{-1}$.   $F_A$ and $F_{\widehat A}$ denote the curvature forms on
$X$,
\begin{equation}
 dA=\pi^*F_A,
 \qquad
 d\widehat A=\widehat\pi^*F_{\widehat A}.
 \label{eq:T-dual-curvatures}
\end{equation}

The two total spaces meet naturally over $X$ in the correspondence space
\begin{equation}
 Z\times_X\widehat Z
 =\{(z,\widehat z)\in Z\times\widehat Z:
       \pi(z)=\widehat\pi(\widehat z)\}.
 \label{eq:correspondence-space}
\end{equation}
This space is also known as the doubled space.  If $p$ and $\widehat p$ denote
the two projections, its geometry is summarized by the commutative diagram
\begin{center}
\begin{tikzcd}[column sep=large, row sep=large]
 &Z \times_X \widehat{Z} \arrow[dl, "p"'] \arrow[rd, "\widehat{p}"] & \\
Z\arrow[rd, "\pi"'] & & \widehat{Z} \arrow[ld, "\widehat{\pi}"] \\
 & X. &
\end{tikzcd}
\end{center}

The pairs $(Z,[H])$ and $(\widehat Z,[\widehat H])$ are said to be T-dual over
$X$ when
\begin{equation}
 [F_{\widehat A}]=\pi_*[H],
 \qquad
 [F_A]=\widehat\pi_*[\widehat H]
 \quad\text{in }H^2(X,\mathbb Z),
 \label{eq:T-dual-pushforward-relations}
\end{equation}
and
\begin{equation}
 p^*[H]=\widehat p^*[\widehat H]
 \quad\text{in }H^3(Z\times_X\widehat Z,\mathbb Z).
 \label{eq:topological-T-duality-correspondence}
\end{equation}
Thus T-duality exchanges the first Chern class of either circle bundle with the
fibre integral of the flux on the other.

The existence of the dual pair is already visible in the Gysin sequence.  For
$\pi:Z\to X$ the relevant portion is
\begin{center}
\begin{tikzcd}
\cdots \arrow[r]
& H^{3}(X,\mathbb{Z}) \arrow[r, "\pi^{*}"]
& H^3(Z,\mathbb{Z}) \arrow[r, "\pi_{*}"]
& H^{2}(X,\mathbb{Z}) \arrow[r, "\lbrack F_A\rbrack\wedge"]
& H^{4}(X,\mathbb{Z}) \arrow[r]
& \cdots .
\end{tikzcd}
\end{center}
Starting with $[H]\in H^3(Z,\mathbb Z)$, set
$[F_{\widehat A}]=\pi_*[H]$ and choose a principal circle bundle
$\widehat\pi:\widehat Z\to X$ with this first Chern class.  The Gysin sequence
for $\widehat Z$ is
\begin{center}
\begin{tikzcd}
\cdots \arrow[r]
& H^{3}(X,\mathbb{Z}) \arrow[r, "\widehat{\pi}^{*}"]
& H^3(\widehat{Z},\mathbb{Z}) \arrow[r, "\widehat{\pi}_{*}"]
& H^{2}(X,\mathbb{Z}) \arrow[r, "\lbrack F_{\widehat A}\rbrack\wedge"]
& H^{4}(X,\mathbb{Z}) \arrow[r]
& \cdots .
\end{tikzcd}
\end{center}
Exactness of the first sequence gives
\[
 [F_A]\wedge[F_{\widehat A}]
 =[F_A]\wedge\pi_*[H]=0.
\]
Exactness of the second therefore provides a class
$[\widehat H]\in H^3(\widehat Z,\mathbb Z)$ satisfying
$[F_A]=\widehat\pi_*[\widehat H]$.  The global statement, which may be regarded
as a geometric form of the Buscher rules, is the following theorem.

\begin{theorem}[\cite{BEM04a}]
\label{thm:existence-T-dual-pair}
Let $\pi:Z\to X$ be a principal circle bundle and let
$[H]\in H^3(Z,\mathbb Z)$.  There is a principal circle bundle
$\widehat\pi:\widehat Z\to X$ and a class
$[\widehat H]\in H^3(\widehat Z,\mathbb Z)$ such that
$(Z,[H])$ and $(\widehat Z,[\widehat H])$ are T-dual.  The resulting pair is
unique up to bundle automorphism.  In particular, the construction is
involutive.
\end{theorem}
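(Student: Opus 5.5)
We construct $(\widehat Z,[\widehat H])$ from the two Gysin sequences displayed just before the statement, and then verify the correspondence-space condition \eqref{eq:topological-T-duality-correspondence}, uniqueness and involutivity. Set $[F_{\widehat A}]=\pi_*[H]\in H^2(X,\mathbb Z)$ and let $\widehat\pi:\widehat Z\to X$ be the principal circle bundle with this first Chern class; it is unique up to isomorphism. Exactness of the Gysin sequence for $Z$ gives $[F_A]\wedge[F_{\widehat A}]=0$. Exactness of the Gysin sequence for $\widehat Z$ then gives some $[\widehat H]$ with $\widehat\pi_*[\widehat H]=[F_A]$. This class is determined only modulo $\widehat\pi^*H^3(X,\mathbb Z)$, and we use that freedom to arrange \eqref{eq:topological-T-duality-correspondence}.

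To fix the choice, work on the correspondence space $Z\times_X\widehat Z$, viewed as a principal $\mathbb T$-bundle over $\widehat Z$ via $\widehat p$ and as a principal $\widehat{\mathbb T}$-bundle over $Z$ via $p$.
\begin{enumerate}[(i)]
\item Choose connections $A$ and $\widehat A$. At the level of forms, write $H=A\wedge\pi^*F_{\widehat A}+\pi^*H_3$ after adjusting $H$ within its class, where $H_3\in\Omega^3(X)$.
\item Define $\widehat H=\widehat A\wedge\widehat\pi^*F_A+\widehat\pi^*H_3$. This form is closed because $F_A\wedge F_{\widehat A}$ is exact, which lets $H_3$ be corrected so that $dH_3=-F_A\wedge F_{\widehat A}$.
\item On the correspondence space one then computes $p^*H-\widehat p^*\widehat H=d(p^*A\wedge\widehat p^*\widehat A)$. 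This is the form identity \eqref{eq:introduction-T-duality-identity} with $\mathcal F=A\wedge\widehat A$.
\end{enumerate}
Integrally, the same decomposition is carried out using the Leray--Serre filtration of $H^3(Z,\mathbb Z)$, or directly by Bunke--Schick's classifying-space argument. In that argument, pairs (circle bundle, degree-3 class) are classified by a space $R$, and the dual is constructed as a pullback. This yields integral classes with $p^*[H]=\widehat p^*[\widehat H]$.

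\textbf{Main obstacle.} The hardest step is this integral refinement, including the torsion in $H^3(X,\mathbb Z)$. Here I would invoke \cite{BunkeSchick}: a T-duality triple over $X$ consists of the two bundles together with an isomorphism of the pulled-back gerbes on the correspondence space, subject to a normalization along the fibres. Bunke--Schick show that the map from triples to pairs $(Z,[H])$ is bijective up to isomorphism. This bijectivity gives existence and uniqueness of $(\widehat Z,[\widehat H])$ up to bundle automorphism at once. In particular, the ambiguity $\widehat\pi^*H^3(X)$ is fixed by the requirement on the correspondence space, modulo automorphisms of $\widehat Z$ given by maps $X\to\widehat{\mathbb T}$; these automorphisms shift $[\widehat H]$ by $\widehat\pi^*(c\cup[F_{\widehat A}])$ with $c\in H^1(X,\mathbb Z)$.

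\textbf{Involutivity.} The conditions \eqref{eq:T-dual-pushforward-relations} and \eqref{eq:topological-T-duality-correspondence} are symmetric under exchanging the hatted and unhatted data. So if $(\widehat Z,[\widehat H])$ is dual to $(Z,[H])$, then $(Z,[H])$ satisfies the defining properties of a dual of $(\widehat Z,[\widehat H])$. Uniqueness then identifies the dual of the dual with $(Z,[H])$ up to bundle automorphism.
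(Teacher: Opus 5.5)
Your existence argument is the paper's own (the two Gysin sequences). Handing the integral correspondence condition to Bunke--Schick is no worse than the paper's citation of BEM.

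\textbf{The gap is in uniqueness and involutivity.} You assert that \eqref{eq:topological-T-duality-correspondence} fixes the $\widehat\pi^*H^3(X;\mathbb Z)$ ambiguity in $[\widehat H]$ up to gauge automorphisms. What the condition actually permits is $\widehat\pi^*(\ker q^*)$, with $q:Z\times_X\widehat Z\to X$. A gauge automorphism of class $c\in H^1(X;\mathbb Z)$ shifts $[\widehat H]$ only by $\pm\widehat\pi^*(c\cup c_1(Z))$, because $\widehat\pi_*[\widehat H]=c_1(Z)$. Your $\widehat\pi^*(c\cup[F_{\widehat A}])$ vanishes identically, since $\widehat\pi^*c_1(\widehat Z)=0$.

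With torsion Chern classes, $\ker q^*$ can be strictly larger than $c_1(Z)\cup H^1(X)+c_1(\widehat Z)\cup H^1(X)$: the differential $d_3$ on multiples of the fibre class of this torus bundle contributes. Here is a concrete case.
\begin{itemize}
\item Let $X$ be the $3$-skeleton of $\mathbb{RP}^\infty\times\mathbb{RP}^\infty$ (thickened to a manifold if you like).
\item Let $a_1,a_2\in H^1(X;\mathbb Z/2)$ be the two generators and $\beta$ the integral Bockstein. Take $c_1(Z)=\beta a_1$, $c_1(\widehat Z)=\beta a_2$, and $b=\beta(a_1a_2)$, which is nonzero in $H^3(X;\mathbb Z)$.
\item Then $H^1(X;\mathbb Z)=0$ and $H^4(X;\mathbb Z)=0$. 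The latter gives $c_1(Z)\cup c_1(\widehat Z)=0$, so duals exist.
\item One checks $\pi^*b=\pm\pi^*c_1(\widehat Z)\cup\xi$ for the generator $\xi$ of $H^1(Z;\mathbb Z)\cong\mathbb Z$. The Gysin sequence of $p$ then gives $q^*b=p^*\pi^*b=0$.
\item On the other hand $\widehat\pi^*b\neq0$, because $\ker\widehat\pi^*=c_1(\widehat Z)\cup H^1(X)=0$.
\end{itemize}
So $[\widehat H]$ and $[\widehat H]+\widehat\pi^*b$ both satisfy \eqref{eq:T-dual-pushforward-relations} and \eqref{eq:topological-T-duality-correspondence}. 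Yet no bundle automorphism relates them, since $H^1(X;\mathbb Z)=0$. Uniqueness therefore fails for the symmetric cohomological conditions, and your involutivity argument, which rests on exactly that uniqueness, fails with it.

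\textbf{The repair.} Stay inside the Bunke--Schick triple formalism throughout. Their bijection gives existence and uniqueness of triples, and a triple does imply the cohomological conditions, so existence is unaffected. The fibrewise normalization of the gerbe isomorphism $u$ is what excludes classes such as $\widehat\pi^*b$. Involutivity then requires checking that the reversed triple, with factors swapped and $u$ inverted, again satisfies that normalization. The statement should be read with T-duality in this sense.

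\textbf{Two smaller slips.}
\begin{itemize}
\item With your decompositions, $p^*H-\widehat p^*\widehat H=-d(p^*A\wedge\widehat p^*\widehat A)$. So $\mathcal F=-p^*A\wedge\widehat p^*\widehat A$, as in \eqref{eq:T-duality-two-form}, not $+p^*A\wedge\widehat p^*\widehat A$.
\item The relation $dH_3=-F_A\wedge F_{\widehat A}$ is forced by $dH=0$, so nothing needs to be corrected.
\end{itemize}
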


The same construction admits a useful local description, which will also fix
our conventions for the curvings.  Let
$(H,B_\alpha,F_{\alpha\beta},L_{\alpha\beta})$ be a gerbe with connection on
$Z$, and choose a connection one-form $A$ on $Z$.  If $v$ is the vector field
generating the $\mathbb T$-action, define
\begin{equation}
 \widehat A_\alpha=-\iota_vB_\alpha.
 \label{newconn}
\end{equation}
On $U_\alpha\times\widehat{\mathbb T}$, with fibre coordinate
$\widehat\theta_\alpha$, the form
\[
 \widehat A=\widehat A_\alpha+d\widehat\theta_\alpha
\]
is a connection one-form.  The descent relations for the gerbe data are exactly
what is needed to glue these local expressions.  In this way one recovers the
T-dual circle bundle $\widehat Z\to X$ together with its connection.

We may assume, without loss of generality, that $H$ is
$\mathbb T$-invariant.  Consider
\[
 \Omega=H-A\wedge F_{\widehat A}.
\]
Since $\iota_v\Omega=0$ and $\mathcal L_v\Omega=0$, the form $\Omega$ is basic
and hence descends to a three-form on $X$.  Suppressing pullbacks from the base,
define
\begin{equation}
 \widehat H=F_A\wedge\widehat A+\Omega.
 \label{split}
\end{equation}
This is a closed three-form on $\widehat Z$.  On the correspondence space one
then obtains the differential form version of the Buscher relation,
\begin{equation}
 \widehat p^*\widehat H
 =p^*H+d\bigl(p^*A\wedge\widehat p^*\widehat A\bigr).
 \label{eq:T-dual-flux-Buscher-relation}
\end{equation}

The curving on the dual side can be read from the same identity.  On
$U_\alpha\times\mathbb T\times\widehat{\mathbb T}$, the Buscher rules give
\begin{equation}
 \widehat B_\alpha
 =B_\alpha+A\wedge\widehat A
   -d\theta_\alpha\wedge d\widehat\theta_\alpha.
 \label{buscher}
\end{equation}
Indeed,
\[
\begin{split}
 \iota_v\widehat B_\alpha
 &=\iota_v\bigl(
 B_\alpha+A\wedge\widehat A
 -d\theta_\alpha\wedge d\widehat\theta_\alpha
 \bigr)\\
 &=-\widehat A_\alpha+\widehat A-d\widehat\theta_\alpha=0.
\end{split}
\]
Thus $\widehat B_\alpha$ is a two-form on $\widehat Z$, rather than merely a
form on the correspondence space, and $d\widehat B_\alpha=\widehat H$.  The
remaining descent equations produce the full T-dual gerbe with connection
$(\widehat H,\widehat B_\alpha,\widehat F_{\alpha\beta},
\widehat L_{\alpha\beta})$; see \cite{HM24,HM25}.

It is convenient to collect the preceding relation into the two-form
\begin{equation}
 \mathcal F=-p^*A\wedge\widehat p^*\widehat A.
 \label{eq:T-duality-two-form}
\end{equation}
Equation \eqref{eq:T-dual-flux-Buscher-relation} is then equivalent to
\begin{equation}
 p^*H-\widehat p^*\widehat H=d\mathcal F.
 \label{eq:T-duality-flux-identity}
\end{equation}
This simple transgression identity is the differential form heart of the Hori
transform.  Write $d_H=d+H\wedge$ and
$d_{\widehat H}=d+\widehat H\wedge$.

\begin{theorem}[Hori isomorphism \cite{BEM04a}]
\label{thm:classical-Hori-isomorphism}
The map
\begin{equation}
 \begin{split}
 T:(\Omega^\bullet(Z)^{\mathbb T},d_H)&\longrightarrow
 (\Omega^{\bullet+1}(\widehat Z)^{\widehat{\mathbb T}},-d_{\widehat H}),\\
 \omega&\longmapsto
 \widehat p_*\bigl(e^{\mathcal F}p^*\omega\bigr)
 =\int_{\mathbb T}
 e^{-p^*A\wedge\widehat p^*\widehat A}p^*\omega
 \end{split}
 \label{eq:classical-Hori-map}
\end{equation}
is an isomorphism of $\mathbb Z_2$-graded complexes.  It induces the
parity-reversing isomorphism
\begin{equation}
 T:H^\bullet(Z,H)\xrightarrow{\ \cong\ }
 H^{\bullet+1}(\widehat Z,\widehat H).
 \label{eq:classical-Hori-cohomology}
\end{equation}
\end{theorem}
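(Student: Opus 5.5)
The argument has two parts: a chain-map identity on the correspondence space, and an explicit inverse given by the dual transform.

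\emph{Chain map.} For $\omega\in\Omega^\bullet(Z)^{\mathbb T}$ I will compute $d_{\widehat H}T\omega$ on $Z\times_X\widehat Z$. Since the fibres of $\widehat p$ are closed circles, $d$ commutes with $\widehat p_*$ up to the sign convention for an odd fibre integral, so that $d\widehat p_*=-\widehat p_*d$ in the ordering used in \eqref{eq:classical-Hori-map}. The projection formula moves $\widehat H\wedge$ inside as $\widehat p^*\widehat H\wedge$, again with a parity sign. Inside the integral,
\[
 d\bigl(e^{\mathcal F}p^*\omega\bigr)
 =e^{\mathcal F}\bigl(d\mathcal F\wedge p^*\omega+p^*d\omega\bigr).
\]
Substituting $d\mathcal F=p^*H-\widehat p^*\widehat H$ from \eqref{eq:T-duality-flux-identity}, the $\widehat p^*\widehat H$ term cancels the one coming from $d_{\widehat H}$. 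What remains is $\widehat p_*\bigl(e^{\mathcal F}p^*(d_H\omega)\bigr)$, which yields $T d_H=-d_{\widehat H}T$. Invariance is preserved because $\mathcal F$ is invariant under both circles and $\widehat p_*$ intertwines the $\widehat{\mathbb T}$ actions. Parity reversal holds because fibre integration lowers degree by one. The only real work in this step is tracking the signs consistently with the stated target differential $-d_{\widehat H}$.

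\emph{Isomorphism.} I will define the dual map $\widehat T\widehat\omega=p_*(e^{-\mathcal F}\widehat p^*\widehat\omega)$. The sign of $\mathcal F$ flips because the roles of $H$ and $\widehat H$ are exchanged, so $-\mathcal F$ plays the part of $\widehat{\mathcal F}$. I will then compute $\widehat T T$ explicitly on invariant forms. Every $\mathbb T$-invariant form on $Z$ decomposes uniquely as $\omega=\pi^*a+A\wedge\pi^*b$ with $a,b\in\Omega^\bullet(X)$. Using $e^{-A\wedge\widehat A}=1-A\wedge\widehat A$ and the normalization $\int_{\mathbb T}A=1$, this gives
\[
 T\omega=\pm\widehat\pi^*b\pm\widehat A\wedge\widehat\pi^*a .
\]
So $T$ exchanges the two components and inserts $\widehat A$. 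Applying the analogous formula for $\widehat T$ returns $(a,b)$ up to a global sign, so $\widehat T T=\pm\mathrm{id}$ on the level of forms, and likewise $T\widehat T=\pm\mathrm{id}$. Hence $T$ is already an isomorphism of complexes. It follows that it induces the isomorphism \eqref{eq:classical-Hori-cohomology} on cohomology, using the standard fact that invariant forms compute the cohomology of a compact connected group action.

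The main obstacle is the sign bookkeeping, both for the fibre-integration conventions and for verifying that the dual map is a strict two-sided inverse (up to sign) rather than only a homotopy inverse. I will fix the orientation so that $\int_{\mathbb T}A=1$ with $A$ written on the left, and check the $\pm$ on the two components separately.
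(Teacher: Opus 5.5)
Your proposal is correct and is essentially the paper's own argument, which appears at general level $m$ in the proof of Proposition~\ref{prop:level-m-Hori} (the theorem itself is cited from \cite{BEM04a}). There too, the flux identity $d\mathcal F=p^*H-\widehat p^*\widehat H$ together with anticommutation of circle fibre integration gives the chain relation, and the decomposition $\omega=\pi^*\omega_0+A\wedge\pi^*\omega_1$ with $e^{\mathcal F}=1-A\wedge\widehat A$ gives $T\omega=\widehat\pi^*\omega_1-\widehat A\wedge\widehat\pi^*\omega_0$ and $\widehat T\circ T=-\operatorname{id}$, so your undetermined global sign is $-1$.
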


This formulation contains both the topological and the metric pictures.  If a
metric $g_X$ is fixed on the base, the customary metrics on the total spaces are
\[
 g_Z=\pi^*g_X+R^2A\mathbin\odot A,
 \qquad
 g_{\widehat Z}=\widehat\pi^*g_X
 +R^{-2}\widehat A\mathbin\odot\widehat A.
\]
The radius is therefore exchanged with its reciprocal, while
\eqref{eq:T-dual-pushforward-relations} exchanges topology with flux.

When local data are required, we choose a good cover $\{U_\alpha\}$ of $X$
which trivializes both circle bundles.  The induced cover of the correspondence
space consists of the sets
$U_\alpha\times\mathbb T\times\widehat{\mathbb T}$.  We reserve the ordinary
letters $U_\alpha$ for these base charts, and retain $\sU,\sV$ for the open sets
on general manifolds and mapping spaces used in the preceding sections.

The classical Hori transform treats one unit of flux.  The brane-space theory,
however, naturally involves all tensor powers of the average holonomy line
bundle.  This leads to an integral level $m$, and hence to the graded periodic
form of the Hori morphism which we now recall.

\subsection{Periodic twisted cohomology and graded Hori morphisms}

In this subsection we review the graded Hori map and the corresponding form of
T-duality developed in \cite{HM21}.  For $m\in\mathbb Z$, put
\begin{equation}
 d_{mH}=d+mH\wedge,
 \qquad
 d_{m\widehat H}=d+m\widehat H\wedge.
 \label{eq:level-m-twisted-differentials}
\end{equation}
The level $m$ Hori map is
\begin{equation}
 T_m:\Omega^\bullet(Z)^{\mathbb T}\longrightarrow
 \Omega^{\bullet+1}(\widehat Z)^{\widehat{\mathbb T}},
 \qquad
 T_m(\omega)=\widehat p_*
 \left(e^{m\mathcal F}p^*\omega\right).
 \label{eq:level-m-Hori-map}
\end{equation}
Here and below, the displayed degree shift is understood modulo two.  The invariant
subcomplex computes the twisted cohomology of the total space
\cite{CG,HM18}.  We shall therefore identify its cohomology with
$H^\bullet(Z,mH)$, and make the analogous identification on the dual side.

\begin{proposition}
\label{prop:level-m-Hori}
With the fibre orientations induced by the circle actions, one has
\begin{equation}
 d_{m\widehat H}T_m=-T_m d_{mH}.
 \label{eq:level-m-Hori-chain-relation}
\end{equation}
Consequently, the Hori map induces
\begin{equation}
 T_m:H^\bullet(Z,mH)\longrightarrow
 H^{\bullet+1}(\widehat Z,m\widehat H).
 \label{eq:level-m-Hori-cohomology}
\end{equation}
There is a dual Hori map $\widehat T_m$ satisfying
\begin{equation}
 \widehat T_m\circ T_m=-m\operatorname{id},
 \qquad
 T_m\circ\widehat T_m=-m\operatorname{id}.
 \label{eq:level-m-Hori-compositions}
\end{equation}
In particular, $T_m$ is an isomorphism  whenever $m\neq0$.
\end{proposition}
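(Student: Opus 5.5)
The proof splits into three parts: the chain relation \eqref{eq:level-m-Hori-chain-relation}, the construction of the dual map $\widehat T_m$, and the composition identities \eqref{eq:level-m-Hori-compositions}.

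\emph{The chain relation.} For $\omega\in\Omega^\bullet(Z)^{\mathbb T}$ we compute $dT_m\omega=d\,\widehat p_*(e^{m\mathcal F}p^*\omega)$. Since $\widehat p$ is a circle bundle with closed fibres, fibre integration anticommutes with $d$ in our orientation convention: $d\widehat p_*=-\widehat p_*d$ on forms of the relevant parity. Using \eqref{eq:T-duality-flux-identity} in the form $d\mathcal F=p^*H-\widehat p^*\widehat H$ gives
\[
d(e^{m\mathcal F}p^*\omega)=e^{m\mathcal F}\bigl(m\,p^*H-m\,\widehat p^*\widehat H\bigr)p^*\omega+e^{m\mathcal F}p^*d\omega .
\]
The term $\widehat p^*\widehat H$ is pulled back from $\widehat Z$, so it comes out of $\widehat p_*$, with a sign because $\widehat H$ is odd. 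Collecting terms yields $d_{m\widehat H}T_m=-T_md_{mH}$. The only real work is sign bookkeeping: the parity of $\widehat H$ against the degree-one fibre integration. I would fix once and for all that $\widehat p_*$ integrates over the $\mathbb T$-fibre placed on the left, check the identity for $m=1$ against Theorem \ref{thm:classical-Hori-isomorphism}, and then observe that $m$ enters linearly. It follows that $T_m$ maps $\ker d_{mH}$ to $\ker d_{m\widehat H}$ and images to images, which gives \eqref{eq:level-m-Hori-cohomology}.

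\emph{The dual map.} I define $\widehat T_m(\widehat\omega)=p_*\bigl(e^{-m\mathcal F}\widehat p^*\widehat\omega\bigr)$. The sign of $\mathcal F$ is reversed because the roles of $H$ and $\widehat H$ are exchanged. The same argument as above shows that $\widehat T_m$ is a chain map up to sign.

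\emph{The composition identities.} This is the main obstacle. Both sides preserve invariant forms, so I would compute locally over $U_\alpha$. Decompose an invariant form as $\omega=\omega_0+A\wedge\omega_1$ with $\omega_0,\omega_1$ basic. Since $e^{m\mathcal F}=1-mA\wedge\widehat A$, integration over $\mathbb T$ gives
\[
T_m\omega=\pm\omega_1\pm m\,\widehat A\wedge\omega_0 .
\]
Applying $\widehat T_m$ in the same way returns $-m\,\omega$ exactly at the form level; the factor $m$ arises because each fibre integration picks up one factor of $m$ from the exponential on exactly one component. The symmetric computation gives $T_m\widehat T_m=-m$. Because the identity holds on forms, it certainly holds in cohomology. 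The delicate point is, again, the signs. I would verify them with the explicit local formulas \eqref{newconn} and \eqref{split}, checking that the two components each produce $-m$ rather than opposite signs. If the form-level identity fails because of curvature terms $F_A$ or $F_{\widehat A}$, I would instead prove it up to a chain homotopy built from $A\wedge\widehat A$, which is enough on cohomology. Finally, for $m\neq0$, dividing by $-m$ shows that $T_m$ is invertible with inverse $-m^{-1}\widehat T_m$.
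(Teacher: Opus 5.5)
Your proposal is correct and follows the paper's proof. The chain relation comes from $(d+m\widehat p^*\widehat H)(e^{m\mathcal F}p^*\omega)=e^{m\mathcal F}p^*d_{mH}\omega$ together with anticommutation of the fibre integral, and the compositions come from the global decomposition $\omega=\pi^*\omega_0+A\wedge\pi^*\omega_1$ with $e^{m\mathcal F}=1-mA\wedge\widehat A$.

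With your fibre-on-the-left convention the signs are simply $T_m\omega=\widehat\pi^*\omega_1-m\widehat A\wedge\widehat\pi^*\omega_0$ and $\widehat T_m\widehat\omega=\pi^*\widehat\omega_1-mA\wedge\pi^*\widehat\omega_0$. Hence $-m\operatorname{id}$ holds exactly on forms, no curvature terms enter, and the chain-homotopy fallback is never needed.
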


\begin{proof}
Equation \eqref{eq:T-duality-flux-identity} gives
\begin{equation}
 \left(d+m\widehat p^*\widehat H\right)
 \left(e^{m\mathcal F}p^*\omega\right)
 =e^{m\mathcal F}p^*(d+mH)\omega.
\end{equation}
Integration over an oriented one-dimensional fibre anticommutes with the odd
differential, and this proves
\eqref{eq:level-m-Hori-chain-relation}.

To compute the compositions, recall that every invariant form has a unique
decomposition
\begin{equation}
 \omega=\pi^*\omega_0+A\wedge\pi^*\omega_1.
 \label{eq:invariant-form-decomposition}
\end{equation}
Using $e^{m\mathcal F}=1-mA\wedge\widehat A$ in
\eqref{eq:level-m-Hori-map}, one obtains
\begin{equation}
 T_m(\omega)=\widehat\pi^*\omega_1
 -m\widehat A\wedge\widehat\pi^*\omega_0.
 \label{eq:explicit-level-m-Hori}
\end{equation}
The same calculation on the dual side, with the reversed correspondence form
and the compatible fibre orientation, proves
\eqref{eq:level-m-Hori-compositions}.
\end{proof}

At level one this recovers the usual Hori isomorphism
\begin{equation}
 T_1:H^\bullet(Z,H)\xrightarrow{\ \cong\ }
 H^{\bullet+1}(\widehat Z,\widehat H).
 \label{eq:usual-Hori-isomorphism}
\end{equation}
For the localization argument, however, it is better to place this construction
in the same completed periodic setting as the exotic theory of the brane space.

Let
\begin{equation}
 R=\mathbb C[u,u^{-1}],
 \qquad |u|=2,
\end{equation}
and use the form-degree completion specified in
Section~\ref{subsec:brane-periodic-completion}.  On the finite
dimensional targets this adds no further completion at fixed total
degree.  For an
invariant flux $H$, define
\begin{equation}
 D^u_{mH}=d+u^{-1}mH\wedge
 \label{eq:periodic-level-m-differential}
\end{equation}
on $\Omega^\bullet(Z)^{\mathbb T}\widehat\otimes R$.  The appearance of $u$ requires a
minor rescaling before the ordinary Hori map can be used.  This is the same
rescaling that appears in exotic twisted cohomology of loop spaces
\cite{HM15}.

\begin{lemma}[Periodic rescaling]
\label{lem:periodic-rescaling}
For a homogeneous form $\omega_j\in\Omega^j(Z)$, define
\begin{equation}
 c_u(\omega_j)=u^{-\lfloor j/2\rfloor}\omega_j
 \label{eq:periodic-rescaling-map}
\end{equation}
and extend $R$-linearly.  The map $c_u$ induces an isomorphism
\begin{equation}
 c_{u,mH}:H^\bullet
 \left(\Omega^\bullet(Z)^{\mathbb T}\widehat\otimes R,d_{mH}\right)
 \xrightarrow{\ \cong\ }
 H^\bullet
 \left(\Omega^\bullet(Z)^{\mathbb T}\widehat\otimes R,D^u_{mH}\right).
 \label{eq:periodic-rescaling-isomorphism}
\end{equation}
\end{lemma}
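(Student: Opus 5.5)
The plan is to show that $c_u$ is not quite a chain map, but fails to be one only by an invertible power of $u$ that depends on parity. That defect is harmless on cohomology.

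First fix the setup. The map $c_u$ is $R$-linear and bijective on $\Omega^\bullet(Z)^{\mathbb T}\widehat\otimes R$, with inverse $\omega_j\mapsto u^{\lfloor j/2\rfloor}\omega_j$. It is well defined on the form-degree completion because it acts separately in each ordinary form degree. It also preserves $\mathbb T$-invariance. Since $|u|=2$, the parity of the ordinary form degree equals the parity of the total degree. So every element splits uniquely as $\omega=\omega_{\mathrm{ev}}+\omega_{\mathrm{odd}}$. Both $d_{mH}$ and $D^u_{mH}$ send even forms to odd forms and odd forms to even forms, because $d$ and $H\wedge$ are both odd.

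The key computation is to compare the exponents on homogeneous pieces. For $j$ even, $\lfloor (j+1)/2\rfloor=j/2$ and $\lfloor (j+3)/2\rfloor=j/2+1$, which gives
\[
 c_u\,d_{mH}\,\omega_j=D^u_{mH}\,c_u\,\omega_j .
\]
For $j$ odd, both exponents go up by one more, since $\lfloor (j+1)/2\rfloor=\lfloor j/2\rfloor+1$. This gives
\[
 c_u\,d_{mH}\,\omega_j=u^{-1}D^u_{mH}\,c_u\,\omega_j .
\]
So with $P$ the projection onto odd form degree, $c_u d_{mH}=D^u_{mH}c_u(1-P)+u^{-1}D^u_{mH}c_uP$.

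Next, check that $c_u$ descends to cohomology. By the parity splitting, a $d_{mH}$-cocycle $\omega$ has $d_{mH}\omega_{\mathrm{ev}}=0$ and $d_{mH}\omega_{\mathrm{odd}}=0$ separately, because the two images lie in different parities. The identities above then give $D^u_{mH}c_u\omega=0$. If $\omega=d_{mH}\beta$, split $\beta$ by parity as well. Then
\[
 c_u\omega=D^u_{mH}\bigl(c_u\beta_{\mathrm{ev}}+u^{-1}c_u\beta_{\mathrm{odd}}\bigr),
\]
which is exact. So $c_{u,mH}$ is well defined. Running the same argument with $c_u^{-1}$, using the inverted identities and the fact that $u$ is a unit in $R$, gives a well-defined map in the other direction. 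These two maps are mutually inverse on classes, since they are already inverse on forms.

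The only point needing care is the parity mismatch of exponents. This is the main obstacle: one cannot hope for a strict chain isomorphism given by a single diagonal power of $u$. The fix is to accept the factor $u^{-1}$ on odd forms and absorb it into the primitive, as above. Finally, $c_u$ shifts total degree within each parity class, so the isomorphism should be read in the $\mathbb Z/2$-graded sense, which is how the level-$m$ Hori morphisms are used afterwards.
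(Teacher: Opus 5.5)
Your proof is correct and follows the paper's argument: the paper likewise proves $c_u d_{mH}\omega=u^{-k}D^u_{mH}c_u\omega$ for $\omega$ of parity $k$, uses that $u^{-k}$ is invertible, and inverts $c_u$ by reversing the powers of $u$. Your explicit parity splitting of cocycles and primitives spells out a step the paper leaves implicit.
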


\begin{proof}
If $\omega$ has parity $k\in\{0,1\}$, a direct calculation gives
\begin{equation}
 c_u d_{mH}\omega
 =u^{-k}D^u_{mH}c_u\omega.
 \label{eq:periodic-rescaling-intertwining}
\end{equation}
The factor $u^{-k}$ is invertible.  Hence $c_u$ takes cocycles to cocycles and
coboundaries to coboundaries.  Its inverse is obtained by reversing the powers
of $u$ in \eqref{eq:periodic-rescaling-map}.
\end{proof}

We may now define the periodic level $m$ Hori morphism on cohomology by
\begin{equation}
 T_{m,u}=c_{u,m\widehat H}\circ T_m\circ c_{u,mH}^{-1}.
 \label{eq:periodic-level-m-Hori}
\end{equation}
The dual morphism $\widehat T_{m,u}$ is defined in the same way.  By
construction,
\begin{equation}
 \widehat T_{m,u}\circ T_{m,u}=-m\operatorname{id},
 \qquad
 T_{m,u}\circ\widehat T_{m,u}=-m\operatorname{id}.
 \label{eq:periodic-Hori-compositions}
\end{equation}

To retain all levels at once, put
\begin{equation}
 \mathcal P_m(Z,H)=
 H^\bullet\left(
 \Omega^\bullet(Z)^{\mathbb T}\widehat\otimes R,D^u_{mH}
 \right)
 \label{eq:periodic-target-level}
\end{equation}
and define the graded periodic theory
\begin{equation}
 \mathcal P(Z,H)=
 \bigoplus_{m\in\mathbb Z}\mathcal P_m(Z,H)y^m.
 \label{eq:graded-periodic-target-theory}
\end{equation}
The variable $y$ records the level.  The maps $T_{m,u}$ assemble into
\begin{equation}
 \operatorname{GHor}_u:\mathcal P(Z,H)\longrightarrow
 \mathcal P(\widehat Z,\widehat H),
 \qquad
 \operatorname{GHor}_u\left(\sum_m a_my^m\right)
 =\sum_mT_{m,u}(a_m)y^m.
 \label{eq:graded-periodic-Hori}
\end{equation}
Together with the dual morphism, equation
\eqref{eq:periodic-Hori-compositions} becomes
\begin{equation}
 \widehat{\operatorname{GHor}}_u\circ\operatorname{GHor}_u
 =-y\frac{\partial}{\partial y},
 \qquad
 \operatorname{GHor}_u\circ\widehat{\operatorname{GHor}}_u
 =-y\frac{\partial}{\partial y}.
 \label{eq:graded-periodic-Euler-relation}
\end{equation}

This completes the finite dimensional part of the argument.  The next step is
to show that the localization theorem remains valid after one imposes the
structure-circle symmetry and replaces the original gerbe by any of its tensor
powers.

\subsection{Level \texorpdfstring{$m$}{m} exotic theory and invariant localization}

Assume that the gerbes with connection on $Z$ and $\widehat Z$ are equivariant
for the respective fibre-circle actions, and that their curvature forms are the
invariant forms $H$ and $\widehat H$ chosen above.  Postcomposition induces
actions of $\mathbb T$ and $\widehat{\mathbb T}$ on the corresponding brane
spaces.  These actions commute with the action of $\mathbb T_{\rm rot}$, which
rotates the source sphere.

Equations \eqref{eq:normalized-log-holonomy} and
\eqref{eq:average-transition} show that the average holonomy construction is
compatible with tensor products of gerbes.  Consequently, the average holonomy
line bundle associated to the level $m$ gerbe is
\begin{equation}
 \mathcal L_{mH}=\mathcal L_H^{\otimes m}\longrightarrow S^2Z.
 \label{eq:level-m-average-line-bundle}
\end{equation}
For $m<0$, the right-hand side denotes the corresponding tensor power of the
dual line bundle, and for $m=0$ it denotes the trivial line bundle.  The
curvature formula \eqref{eq:average-curvature} scales linearly with $m$ and gives
\begin{equation}
 \left(\nabla^{\mathcal L_H^{\otimes m}}\right)^2
 =m\iota_K\overline H.
 \label{eq:level-m-average-curvature}
\end{equation}
Here $K$ is the vector field generated by $\mathbb T_{\rm rot}$.

\begin{definition}
For $m\in\mathbb Z$, define
\begin{equation}
 Q_{mH}=\nabla^{\mathcal L_H^{\otimes m}}
 -u\iota_K+u^{-1}m\overline H
 \label{eq:level-m-brane-superconnection}
\end{equation}
and put
\begin{equation}
 \mathcal E_m(Z,H)=H^\bullet\left(
 \Omega^\bullet(S^2Z,\mathcal L_H^{\otimes m})^{
 \mathbb T_{\rm rot}\times\mathbb T}\widehat\otimes R,
 Q_{mH}
 \right).
 \label{eq:level-m-brane-exotic-theory}
\end{equation}
\end{definition}

The cancellation which made the original superconnection equivariantly flat
persists at every level.

\begin{proposition}
The operator $Q_{mH}$ satisfies
\begin{equation}
 Q_{mH}^2+u\mathcal L_K^{\mathcal L_H^{\otimes m}}=0.
 \label{eq:level-m-equivariant-flatness}
\end{equation}
It therefore defines the cohomology in
\eqref{eq:level-m-brane-exotic-theory}.
\end{proposition}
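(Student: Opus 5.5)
The plan is to repeat the local conjugation argument from the proof of the level-one equivariant flatness theorem, now with every gerbe quantity scaled by $m$. Over $S^2\sU_\alpha$ I will use the frame $s_\alpha^{\otimes m}$ of $\mathcal L_H^{\otimes m}$. For $m<0$ this means the dual frame, and for $m=0$ the constant frame. Since each $s_\alpha$ is $\mathbb T_{\rm rot}$-invariant, so is $s_\alpha^{\otimes m}$. By the tensor product rule for connections and Proposition \ref{connection}, the connection form of $\nabla^{\mathcal L_H^{\otimes m}}$ in this frame is $mA_\alpha=-m\iota_K\overline{B_\alpha}$. Its transition functions are $g_{\alpha\beta}^m$, which is consistent with $m\iota_K\overline{F_{\alpha\beta}}=d\log g^m_{\alpha\beta}$.

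In this frame the operator therefore reads
\[
 Q_{mH}=d-m\iota_K\overline{B_\alpha}-u\iota_K+u^{-1}m\,d\overline{B_\alpha}.
\]
I will then verify the conjugation identity
\[
 Q_{mH}=e^{-u^{-1}m\overline{B_\alpha}}(d-u\iota_K)e^{u^{-1}m\overline{B_\alpha}}.
\]
Since $\overline{B_\alpha}$ is even, the derivation $d-u\iota_K$ applied to the exponential produces exactly the extra terms $u^{-1}m\,d\overline{B_\alpha}-m\iota_K\overline{B_\alpha}$. The identity $d\overline{B_\alpha}=\overline H$ from \eqref{eq:average-identities} then identifies the last term with $u^{-1}m\overline H$.

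Squaring gives $Q_{mH}^2=e^{-u^{-1}m\overline{B_\alpha}}(-u\mathcal L_K)e^{u^{-1}m\overline{B_\alpha}}$. Because $\mathcal L_K\overline{B_\alpha}=0$, this equals $-u\mathcal L_K$ acting on coefficients in the invariant frame, and that is $-u\mathcal L_K^{\mathcal L_H^{\otimes m}}$. These local identities patch together on overlaps because both sides are globally defined operators. The same reasoning shows that $Q_{mH}$ commutes with the structure circle action. Postcomposition by $\mathbb T$ commutes with $K$, preserves the averaged forms $\overline H$, and preserves the connection, since the gerbe is $\mathbb T$-equivariant. So $Q_{mH}$ preserves $\mathbb T_{\rm rot}\times\mathbb T$-invariant forms, and on them $Q_{mH}^2=0$.

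The only point needing care is the exponential in the completed periodic algebra. Its expansion $\sum_k (u^{-1}m\overline{B_\alpha})^k/k!$ contributes finitely many terms in each ordinary form degree, so it is well defined under the form-degree completion of Section~\ref{subsec:brane-periodic-completion}. I expect this to be routine. A one-line alternative is to quote \eqref{eq:level-m-average-curvature} and expand the square directly, using $\iota_K^2=0$, $(\overline H)^2=0$, $[\nabla,\iota_K]=\mathcal L_K-\iota_K(\nabla^2)$-type Cartan identities, and $d\overline H=0$.
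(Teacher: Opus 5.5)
Your proof is correct, but its main line differs from the paper's proof of this proposition.

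You work locally. In the invariant frames $s_\alpha^{\otimes m}$ you write $Q_{mH}$ as the gauge conjugate $e^{-u^{-1}m\overline{B_\alpha}}(d-u\iota_K)e^{u^{-1}m\overline{B_\alpha}}$. This is the level-one argument of \eqref{eq:local-conjugation} with $B_\alpha$ replaced by $mB_\alpha$.

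The paper instead expands the square globally, which is the alternative you list at the end:
\begin{itemize}
\item the curvature $m\iota_K\overline H$ from \eqref{eq:level-m-average-curvature} cancels the graded commutator $[-u\iota_K,u^{-1}m\overline H]=-m\iota_K\overline H$;
\item the identities $\iota_K^2=0$, $\overline H\wedge\overline H=0$ and $d\overline H=0$ remove the other cross terms;
\item Cartan's formula turns $-u[\nabla^{\mathcal L_H^{\otimes m}},\iota_K]$ into $-u\mathcal L_K^{\mathcal L_H^{\otimes m}}$.
\end{itemize}

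The paper's version is shorter and uses only the curvature identity already established. However, it tacitly identifies the covariant Lie derivative $[\nabla,\iota_K]$ with the Lie derivative of the lifted rotation action. This identification holds because $\iota_K$ of the connection form $-m\iota_K\overline{B_\alpha}$ vanishes in the invariant frames.

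Your conjugation makes that identification automatic, since $(d-u\iota_K)^2=-u\mathcal L_K$ acts on coefficients of an invariant frame. It also exhibits $Q_{mH}$ as locally gauge equivalent to the untwisted equivariant differential. In addition, you check that $Q_{mH}$ preserves the $\mathbb T_{\rm rot}\times\mathbb T$-invariant subcomplex, which the paper's proof leaves implicit.

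One small correction to your stated alternative: the Cartan identity you need is $[\nabla,\iota_K]=\mathcal L_K^{\nabla}$, which in a frame reads $\mathcal L_K+\iota_K A_\alpha$. The expression $\mathcal L_K-\iota_K(\nabla^2)$ is not the right general formula. It gives the correct answer here only because $\iota_K\iota_K\overline H=0$.
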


\begin{proof}
By \eqref{eq:level-m-average-curvature}, the curvature term in the square is
$m\iota_K\overline H$.  The commutator of $-u\iota_K$ with
$u^{-1}m\overline H$ is $-m\iota_K\overline H$, so the two terms cancel.  By
Cartan's formula, the remaining terms give
$-u\mathcal L_K^{\mathcal L_H^{\otimes m}}$.
\end{proof}

We can now invoke the localization theorem of Section 2.  What has to be checked is that its construction respects the extra action of the structure circle.  This equivariance is the reason for having kept
$\mathbb T_{\rm rot}$ distinct from $\mathbb T$ throughout.

\begin{theorem}[Level $m$ localization]
\label{thm:invariant-level-m-localization}
For every $m\in\mathbb Z$, Theorem \ref{thm:brane-to-target} induces an
isomorphism
\begin{equation}
 \operatorname{Loc}_{Z,m}:\mathcal E_m(Z,H)
 \xrightarrow{\ \cong\ }\mathcal P_m(Z,H).
 \label{eq:invariant-level-m-localization}
\end{equation}
The analogous map
\begin{equation}
 \operatorname{Loc}_{\widehat Z,m}:
 \mathcal E_m(\widehat Z,\widehat H)
 \xrightarrow{\ \cong\ }
 \mathcal P_m(\widehat Z,\widehat H)
 \label{eq:dual-invariant-level-m-localization}
\end{equation}
is also an isomorphism.
\end{theorem}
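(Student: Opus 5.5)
The plan is to rerun the proof of Theorem~\ref{thm:brane-to-target} at level $m$, checking at each stage that every map and homotopy commutes with the structure circle $\mathbb T$. Then one restricts to $\mathbb T$-invariant forms. First, the level $m$ data $(\mathcal L_H^{\otimes m},\nabla^{\mathcal L_H^{\otimes m}},m\overline H)$ satisfy the hypotheses of the general localization theorem: equivariant flatness is \eqref{eq:level-m-equivariant-flatness}. The fixed point set of $\mathbb T_{\rm rot}$ is still $j(PZ)$ by Proposition~\ref{prop:brane-fixed-points}. The restriction of the level $m$ line is trivial with connection $d$, since $j^*g_{\alpha\beta}^m=1$, and the restricted flux is $mH_P$. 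For $m=0$ everything is the untwisted case, and the argument goes through verbatim. Because the gerbe is $\mathbb T$-equivariant and $H$ is invariant, $\mathbb T$ acts on $\mathcal L_H$ preserving its connection, and $\overline H$ is $\mathbb T$-invariant. Hence $Q_{mH}$ commutes with the $\mathbb T$-action.

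Next, I would go through the four ingredients of Theorem~\ref{thm:general-localization} for the $\mathbb T_{\rm rot}\times\mathbb T$-invariant subcomplex. The contracting homotopy of Lemma~\ref{lem:no-fixed-points} uses $\vartheta=K^\flat/\lVert K\rVert^2$. If the Riemannian metric on $Z$ is chosen $\mathbb T$-invariant, then $\vartheta$ is invariant under both circles, because $K$ commutes with the postcomposition action. So multiplication by $\vartheta\Gamma$ preserves bi-invariant forms. Partitions of unity in Lemma~\ref{lem:mayer-vietoris} can be averaged over the torus $\mathbb T_{\rm rot}\times\mathbb T$. In the strong regularity construction, the Riemannian center of mass over orbits, the normal exponential, and the scaling homotopy $g_s$ are all natural for isometries of $Z$. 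They therefore commute with postcomposition by $\mathbb T$ when the metric is $\mathbb T$-invariant, and the endpoint map \eqref{eq:endpoint-map-formula} built from parallel transport and $K_0\widetilde\chi$ preserves $\mathbb T$-invariant forms. Alternatively, since $\mathbb T$ is compact and connected, averaging over $\mathbb T$ gives a chain retraction onto invariants that commutes with all these natural maps. Either way, restriction $j^*$ gives an isomorphism onto $H^\bullet(\Omega^\bullet(PZ)^{\mathbb T}\widehat\otimes R,d+u^{-1}mH_P)$.

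Third, the path contraction $r_s$ commutes with postcomposition, so $\eta$ is $\mathbb T$-invariant. The gauge map $\Psi_m=e^{-u^{-1}m\eta}$ then preserves invariants and satisfies $d\eta=H_P-e_0^*H$ scaled by $m$. The homotopy operator for $r_s$ also preserves invariant forms and anticommutes with $e_0^*H$. Therefore $c^*$ is a quasi-isomorphism onto $(\Omega^\bullet(Z)^{\mathbb T}\widehat\otimes R,D^u_{mH})$, whose cohomology is $\mathcal P_m(Z,H)$. Composing gives $\operatorname{Loc}_{Z,m}=c^*\Psi_m^{-1}j^*$. The dual side is identical with $\widehat{\mathbb T}$.

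The main obstacle is the invariant version of strong regularity and of Lemma~\ref{lem:equivariant-endpoint-invariance}. One must check that the neighborhood $N$, the projection $p$, the homotopy $g$ and an invariant good cover can all be chosen $\mathbb T$-equivariantly. I would handle this with a $\mathbb T$-invariant metric on $Z$ and $\mathbb T$-invariant pointwise convex neighborhoods, or else fall back on the averaging retraction. The averaging retraction needs only that all chain maps and homotopies used are natural for the $\mathbb T$-action.
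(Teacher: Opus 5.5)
Your proposal is correct and follows essentially the same route as the paper: apply Theorem~\ref{thm:brane-to-target} to the level $m$ gerbe, and choose the auxiliary metric and partitions of unity invariant under $\mathbb T_{\rm rot}\times\mathbb T$, so that $j$, $c$, $r_s$, $\eta$, the normal contraction, the interval transports and the contracting homotopies all respect the structure circle; alternatively, averaging primitives over $\mathbb T$ gives the same conclusion on the invariant subcomplex. Your write-up is more explicit than the paper's, for instance on the invariance of $\vartheta$ and on the naturality of the center-of-mass construction under isometries, but it uses no additional idea.
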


\begin{proof}
Apply Theorem \ref{thm:brane-to-target} to the level $m$ gerbe.  Its average
holonomy line bundle, connection, and averaged flux are given by
\eqref{eq:level-m-average-line-bundle},
\eqref{eq:level-m-average-curvature}, and $m\overline H$, respectively.  The
resulting localization map is
\begin{equation}
 \operatorname{Loc}_{Z,m}=c^*\circ\Psi_m^{-1}\circ j^*,
 \label{eq:level-m-localization-map}
\end{equation}
where $\Psi_m$ is multiplication by the level $m$ form of the gauge factor in
\eqref{eq:path-gauge-isomorphism}.

The maps $j$ and $c$, as well as the contraction of the path space, are
equivariant for the postcomposition action of $\mathbb T$.  The forms $H_P$ and
$\eta$ which enter the construction are invariant because $H$ is invariant.
Choose the auxiliary metric and partitions of unity invariant
under the compact product of the structure circle and the rotation
circle.  The normal contraction, interval transports and contracting
operators in the localization proof then respect both actions.
The Mayer-Vietoris argument and the path contraction consequently
apply within the $\mathbb T$-invariant subcomplex.  Equivalently,
whenever an invariant cocycle has a primitive in that argument,
averaging the primitive over the structure circle supplies an
invariant one.  This proves
\eqref{eq:invariant-level-m-localization}, including the
quasi-isomorphism assertion after passage to invariant forms.  The argument on the dual side is
identical.
\end{proof}

\subsection{Graded Hori morphisms on brane spaces}

For each $m\in\mathbb Z$, define the level $m$ genus-0 Hori morphism by
transporting the periodic Hori morphism through brane localization:
\begin{equation}
 T^{S^2}_m=
 \operatorname{Loc}_{\widehat Z,m}^{-1}\circ
 T_{m,u}\circ\operatorname{Loc}_{Z,m}:
 \mathcal E_m(Z,H)\longrightarrow
 \mathcal E_m(\widehat Z,\widehat H).
 \label{eq:level-m-brane-Hori}
\end{equation}
The dual morphism $\widehat T^{S^2}_m$ is defined in the same
manner.  These are morphisms of cohomology groups, obtained from the
finite dimensional transform through the two localization isomorphisms.

To assemble the levels, define
\begin{equation}
 \mathcal E_{S^2}(Z,H)=
 \bigoplus_{m\in\mathbb Z}\mathcal E_m(Z,H)y^m
 \label{eq:graded-brane-exotic-theory}
\end{equation}
and define $\mathcal E_{S^2}(\widehat Z,\widehat H)$ similarly.  The maps in
\eqref{eq:level-m-brane-Hori} assemble into the
\textbf{graded genus-0 Hori morphism}
\begin{equation}
 \operatorname{GHor}_{S^2}:\mathcal E_{S^2}(Z,H)
 \longrightarrow\mathcal E_{S^2}(\widehat Z,\widehat H),
 \label{eq:graded-brane-Hori}
\end{equation}
given by
\begin{equation}
 \operatorname{GHor}_{S^2}\left(\sum_m\alpha_my^m\right)
 =\sum_mT^{S^2}_m(\alpha_m)y^m.
 \label{eq:graded-brane-Hori-formula}
\end{equation}

Let
\begin{equation}
 \operatorname{Loc}_Z=\bigoplus_{m\in\mathbb Z}
 \operatorname{Loc}_{Z,m},
 \qquad
 \operatorname{Loc}_{\widehat Z}=\bigoplus_{m\in\mathbb Z}
 \operatorname{Loc}_{\widehat Z,m}.
\end{equation}
The construction is summarized by the commutative diagram
\begin{equation}
\begin{CD}
\mathcal E_{S^2}(Z,H)
@>{\operatorname{GHor}_{S^2}}>>
\mathcal E_{S^2}(\widehat Z,\widehat H)\\
@V{\operatorname{Loc}_Z}V{\cong}V
@V{\operatorname{Loc}_{\widehat Z}}V{\cong}V\\
\mathcal P(Z,H)
@>{\operatorname{GHor}_u}>>
\mathcal P(\widehat Z,\widehat H).
\end{CD}
 \label{eq:graded-brane-Hori-diagram}
\end{equation}
The diagram makes the argument transparent.  Every structural identity on the
bottom row is carried to the corresponding identity on the top row.

\begin{theorem}[Graded genus-0 T-duality with $H$-flux]
\label{thm:graded-genus-zero-T-duality}
The graded genus-0 Hori morphisms satisfy
\begin{equation}
 \widehat{\operatorname{GHor}}_{S^2}\circ
 \operatorname{GHor}_{S^2}
 =-y\frac{\partial}{\partial y},
 \qquad
 \operatorname{GHor}_{S^2}\circ
 \widehat{\operatorname{GHor}}_{S^2}
 =-y\frac{\partial}{\partial y}.
 \label{eq:graded-brane-Euler-relation}
\end{equation}
Consequently, the graded genus-0 Hori morphism is an isomorphism on the
direct sum of the nonzero levels.
\end{theorem}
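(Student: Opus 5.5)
The proof is formal once the definitions are unwound. It transports the finite-dimensional identity \eqref{eq:graded-periodic-Euler-relation} through the localization isomorphisms. I would argue level by level and then assemble the levels with the variable $y$.

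\emph{Step 1: the level-$m$ identity.} Fix $m\in\mathbb Z$. By the definition \eqref{eq:level-m-brane-Hori} and the analogous definition of the dual morphism,
\[
 \widehat T^{S^2}_m\circ T^{S^2}_m
 =\operatorname{Loc}_{Z,m}^{-1}\circ\widehat T_{m,u}\circ
 \operatorname{Loc}_{\widehat Z,m}\circ
 \operatorname{Loc}_{\widehat Z,m}^{-1}\circ T_{m,u}\circ
 \operatorname{Loc}_{Z,m}.
\]
The middle factors cancel. This is legitimate because Theorem~\ref{thm:invariant-level-m-localization} asserts that $\operatorname{Loc}_{\widehat Z,m}$ is an isomorphism. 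The composite is therefore $\operatorname{Loc}_{Z,m}^{-1}\circ(\widehat T_{m,u}\circ T_{m,u})\circ\operatorname{Loc}_{Z,m}$.

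By \eqref{eq:periodic-Hori-compositions}, the bracketed composite is $-m\operatorname{id}$. Scalar multiplication commutes with the linear maps $\operatorname{Loc}_{Z,m}^{\pm1}$, so $\widehat T^{S^2}_m\circ T^{S^2}_m=-m\operatorname{id}$ on $\mathcal E_m(Z,H)$. The composite in the other order is handled by the same argument with the roles of $Z$ and $\widehat Z$ exchanged, giving $-m\operatorname{id}$ on $\mathcal E_m(\widehat Z,\widehat H)$.

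\emph{Step 2: assembling the levels.} For a finite sum $\sum_m\alpha_my^m$, the graded morphisms act levelwise by \eqref{eq:graded-brane-Hori-formula}, and so
\[
 \widehat{\operatorname{GHor}}_{S^2}\operatorname{GHor}_{S^2}\Bigl(\sum_m\alpha_my^m\Bigr)
 =\sum_m(-m)\alpha_my^m
 =-y\frac{\partial}{\partial y}\Bigl(\sum_m\alpha_my^m\Bigr).
\]
The same computation gives the second relation in \eqref{eq:graded-brane-Euler-relation}. Alternatively, one can read both relations off the commutative diagram \eqref{eq:graded-brane-Hori-diagram}, using \eqref{eq:graded-periodic-Euler-relation} together with the observation that the Euler operator commutes with $\operatorname{Loc}_Z=\bigoplus_m\operatorname{Loc}_{Z,m}$, since the latter preserves levels.

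\emph{Step 3: invertibility and parity.} On $\bigoplus_{m\neq0}\mathcal E_m(Z,H)y^m$, the operator $-y\partial_y$ is invertible, with inverse given by multiplication by $-1/m$ at level $m$. Hence $\operatorname{GHor}_{S^2}$ has the two-sided inverse $(-y\partial_y)^{-1}\widehat{\operatorname{GHor}}_{S^2}$ there. This inverse is two-sided because $\widehat{\operatorname{GHor}}_{S^2}$ preserves levels, so it commutes with the level-wise Euler operator. Parity reversal follows because each $\operatorname{Loc}$ preserves degree and each $T_{m,u}$ shifts parity, as in Proposition~\ref{prop:level-m-Hori}.

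\emph{Main obstacle.} The argument has no analytic content. The only point requiring care is bookkeeping: one must check that $\widehat T_{m,u}$ is defined through the same rescalings $c_u$, so that \eqref{eq:periodic-Hori-compositions} applies exactly, with no stray power of $u$. One must also check that the invariant localizations on both sides are genuine isomorphisms at every level, including $m=0$ and $m<0$. Both facts are supplied by Lemma~\ref{lem:periodic-rescaling} and Theorem~\ref{thm:invariant-level-m-localization}.
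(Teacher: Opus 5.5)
Your proposal is correct and is essentially the paper's own argument: conjugating by the level-$m$ localization isomorphisms reduces the two composites to $\widehat T_{m,u}\circ T_{m,u}=-m\operatorname{id}$ and $T_{m,u}\circ\widehat T_{m,u}=-m\operatorname{id}$, which is the action of $-y\frac{\partial}{\partial y}$ on the level-$m$ summand. Your inverse on the nonzero levels, $-m^{-1}\widehat T^{S^2}_m$ at level $m$, is exactly the one the paper records.
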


\begin{proof}
On the level $m$ summand, equations \eqref{eq:level-m-brane-Hori} and
\eqref{eq:periodic-Hori-compositions} give
\begin{equation}
 \widehat T^{S^2}_m\circ T^{S^2}_m=-m\operatorname{id},
 \qquad
 T^{S^2}_m\circ\widehat T^{S^2}_m=-m\operatorname{id}.
\end{equation}
Multiplication by $m$ on the level $m$ summand is precisely the action of
$y\frac{\partial}{\partial y}$.  This proves
\eqref{eq:graded-brane-Euler-relation}.  When $m\neq0$, the inverse of
$T^{S^2}_m$ is $-m^{-1}\widehat T^{S^2}_m$.
\end{proof}

The level-1 statement is the direct genus-0 counterpart of the classical
Hori isomorphism.

\begin{corollary}[Genus-0 T-duality with $H$-flux]
\label{cor:brane-exotic-T-duality}
At level 1, there is an odd isomorphism
\begin{equation}
 T^{S^2}_1:\mathcal E_1(Z,H)
 \xrightarrow{\ \cong\ }
 \mathcal E_1(\widehat Z,\widehat H).
 \label{eq:brane-exotic-T-duality}
\end{equation}
Equivalently, T-duality exchanges the two parities of the completed periodic
exotic twisted cohomology of the brane spaces.
\end{corollary}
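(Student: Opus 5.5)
The statement is the level-one specialization of Theorem~\ref{thm:graded-genus-zero-T-duality}, so the plan is to read it off from the level-$1$ summand of that theorem together with the definition \eqref{eq:level-m-brane-Hori}. Concretely, $T^{S^2}_1$ is defined as $\operatorname{Loc}_{\widehat Z,1}^{-1}\circ T_{1,u}\circ\operatorname{Loc}_{Z,1}$, and the two localization maps are isomorphisms by Theorem~\ref{thm:invariant-level-m-localization}. It therefore suffices to show that $T_{1,u}$ is an odd isomorphism $\mathcal P_1(Z,H)\to\mathcal P_1(\widehat Z,\widehat H)$. Equivalently, Theorem~\ref{thm:graded-genus-zero-T-duality} gives directly $\widehat T^{S^2}_1\circ T^{S^2}_1=-\operatorname{id}$ and $T^{S^2}_1\circ\widehat T^{S^2}_1=-\operatorname{id}$, so $-\widehat T^{S^2}_1$ is a two-sided inverse.

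For $T_{1,u}$ itself, I would use \eqref{eq:periodic-level-m-Hori}, which writes it as $c_{u,\widehat H}\circ T_1\circ c_{u,H}^{-1}$. The rescalings are isomorphisms by Lemma~\ref{lem:periodic-rescaling}. The map $T_1$ is the classical Hori isomorphism of Theorem~\ref{thm:classical-Hori-isomorphism}, extended $R$-linearly to the completed complex. By Proposition~\ref{prop:level-m-Hori} it satisfies $\widehat T_1T_1=T_1\widehat T_1=-\operatorname{id}$ on the $R$-linear extension as well. This extension is harmless, since on the finite-dimensional targets the completion at fixed total degree is finite.

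The one point that needs care is the parity assertion, namely that $T^{S^2}_1$ is odd. Here $T_1$ shifts form degree by $-1$ through fibre integration, so it reverses parity. The rescaling $c_u$ multiplies by powers of $u$, which have even degree, and therefore does not alter parity; at the level of total degree the shift is absorbed into the grading. The localization maps preserve total degree, since they are built from pullbacks and multiplication by the even form $e^{-u^{-1}\eta}$. Hence the composite reverses parity, and this gives the stated exchange of the two parities.

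I expect no real obstacle. The only bookkeeping is to make sure that the $\mathbb Z_2$-grading on the completed periodic complexes is compatible across $c_u$ and the $\operatorname{Loc}$ maps, which the observations above settle.
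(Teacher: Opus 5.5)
Your proposal is correct and matches the paper's proof: the corollary is the level-one summand of Theorem~\ref{thm:graded-genus-zero-T-duality}, and $-\widehat T^{S^2}_1$ is the two-sided inverse. Your parity bookkeeping is also fine, with one small correction: because $e^{\mathcal F}$ is inhomogeneous, $T_1$ sends a degree-$j$ form into degrees $j\pm1$ rather than shifting degree by exactly $-1$, and this still reverses parity.
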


\begin{proof}
This is the level-1 part of Theorem
\ref{thm:graded-genus-zero-T-duality}.  Its inverse is
$-\widehat T^{S^2}_1$.
\end{proof}

\begin{remark}
The level-0 summand is annihilated by the Euler operator and is therefore not
part of the preceding isomorphism statement.  The full graded Hori morphism is
an isomorphism after restriction to the nonzero levels, or, equivalently, after
inverting $y\frac{\partial}{\partial y}$.
\end{remark}

\begin{remark}
The morphism in \eqref{eq:level-m-brane-Hori} is defined on cohomology.  A direct Hori integral at the level of differential forms would
instead call for an integration procedure over the fibres of the
mapping-space correspondence
\begin{equation}
 S^2(Z\times_X\widehat Z)
 \cong S^2Z\times_{S^2X}S^2\widehat Z,
\end{equation}
and these fibres are infinite dimensional.  Localization avoids this analytic
difficulty by reducing the construction to the ordinary Hori transform.  The
resulting morphism depends on the circle action and on the normalized latitude
measure chosen in Sections 1 and 2.
\end{remark}

\subsection{Compatibility with the twisted brane Bismut-Chern character}
\label{subsec:character-t-duality}

The Hori morphisms constructed above act on the whole exotic cohomology. The character of Section \ref{sec:brane-character} singles out a distinguished family of classes, namely those coming from gerbe modules on the target.  We now
identify the action of the level-1 Hori morphism on these classes.

Choose the gerbe identification on the correspondence space and the
orientation of the circle pushforward as in the T-duality data.  The
classical twisted $K$-theory isomorphism is
\begin{equation}
 T_K:K^j(Z,H)\longrightarrow K^{j+1}(\widehat Z,\widehat H),
 \qquad j\in\mathbb Z/2\mathbb Z.
 \label{eq:character-k-theory-t-duality}
\end{equation}
It is obtained by pullback, the identification of twists on the
correspondence space, and the oriented circle pushforward
\cite{BEM04a,BunkeSchick,HM15}.  The fibre has trivial vertical tangent
bundle, so its characteristic factor in the Chern character pushforward
formula is $1$.  With the same signs as in \eqref{eq:classical-Hori-map},
the classical compatibility and its periodic version read
\begin{equation}
 \operatorname{Ch}_{\widehat H}T_K
 =T_{1,u}\operatorname{Ch}_H.
 \label{eq:character-periodic-target-t-duality}
\end{equation}
Here the Chern characters are periodic: on forms of degree $r$ the
rescaling is $u^{-\lfloor r/2\rfloor}$, as in
\eqref{eq:periodic-rescaling-map}.  This convention agrees with both
\eqref{eq:character-target-chern-form} and the factor $u$ in the odd
suspension formula \eqref{eq:character-odd-suspension}.

To use the invariant complexes of this section, average the brane
representative over the structure circle.  The coefficient-line action
and connection are those used in
\eqref{eq:level-m-brane-exotic-theory}.  They commute with $Q_H$, and
hence averaging defines
\begin{equation}
 \operatorname{BBCh}^{\mathbb T}_H:
 K^\bullet(Z,H)\longrightarrow\mathcal E_1(Z,H),
 \qquad
 a\longmapsto
 \left[\int_{\mathbb T}g^*\operatorname{BBCh}_H(a)\,dg\right].
 \label{eq:character-invariant-representative}
\end{equation}
There is no need to choose an equivariant gerbe module to form this
average.  On the target, inclusion of invariant forms computes the full
twisted cohomology.  Indeed, filter the complex by ordinary form degree.
The filtration is finite, and comparison reduces to ordinary de Rham
cohomology, where averaging over the connected circle is homotopic to
the identity.  The same argument applies with periodic coefficients.
Accordingly, we write $\operatorname{Ch}_H(a)\in\mathcal P_1(Z,H)$
for the class of the averaged target representative.  Since localization
commutes with the structure-circle action, Theorem
\ref{thm:character-localization} gives
\begin{equation}
 \operatorname{Loc}_{Z,1}\operatorname{BBCh}^{\mathbb T}_H
 =\operatorname{Ch}_H.
 \label{eq:character-invariant-localization}
\end{equation}
We use the analogous conventions on the dual side.

\begin{theorem}[T-duality and the twisted brane character]
\label{thm:character-t-duality}
The level-one genus-0 Hori morphism is compatible with the twisted
brane Bismut-Chern character:
\begin{equation}
 T^{S^2}_1\operatorname{BBCh}^{\mathbb T}_H
 =\operatorname{BBCh}^{\widehat{\mathbb T}}_{\widehat H}T_K.
 \label{eq:character-brane-t-duality}
\end{equation}
More precisely, the following diagram commutes:
\begin{equation}
 \begin{tikzcd}[column sep=5em,row sep=3.3em]
 K^j(Z,H)
 \arrow[r,"T_K","\cong"']
 \arrow[d,"\operatorname{BBCh}^{\mathbb T}_H"']
 \arrow[dd,out=180,in=180,looseness=1.5,
        "\operatorname{Ch}_H"']
 &K^{j+1}(\widehat Z,\widehat H)
 \arrow[d,"\operatorname{BBCh}^{\widehat{\mathbb T}}_{\widehat H}"]
 \arrow[dd,out=0,in=0,looseness=1.5,
        "\operatorname{Ch}_{\widehat H}"]\\
 \mathcal E_1^j(Z,H)
 \arrow[r,"T^{S^2}_1","\cong"']
 \arrow[d,"\operatorname{Loc}_{Z,1}"',"\cong"]
 &\mathcal E_1^{j+1}(\widehat Z,\widehat H)
 \arrow[d,"\operatorname{Loc}_{\widehat Z,1}","\cong"']\\
 \mathcal P_1^j(Z,H)
 \arrow[r,"T_{1,u}","\cong"']
 &\mathcal P_1^{j+1}(\widehat Z,\widehat H).
 \end{tikzcd}
 \label{eq:character-t-duality-diagram}
\end{equation}
The superscripts on $\mathcal E_1$ and $\mathcal P_1$ indicate
parity.  The outer curved arrows are the twisted Chern characters;
each is the composite of the two vertical arrows on its side.
\end{theorem}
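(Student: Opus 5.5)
The proof is a diagram chase that reduces everything to the target, since the middle row is defined by transport through localization. By definition \eqref{eq:level-m-brane-Hori},
\[
T^{S^2}_1=\operatorname{Loc}_{\widehat Z,1}^{-1}\circ T_{1,u}\circ\operatorname{Loc}_{Z,1},
\]
so the lower square commutes tautologically. The two curved outer triangles are exactly \eqref{eq:character-invariant-localization} and its analogue on the dual side. The statement that each Chern character is the composite of the two vertical arrows on its side is therefore already in hand, and only the upper square \eqref{eq:character-brane-t-duality} needs an argument.

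For the upper square, I would apply the isomorphism $\operatorname{Loc}_{\widehat Z,1}$ to both sides and show the images agree. On the left I get
\[
\operatorname{Loc}_{\widehat Z,1}T^{S^2}_1\operatorname{BBCh}^{\mathbb T}_H(a)=T_{1,u}\operatorname{Loc}_{Z,1}\operatorname{BBCh}^{\mathbb T}_H(a)=T_{1,u}\operatorname{Ch}_H(a).
\]
On the right, \eqref{eq:character-invariant-localization} on the dual side gives $\operatorname{Ch}_{\widehat H}(T_Ka)$. These agree by the classical compatibility \eqref{eq:character-periodic-target-t-duality}. Injectivity of $\operatorname{Loc}_{\widehat Z,1}$ (Theorem \ref{thm:invariant-level-m-localization}) then gives the upper square, in each parity $j$.

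The main obstacle is checking that the ingredients are stated in compatible forms, rather than any new computation. First, \eqref{eq:character-invariant-localization} must hold for the averaged representative. For this I would use that $\operatorname{Loc}_{Z,1}$ is induced by restriction to constant spheres (Theorem \ref{thm:brane-to-target}), and that restriction commutes with averaging over $\mathbb T$ because $\iota_M$ is equivariant. By Proposition \ref{prop:character-restriction}, the restriction is then the $\mathbb T$-average of $\operatorname{Ch}_H$, which represents $\operatorname{Ch}_H(a)$ in $\mathcal P_1(Z,H)$ by the filtration argument given above.

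Second, \eqref{eq:character-periodic-target-t-duality} must hold in both parities with the periodic normalization $c_u$ and with the factor $u$ in the odd suspension \eqref{eq:character-odd-suspension}. I would check this by observing that $c_u$ intertwines the unperiodic Chern character, with its usual degree normalization, with the periodic one. The statement then reduces to the unperiodic compatibility $\operatorname{Ch}_{\widehat H}T_K=T\operatorname{Ch}_H$ of \cite{BEM04a,BunkeSchick}. Its sign conventions match \eqref{eq:classical-Hori-map} by the stated choice of fibre orientation and gerbe identification, and the Todd-type factor is trivial because the fibre is a circle with trivial vertical tangent bundle.

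Finally, because every vertical map in the diagram is an isomorphism or factors through one, the isomorphism labels on $T^{S^2}_1$ and $T_{1,u}$ follow from Corollary \ref{cor:brane-exotic-T-duality} and Proposition \ref{prop:level-m-Hori} at $m=1$.
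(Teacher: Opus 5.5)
Your proposal is correct and follows the paper's proof essentially verbatim. The lower square is the definition of $T^{S^2}_1$, and the upper square follows by applying the isomorphism $\operatorname{Loc}_{\widehat Z,1}$ and chaining \eqref{eq:character-invariant-localization} with \eqref{eq:character-periodic-target-t-duality}. Your extra checks on the averaged representative and on the periodic normalization spell out what the paper states in the paragraphs just before the theorem; for odd $j$, cite Theorem \ref{thm:character-localization} rather than only Proposition \ref{prop:character-restriction}.
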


\begin{proof}
The lower square is the definition of $T^{S^2}_1$.  For
$a\in K^j(Z,H)$, equations
\eqref{eq:character-invariant-localization} and
\eqref{eq:character-periodic-target-t-duality} give
\begin{align*}
 &\operatorname{Loc}_{\widehat Z,1}
 T^{S^2}_1\operatorname{BBCh}^{\mathbb T}_H(a)\\
 =&T_{1,u}\operatorname{Ch}_H(a)\\
 =&\operatorname{Ch}_{\widehat H}(T_Ka)\\
 =&\operatorname{Loc}_{\widehat Z,1}
   \operatorname{BBCh}^{\widehat{\mathbb T}}_{\widehat H}(T_Ka).
\end{align*}
The map $\operatorname{Loc}_{\widehat Z,1}$ is an isomorphism, so the
upper square commutes as well.  Equation
\eqref{eq:character-invariant-localization} on the two targets identifies
the outer curved arrows with the vertical composites, completing the
enhanced diagram.
\end{proof}

The theorem concerns cohomology classes.  The characteristic forms and
their transgressions are explicit, whereas the Hori morphism on brane
spaces remains the cohomological morphism defined by localization.
At the level of twisted $K$-theory the statement concerns the original
T-dual pair.  Although \eqref{eq:character-levels} defines a character
for every $m$, the relation
\[
 \pi_*(m[H])=m\,c_1(\widehat Z)
\]
shows that the classical dual circle bundle for $(Z,mH)$ need not be
$\widehat Z$.  Thus the characters at all levels and the graded Hori
identities coexist with the level-one $K$-theory compatibility proved
above, without requiring a $K$-theory T-duality between the same total
spaces at every level.\\

\noindent{\bf Data Availability}\\

Data sharing is not applicable to this article as no new data were created or analyzed in this study.

\end{document}